\keywords{Automata, Nondeterminism}
\newcommand{\set}[1]{\{ #1 \}}
\begin{document}

\title[A Hierarchy of Nondeterminism]{A Hierarchy of Nondeterminism\rsuper*}
\titlecomment{\lsuper*A preliminary version of this paper appears in the Proceedings of the 46th International Symposium on Mathematical Foundations of Computer Science, 2021.}

\author[B.~Abu~Radi]{Bader Abu Radi\lmcsorcid{0000-0001-8138-9406}}
\author[O.~Kupferman]{Orna Kupferman\lmcsorcid{0000-0003-4699-6117}}
\author[O.~Leshkowitz]{Ofer Leshkowitz\lmcsorcid{0000-0001-9225-2325}}

\address{School of Engineering and Computer Science, Hebrew University, Jerusalem, Israel}

\newtheorem{remark}[thm]{Remark} 

\newcommand{\buchi}{B\"uchi\xspace}
\newcommand{\A}{{\mathcal A}}
\newcommand{\B}{{\mathcal B}}
\newcommand{\C}{{\mathcal C}}
\newcommand{\D}{{\mathcal D}}
\newcommand{\U}{{\mathcal U}}
\renewcommand{\S}{{\mathcal S}}
\newcommand{\T}{{\mathcal T}}
\newcommand{\M}{{\mathcal M}}
\newcommand{\F}{{\mathcal F}}
\newcommand{\K}{{\mathcal K}}
\newcommand{\Pcal}{{\mathcal P}}
\newcommand\floor[1]{\lfloor#1\rfloor}
\newcommand{\zug}[1]{\left\langle #1 \right\rangle}
\renewcommand{\set}[1]{\left\{ #1 \right\}}
\newcommand{\G}{{\mathcal G}}
\renewcommand{\P}{\mathbb{P}}
\renewcommand{\Pr}{\mathrm{Pr}}
\newcommand{\dist}{\mathrm{dist}}
\newcommand{\last}{\mathrm{last}}
\newcommand{\cs}{\mbox{\it co-safe}}
\renewcommand{\S}{{\mathcal S}}
\newcommand{\N}{{\mathcal N}}
\newcommand{\dxw}{DXW\xspace}
\newcommand{\dxws}{DXWs\xspace}
\newcommand{\nxw}{NXW\xspace}
\newcommand{\nxws}{NXWs\xspace}

\begin{abstract}
We study three levels in a hierarchy of nondeterminism: A nondeterministic automaton $\A$ is {\em determinizable by pruning\/} (DBP) if we can obtain a deterministic automaton equivalent to $\A$ by removing some of its transitions. Then, $\A$ is {\em history deterministic\/} (HD) if its nondeterministic choices can be resolved in a way that only depends on the past. Finally, $\A$ is {\em semantically deterministic\/} (SD) if different nondeterministic choices in $\A$ lead to equivalent states. Some applications of automata in formal methods require deterministic automata, yet in fact can use automata with some level of nondeterminism. For example, DBP automata are useful in the analysis of online algorithms, and HD automata are useful in synthesis and control. For automata on finite words, the three levels in the hierarchy coincide. We study the hierarchy for \buchi, co-\buchi, and weak automata on infinite words. We show that the hierarchy is strict, study the expressive power of the different levels in it, as well as the complexity of deciding the membership of a language in a given level. Finally, we describe a probability-based analysis of the hierarchy, which relates the level of nondeterminism with the probability that a random run on a word in the language is accepting. 
We relate the latter to nondeterministic automata that can be used when reasoning about probabilistic systems.
\end{abstract}
\maketitle
\section{Introduction}
\label{intro}
{\em Nondeterminism\/} is a fundamental notion in theoretical computer science. It allows a computing machine to examine several possible actions simultaneously. 
For automata on finite words, nondeterminism does not increase the expressive power, yet it leads to an exponential succinctness~\cite{RS59}. 

A prime application of automata theory is specification, verification, and synthesis of reactive systems~\cite{VW94,Kup18}.  Since we care about the on-going behavior of nonterminating  systems, the automata run on infinite words. Acceptance in such automata is determined according to the set of states that are visited infinitely often along the run. In {\em B\"uchi\/} automata~\cite{Buc62},
the acceptance condition is a subset $\alpha$ of states, and a run is accepting iff it visits $\alpha$ infinitely often. Dually, in {\em co-B\"uchi\/} automata, 
a run is accepting iff it visits $\alpha$ only finitely often. We also consider {\em weak\/} automata, which are a special case of both \buchi and co-\buchi automata in which no cycle contains both states in $\alpha$ and states not in $\alpha$. We use three-letter acronyms in $\{ \text{D, N} \} \times \{ \text{F, B, C, W}\} \times \{\text{W}\}$ to describe the different classes of automata. The first letter stands for the branching mode of the automaton (deterministic or nondeterministic); the second for the acceptance condition type (finite, \buchi, co-\buchi or weak); and the third indicates that we consider automata on words.

For automata on infinite words, nondeterminism may increase the expressive power and also leads to an exponential succinctness. For example, NBWs are strictly more expressive than DBWs~\cite{Lan69}, whereas NCWs are as expressive as DCWs~\cite{MH84}. In some applications of the automata-theoretic approach, such as model checking, algorithms can be based on nondeterministic automata, whereas in other applications, such as synthesis and control, they cannot. There, the advantages of nondeterminism are lost, and algorithms involve a complicated determinization construction~\cite{Saf88} or acrobatics for circumventing determinization~\cite{KV05c}.
Essentially, the inherent difficulty of using nondeterminism in synthesis and control lies in the fact that each guess of the nondeterministic automaton should accommodate all possible futures. 

A study of nondeterministic automata that can resolve their nondeterministic choices in a way that only depends on the past started in~\cite{KSV06}, where the setting is modeled by means of tree automata for derived languages. It then continued by means of  {\em history deterministic\/}  (HD) automata~\cite{HP06}.
\footnote{The notion used in \cite{HP06} is {\em good for games} (GFG) automata, as they address the difficulty of playing games on top of a nondeterministic automaton. As it turns out, the property of being good for games varies in different settings and HD is good for applications beyond games. Therefore, we use the term {\em history determinism}, introduced by Colcombet in the setting of quantitative automata with cost functions \cite{Col09}.} 
A nondeterministic automaton $\A$ over an alphabet $\Sigma$ is HD if there is a strategy $g$ that maps each finite word $u \in \Sigma^*$ to the transition to be taken after $u$ is read; and following $g$ results in accepting all the words in the language of $\A$. Note that a state $q$ of $\A$ may be reachable via different words, and $g$ may suggest different transitions from $q$ after different words are read. Still, $g$ depends only on the past, namely on the word read so far. Obviously, there exist HD automata: deterministic ones, or nondeterministic ones that are {\em determinizable by pruning\/} (DBP); that is, ones that embody an equivalent deterministic automaton. In fact, the HD automata constructed in~\cite{HP06} are DBP.
\footnote{As explained in~\cite{HP06}, the fact that the HD automata constructed there are DBP does not contradict their usefulness in practice, as their transition relation is simpler than the one of the embodied deterministic automaton and it can be defined symbolically.} 
Beyond the theoretical interest in DBP automata, they are used for modelling online algorithms: by relating the ``unbounded look ahead''
of optimal offline algorithms with nondeterminism, and relating the ``no look ahead'' of online algorithms with
determinism, it is possible to reduce questions about the competitive ratio of online algorithms and the memory
they require to questions about DBPness~\cite{AKL10,AK15}. 
 
For automata on finite words, HD-NFWs are always DBP~\cite{KSV06,Mor03}. For automata on infinite words, HD-NBWs and HD-NCWs are as expressive as DBWs and DCWs, respectively \cite{KSV06,NW98}, but they need not be DBP~\cite{BKKS13}. 
Moreover, the best known determinization construction for HD-NBWs is quadratic, and determinization of HD-NCWs has a tight exponential blow-up~\cite{KS15}. Thus, HD automata on infinite words are (possibly even exponentially) more succinct than deterministic ones.

Further research studies characterization, typeness, complementation, and further constructions and decision procedures for HD automata~\cite{KS15,BKS17,BK18}, as well as an extension of the HD setting to pushdown $\omega$-automata~\cite{LZ20} and to alternating automata~\cite{BL19,BKLS20}.

A nondeterministic automaton is {\em semantically deterministic} (SD, for short) if its nondeterministic choices lead to states with the same language. Thus, for every state $q$ of the automaton and letter $\sigma \in \Sigma$, all the $\sigma$-successors of $q$ have the same language. 
Beyond the fact that semantically determinism is a natural relaxation of determinism, and thus deserves consideration, SD automata naturally arise in the setting of HD automata. Indeed, if different $\sigma$-successors of a state in an HD automaton have different languages, then we can prune the transitions to states whose language is contained in the language of another $\sigma$-successor. 
Moreover, since containment for HD automata can be checked in polynomial time, such a pruning can be done in polynomial time \cite{HKR02,KS15}. Accordingly (see more in Section~\ref{awsn}), we can assume that all HD automata are SD~\cite{KS15}.

Hence, we obtain the following hierarchy, from deterministic to nondeterministic automata, where each level is a special case of the levels to its right. 
\begin{figure}[ht]
	\centering
	\includegraphics[width=\linewidth]{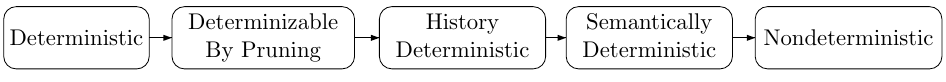}
\end{figure}

For automata on finite words, as NFWs are as expressive as DFWs, all levels of the hierarchy coincide in their expressive power. In fact, the three internal levels coincide already in the syntactic sense: every SD-NFW is DBP.  Also, given an NFW, deciding whether it is SD, HD or DBP, can each be done in polynomial time~\cite{AKL10}. 

For B\"uchi and co-B\"uchi automata, the picture is less clear, and is the subject of our research. Before we describe our results, let us mention that an orthogonal level of nondeterminism is that of {\em unambiguous\/} automata, namely automata that have a single accepting run on each word in their languages. An unambiguous NFW for a non-empty language is SD iff it is deterministic, and a DBP-NFW need not be unambiguous. It is known, however, that an HD unambiguous NCW, or NBW, is DBP~\cite{BKS17}.

We study the following aspects and questions about the hierarchy. 

\paragraph{Strictness} For each nondeterminism level we study for which acceptance condition it is a strict super class of its preceding level. Recall that not all HD-NBWs and HD-NCWs are DBP~\cite{BKKS13}, and examples for this include also SD automata. On the other hand, all HD-NWWs (in fact, all HD-\nxws whose language can be recognized by a DWW) are DBP~\cite{BKS17}. We show that SD-\nxws need not be HD for all ${\rm X} \in \{{\rm B,C,W}\}$. Of special interest is our result on weak automata, whose properties typically agree with these of automata on finite words. Here, while all SD-NFWs are HD, this is not the case for SD-NWWs. 

\paragraph{Expressive power} It is known that for all ${\rm X} \in \{{\rm B,C,W}\}$, HD-\nxws are as expressive as \dxws. We extend this result to semantic determinism and show that while SD-\nxws need not be HD, they are not more expressive, thus SD-\nxws are as expressive as \dxws. Since an SD-\nxw need not be HD, this extends the known frontier of nondeterministic B\"uchi and weak automata that are not more expressive than their deterministic counterpart. 

\paragraph{Deciding the nondeterminism level of an automaton} It is already known that deciding the HDness of a given \nxw, for ${\rm X} \in \{{\rm B,C,W}\}$,  can be done in polynomial time~\cite{AKL10, KS15, BK18}. On the other hand, deciding whether a given NCW is DBP is NP-complete~\cite{KM18}. We complete the picture in three directions. First, we show that NP-completeness of deciding DBPness applies also to NBWs. Second, we show that in both cases, hardness applies even when the given automaton is HD. Thus, while it took the community some time to get convinced that not all HD automata are DBP, in fact it is NP-complete to decide whether a given HD-NBW or HD-NCW is DBP.  Third, we study also the problem of deciding whether a given \nxw is SD, and show that it is PSPACE-complete. Note that our results imply that the nondeterminism hierarchy is not monotone with respect to complexity: deciding DBPness, which is closest to determinism, is NP-complete, then HDness can be checked in polynomial time, and finally SDness is PSPACE-complete. Also, as PSPACE-hardness of checking SDness applies already to NWWs, we get another, even more surprising, difference between weak automata and automata on finite words. Indeed, for NFWs, all the three levels of nondeterminism coincide and SDness can be checked in polynomial time.  

\paragraph{Approximated determinization by pruning}
We say that a nondeterministic automaton $\A$ is {\em almost-DBP} if $\A$ embodies a deterministic automaton $\A'$ such that the probability of a random word to be in $L(\A)\setminus L(\A')$ is $0$. Clearly, if $\A$ is DBP, then it is almost-DBP. 
Recall that DBPness is associated with the ability of an online algorithm to perform as good as an offline one. A typical analysis of the performance of an online algorithm compares its performance with that of an offline algorithm. 
The notion of almost-DBPness captures cases where the online algorithm performs, with probability $1$, as good as the offline algorithm. From the point of view of formal language theory, almost-DBPness captures the ability to approximate from below the language of automata that need not be DBP.  
We study the almost-DBPness of HD and SD automata. Our results imply that the nondeterminism hierarchy ``almost collapses":
We show that for \buchi (and hence also weak) automata, semantic determinism implies almost-DBPness, thus every SD-NBW is almost-DBP. Then, for co-\buchi automata, semantic determinism is not enough, and we need HDness. Thus, there is an SD-NCW that is not almost-DBP, yet all HD-NCWs are almost-DBP. 

\paragraph{Reasoning about Markov decision processes}
Another application in which nondeterminism is problematic is reasoning about probabilistic systems. 
Technically, such a reasoning involves the product of a Markov decision processes that models the system with an automaton for the desired behavior. When the automaton is nondeterministic, the product need not reflect the probability in which the system satisfies the behavior.  A nondeterministic automaton is {\em good-for-MDPs} (GFM) if its product  with Markov decision processes maintains the probability of acceptance. Thus, GFM automata can replace deterministic automata when reasoning about probabilistic systems \cite{HPSSTWW20,STZ22}. 
We study the relation between semantic determinism and GFMness. We show that all GFM automata are almost-DBP. On the other hand, semantic determinism implies GFMness only in B\"uchi automata (that is, all SD-NBWs are GFM), and a GFM automaton need not be SD, even for weak automata. 

\section{Preliminaries}
\label{prelim}

\subsection{Automata}
For a finite nonempty alphabet $\Sigma$, an infinite {\em word\/} $w = \sigma_1 \cdot \sigma_2 \cdots \in \Sigma^\omega$ is an infinite sequence of letters from $\Sigma$. 
A {\em language\/} $L\subseteq \Sigma^\omega$ is a set of infinite words. For $i,j\geq 0$, we use $w[1, i]$ to denote the (possibly empty) prefix $\sigma_1\cdot \sigma_2 \cdots  \sigma_i$ of $w$, use $w[i+1, j]$ to denote the (possibly empty) infix $\sigma_{i+1}\cdot \sigma_{i+2} \cdots \sigma_j$ of $w$, and use $w[i+1, \infty]$ to denote its suffix $\sigma_{i+1} \cdot  \sigma_{i+2} \cdots$. We sometimes refer also to languages of finite words, namely subsets of $\Sigma^*$. We denote the empty word by $\epsilon$.

A \emph{nondeterministic automaton} over infinite words is $\A = \langle \Sigma, Q, q_0, \delta, \alpha  \rangle$, where $\Sigma$ is an alphabet, $Q$ is a finite set of \emph{states}, $q_0\in Q$ is an \emph{initial state}, $\delta: Q\times \Sigma \to 2^Q \setminus \emptyset$ is a \emph{transition function}, and $\alpha$ is an \emph{acceptance condition}, to be defined below. 
For states $q$ and $s$ and a letter $\sigma \in \Sigma$, we say that $s$ is a $\sigma$-successor of $q$ if $s \in \delta(q,\sigma)$.

Note that $\A$ is \emph{total}, in the sense that it has at least one successor for each state and letter.
If $|\delta(q, \sigma)| = 1$ for every state $q\in Q$ and letter $\sigma \in \Sigma$, then $\A$ is \emph{deterministic}.

A \emph{run}  of $\A$ on $w = \sigma_1 \cdot \sigma_2 \cdots \in \Sigma^\omega$ is an infinite sequence of states $r = r_0,r_1,r_2,\ldots \in Q^\omega$, such that $r_0 = q_0$, and for all $i \geq 0$, we have that $r_{i+1} \in \delta(r_i, \sigma_{i+1})$. 
We extend $\delta$ to sets of states and finite words in the expected way. Thus, $\delta(S, u)$ is the set of states that $\A$ may reach when it reads the word $u \in \Sigma^*$ from some state in $S \in 2^Q$. Formally, $\delta: 2^Q\times \Sigma^* \to 2^Q$ is such that for every $S \in 2^Q$, finite word $u\in \Sigma^*$, and letter $\sigma\in \Sigma$, we have that $\delta(S, \epsilon) = S$, $\delta(S, \sigma) = \bigcup_{s\in S}\delta(s, \sigma)$, and $\delta(S, u \cdot \sigma) = \delta(\delta(S, u), \sigma)$. 
The transition function $\delta$ induces a transition relation $\Delta \subseteq Q\times \Sigma \times Q$, where for every two states $q,s\in Q$ and letter $\sigma\in \Sigma$, we have that $\langle q, \sigma, s \rangle \in \Delta$ iff $s\in \delta(q, \sigma)$. 
For a state $q\in Q$ of $\A$, we define $\A^q$ to be the automaton obtained from $\A$ by setting the initial state to be $q$. Thus, $\A^q = \langle \Sigma, Q, q, \delta, \alpha \rangle$. Two states $q,s\in Q$ are \emph{equivalent}, denoted $q \sim_{\A} s$, if $L(\A^q) = L(\A^s)$. 

The acceptance condition $\alpha$ determines which runs are ``good''. We consider here the \emph{\buchi} and \emph{co-\buchi} acceptance conditions, where $\alpha \subseteq Q$ is a subset of states. We use the terms {\em $\alpha$-states\/} and  {\em $\bar{\alpha}$-states\/} to refer to states in $\alpha$ and in $Q \setminus \alpha$, respectively. For a run $r$, let ${\it inf}(r)\subseteq Q$ be the set of states that $r$ traverses infinitely often. Thus, 
${\it inf}(r) = \{  q \in Q: q = r_i \text{ for infinitely many $i$'s}   \}$. 
A run $r$ of a \buchi automaton is \emph{accepting} iff it visits states in $\alpha$ infinitely often, thus ${\it inf}(r)\cap \alpha \neq \emptyset$. Dually, a run $r$ of a co-\buchi automaton is accepting iff it visits states in $\alpha$ only finitely often, thus ${\it inf}(r)\cap \alpha = \emptyset$.
A run that is not accepting is \emph{rejecting}.  Note that as $\A$ is nondeterministic, it may have several runs on a word $w$. The word $w$ is accepted by $\A$ if there is an accepting run of $\A$ on $w$. The language of $\A$, denoted $L(\A)$, is the set of words that $\A$ accepts. Two automata are \emph{equivalent} if their languages are equivalent.

Consider a directed graph $G = \langle V, E\rangle$. A \emph{strongly connected set\/} in $G$ (SCS, for short) is a set $C\subseteq V$ such that for every two vertices $v, v'\in C$, there is a path from $v$ to $v'$. A SCS is \emph{maximal} if it is maximal w.r.t containment, that is, for every non-empty set $C'\subseteq V\setminus C$, it holds that $C\cup C'$ is not a SCS. The \emph{maximal strongly connected sets} are also termed \emph{strongly connected components} (SCCs, for short). The \emph{SCC graph of $G$} is the graph defined over the SCCs of $G$, where there is an edge from an SCC $C$ to another SCC $C'$ iff there are two vertices $v\in C$ and $v'\in C'$ with $\langle v, v'\rangle\in E$. A SCC is \emph{ergodic} iff it has no outgoing edges in the SCC graph. 

An automaton $\A = \langle \Sigma, Q, q_0, \delta, \alpha\rangle$ induces a directed graph $G_{\A} = \langle Q, E \rangle$, where $\langle q, q'\rangle\in E$ iff there is a letter $\sigma \in \Sigma$ such that $\langle q, \sigma, q'\rangle \in \Delta$. The SCSs and SCCs of $\A$ are those of $G_{\A}$. 

The \emph{$\alpha$-free SCCs} of $\A$ are the SCCs of $\A$ that do not contain states from $\alpha$.

A B\"uchi automaton $\A$ is {\em weak}~\cite{MSS88} if for each SCC $C$ in $G_\A$, either $C\subseteq \alpha$ (in which case we say that $C$ is an accepting SCC) or $C \cap \alpha = \emptyset$ (in which case we say that $C$ is a rejecting SCC). Note that a weak automaton can be viewed as both a B\"uchi and a co-B\"uchi automaton, as a run of $\A$ visits $\alpha$ infinitely often, iff it gets trapped in an accepting SCC, iff
it visits states in $Q \setminus \alpha$ only finitely often.

We denote the different classes of word automata by three-letter acronyms in $\{ \text{D, N} \} \times \{ \text{F, B, C, W}\} \times \{\text{W}\}$. The first letter stands for the branching mode of the automaton (deterministic or nondeterministic), and  the second for the acceptance condition type (finite, \buchi, co-\buchi or weak). For example, NBWs are nondeterministic \buchi word automata.

\subsection{Probability}
Consider the probability space $(\Sigma^\omega,\P)$ where each word $w=\sigma_1\cdot \sigma_2 \cdot \sigma_3\cdots \in \Sigma^\omega$ is drawn by taking the $\sigma_i$'s to be independent and identically distributed $\mathrm{Unif}(\Sigma)$. Thus. for all positions $i \geq 1$ and letters $\sigma \in \Sigma$, the probability that  $\sigma_i$ is $\sigma$ is $\frac{1}{|\Sigma|}$. Let $\A=\zug{\Sigma,Q,q_0,\delta,\alpha}$ be a deterministic automaton, and let $G_\A=\zug{Q,E}$ be its  induced directed graph. A \emph{random walk on $\A$}, is a random walk on $G_\A$ with the probability matrix $P(q,p)=\frac{|\{\sigma\in \Sigma : \zug{q,\sigma,p}\in \Delta\} |}{|\Sigma|}$. It is not hard to see that $\P(L(\A))$ is precisely the probability that a random walk on $\A$ is an accepting run. Note that with probability $1$, a random walk on $\A$ reaches an ergodic SCC $C\subseteq Q$, where it visits all states infinitely often. It follows that $\P(L(\A))$ equals the probability that a random walk on $\A$ reaches an ergodic accepting SCC.

\subsection{Automata with Some Nondeterminism}
\label{awsn}

Consider a nondeterministic automaton $\A = \langle \Sigma, Q, q_0, \delta, \alpha  \rangle$.
The automaton $\A$ is {\em determinizable by pruning} (DBP) if we can remove some of the transitions of $\A$ and get a deterministic automaton $\A'$ that recognizes $L(\A)$.  We then say that $\A'$ is a {\em deterministic pruning\/} of $\A$. 

The automaton $\A$ is \emph{history deterministic} (\emph{HD}, for short) if its nondeterminism can be resolved based on the past, thus on the prefix of the input word read so far. Formally, $\A$ is \emph{HD} if there exists a {\em strategy\/} $f:\Sigma^* \to Q$ such that the following hold: 
\begin{enumerate}
	\item 
	The strategy $f$ is consistent with the transition function. That is, $f(\epsilon)=q_0$, and for every finite word $u \in \Sigma^*$ and letter $\sigma \in \Sigma$, we have that $\zug{f(u),\sigma,f(u \cdot \sigma)} \in \Delta$. 
	\item
	Following $f$ causes $\A$ to accept all the words in its language. That is, for every infinite word $w = \sigma_1 \cdot \sigma_2 \cdots \in \Sigma^\omega$, if $w \in L(\A)$, then the run $f(w[1, 0]), f(w[1, 1]), f(w[1, 2]), \ldots$, which we denote by $f(w)$, is 
	an accepting run of $\A$ on $w$. 
\end{enumerate}
We say that the strategy $f$ \emph{witnesses} $\A$'s HDness. 
Note that a strategy $f$ need not use all the states or transitions of $\A$. 
In particular, a state $q$ of $\A$ may be such that $\A^q$ is not HD. We say that a state $q$ of $\A$ is \emph{HD} if $\A^q$ is HD.

The automaton $\A$ is \emph{semantically deterministic\/} (SD, for short) if different nondeterministic choices in $\A$ lead to equivalent states. Thus, for every state $q\in Q$ and letter $\sigma \in \Sigma$, all the $\sigma$-successors of $q$ are equivalent: for every two states $s, s'\in \delta(q,\sigma)$, we have that $s \sim_{\A} s'$.  

Note that every deterministic automaton $\A$ is DBP (with $\A$ being the deterministic pruning of itself).  Also, every nondeterministic DBP automaton is HD. Indeed, the deterministic pruning of $\A$ induces a witness strategy. Finally, if a state $q$ of an HD automaton has a $\sigma$-successor $s$ that does not accept all the words $w$ such that $\sigma \cdot w$ is in $L(\A^q)$, then the $\sigma$-transition from $q$ to $s$ can be removed. Moreover, by ~\cite{HKR02,KS15, BK18}, the detection of such transitions can be done in polynomial time. Hence, we can assume that every HD automaton, and hence also every DBP automaton, is SD.\footnote{Using the terminology of \cite{KS15}, we can assume that the {\em residual languages\/} of all the $\sigma$-successors of each state in an HD automaton are equivalent.}

For ${\rm X} \in \{{\rm B,C,W}\}$, we use SD-NXW, HD-NXW, and DBP-NXW to refer to the different classes of NXWs. For example, SD-NBWs are nondeterministic B\"uchi automata that are semantically-deterministic.

\section{The Syntactic and Semantic Hierarchies}

In this section we study syntactic and semantic hierarchies induced by the different levels of nondeterminism. We start with the syntactic hierarchy. For two classes $\C_1$ and $\C_2$ of automata, we use $\C_1 \preceq \C_2$ to indicate that every automaton in $\C_1$ is also in $\C_2$. Accordingly, $\C_1 \prec \C_2$ if $\C_1 \preceq \C_2$ yet there are automata in $\C_2$ that are not in $\C_1$. For example, clearly, DFW $\prec$ NFW. We first show that the nondeterminism hierarchy is strict, except for all HD-NWWs being DBP. The latter is not surprising, as all HD-NFWs are DBP, and weak automata share many properties with automata on finite words. On the other hand, unlike the case of finite words, we show that not all SD-NWWs are HD. In fact the result holds already for NWWs that accept co-safety languages, namely all whose states except for an accepting sink are rejecting. 

\begin{thm}{\bf [Syntactic Hierarchy]}
	For ${\rm X} \in \{{\rm B,C,W}\}$, we have that {\rm DXW} $\prec$ {\rm DBP-NXW} $\preceq$ {\rm HD-NXW} $\prec$ {\rm SD-NXW} $\prec$ {\rm NXW}. For ${\rm X} \in \{{\rm B,C}\}$, the second inequality is strict.
\end{thm}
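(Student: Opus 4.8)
The statement is really a conjunction of several claims; the plan is to treat each inequality separately, reusing the easy implications already recorded in Section~\ref{awsn} (every deterministic automaton is DBP, every DBP automaton is HD, every HD automaton may be assumed SD, and trivially SD-NXW $\preceq$ NXW). So the containments DBP-NXW $\preceq$ HD-NXW $\preceq$ SD-NXW $\preceq$ NXW are for free, and only the strictness assertions require work: (a) DXW $\prec$ DBP-NXW for all ${\rm X}\in\{{\rm B,C,W}\}$; (b) HD-NXW $\prec$ SD-NXW for all ${\rm X}\in\{{\rm B,C,W}\}$ (this is the place where the weak case is the novelty, since on finite words every SD-NFW is already HD); and (c) DBP-NXW $\prec$ HD-NXW for ${\rm X}\in\{{\rm B,C}\}$, which I would simply cite from~\cite{BKKS13}, noting that the separating examples there can be taken to be SD.

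\textbf{Strictness of DXW $\prec$ DBP-NXW.} Here I would exhibit, for each of ${\rm X}={\rm B,C,W}$, a nondeterministic automaton that is determinizable by pruning but is not itself deterministic. This is the easiest part: take any deterministic automaton $\D$ and add a redundant extra $\sigma$-transition from some state (e.g.\ a second copy of an existing successor, or a transition into an equivalent state), so that the result $\A$ has $|\delta(q,\sigma)|>1$ for some $q,\sigma$, yet pruning back to $\D$ recovers the language. The only thing to check is that such an $\A$ can be kept within the weak (resp.\ \buchi, co-\buchi) class, which is immediate since adding a transition between states of the same SCC, or into an already-present SCC, does not disturb weakness.

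\textbf{Strictness of HD-NXW $\prec$ SD-NXW.} This is the heart of the theorem and where I expect the main obstacle. I would construct an SD-NWW over a small alphabet (the excerpt promises it can even be done with a co-safety language, where all non-sink states are rejecting and there is a single accepting sink $q_{\mathrm{acc}}$) that is not HD; since a weak automaton is a special case of both \buchi and co-\buchi automata, the same example settles the cases ${\rm X}={\rm B}$ and ${\rm X}={\rm C}$ as well. The design idea: arrange two (or more) rejecting states $p_1,p_2$ that are language-equivalent — each $\A^{p_i}$ recognizing the same co-safety language — so that the SD condition is satisfied at the branching point, but such that no single past-dependent choice between $p_1$ and $p_2$ can be correct for all futures: one continuation reaches $q_{\mathrm{acc}}$ only from $p_1$ and another only from $p_2$, while from the branching word alone one cannot tell which. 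To make the two states genuinely equivalent despite this, the ``wrong'' choice must still be able to reach $q_{\mathrm{acc}}$, just via a different route that the adversary will block. The verification splits into three obligations: (i) checking $p_1\sim_\A p_2$ (and more generally that every pair of $\sigma$-successors of every state is equivalent), which for a co-safety automaton reduces to comparing the finite sets of ``accepting prefixes''; (ii) checking $L(\A)$ is what we intend; and (iii) showing no HD strategy $f$ exists — here I would assume a witness $f$, feed it the branching prefix $u$, let it commit to $f(u)\in\{p_1,p_2\}$, and then present the continuation on which that commitment is fatal, contradicting clause~2 of the HD definition. The delicate balancing act — equivalence of the successors versus non-existence of a uniform choice — is the real obstacle, and I would expect the smallest working example to need a carefully chosen alphabet letter that ``reveals'' retroactively which branch was needed.

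\textbf{Strictness of DBP-NXW $\prec$ HD-NXW for ${\rm X}\in\{{\rm B,C}\}$.} For this last clause I would not build anything new: the existence of HD-NBWs and HD-NCWs that are not DBP is exactly~\cite{BKKS13}, and the excerpt already notes that the known examples include SD automata, so in particular they are HD but not DBP. One should, however, remark why the analogous statement fails for ${\rm X}={\rm W}$: by~\cite{BKS17} every HD-NWW is DBP, which is precisely why the theorem states the second inequality is strict only for \buchi and co-\buchi. I would close by reassembling the chain: DXW $\subsetneq$ DBP-NXW $\subseteq$ HD-NXW $\subsetneq$ SD-NXW $\subsetneq$ NXW for all three ${\rm X}$, with the middle inclusion proper exactly for ${\rm X}\in\{{\rm B,C}\}$.
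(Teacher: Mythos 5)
Your decomposition and the arguments you supply match the paper's proof almost exactly: the containments hold by definition; DXW $\prec$ DBP-NXW is witnessed by duplicating a transition in a nonempty deterministic automaton; DBP-NXW $\prec$ HD-NXW for \buchi and co-\buchi is quoted from \cite{BKKS13}, with \cite{BKS17} explaining why it fails for weak automata; and your sketch for separating HD from SD is precisely the example the paper uses. Concretely, that example is a weak automaton over $\{a,b\}$ with an accepting sink $q_{acc}$, an initial state $q_0$, and two rejecting states $q_a,q_b$ reached nondeterministically from $q_0$, arranged so that from $q_a$ one letter leads to $q_{acc}$ and the other returns to $q_0$, and symmetrically for $q_b$. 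Every state recognizes $\{a,b\}^\omega$, so the automaton is SD, yet an adversary can always supply the letter that sends the run back to $q_0$, so no strategy reaches $q_{acc}$ on that word and the automaton is not HD. So the ``delicate balancing act'' you anticipate is indeed realizable, and realizable with a co-safety language as you predict.

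The one genuine gap is the last inclusion. The theorem asserts SD-NXW $\prec$ NXW, i.e.\ that for each ${\rm X}$ there is an \nxw that is not SD, but you list only three strictness obligations and then write the full strict chain at the end without ever arguing this fourth one. It is easy but must be said: take any SD-\nxw $\A$ and a letter $\sigma$ such that $\A$ accepts some word beginning with $\sigma$, and add a $\sigma$-transition from the initial state to a fresh rejecting sink. The initial state now has two $\sigma$-successors with different languages, so the result is an \nxw (still weak, \buchi, or co-\buchi, respectively) that is not SD. With that observation added, your proof is complete and coincides with the paper's.
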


\begin{proof}
	By definition, each class is a special case of the one to its right. We prove strictness. 
	It is easy to see that the first and last strict inequalities hold. For the first strict inequality, for all ${\rm X} \in \{{\rm B,C,W}\}$, consider a nonempty \dxw $\A$, and obtain an \nxw $\B$ from $\A$ by duplicating some state and the transitions to it. Then, $\B$ is a DBP-\nxw that is not a \dxw. For the last inequality, consider an SD-\nxw $\A$, and let $\sigma$ be a letter such that $\A$ accepts some word that starts with $\sigma$. Consider the \nxw $\C$ obtained from $\A$ by adding a $\sigma$-transition from $\A$'s initial state to a new rejecting sink. Then, as at least one $\sigma$-successor of the initial state of $\A$ is not empty, we have that $\C$ is an \nxw that is not an SD-\nxw.
	
	The relation between DBPness and HDness has already been studied. It is shown in~\cite{BKKS13} that HD-\nxw need not be DBP for ${\rm X} \in \{{\rm B,C}\}$, and shown in \cite{BKS17}
	that HD-NWW are DBP.

	It is left to relate HDness and SDness. Consider the NWW ${\mathcal W}$ in Figure~\ref{fig weak}. It is not hard to check that ${\mathcal W}$ is weak, and note that it is SD, as all its states recognize the language $\{a,b\}^\omega$.  Yet $\mathcal{W}$ is not HD, as every strategy has a word with which it does not reach $q_{acc}$ -- a word that forces each visit in $q_a$ and $q_b$ to be followed by a visit in $q_0$.    
	
	\begin{figure}[ht]
		\centering
		\includegraphics[width=0.5\textwidth]{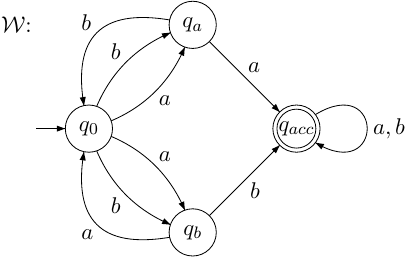}
		\caption{An SD-NWW that is not HD.} 
		\label{fig weak}
	\end{figure}
	
	Hence, HD-NWW $\prec$ SD-NWW. As weak automata are a special case of \buchi and co-\buchi, strictness for them follows. 
\end{proof}

We continue to the semantic hierarchy, where we study the expressive power of the different classes. Now,  for two classes $\C_1$ and $\C_2$ of automata, we say that $\C_1$ is less expressive than $\C_2$, denoted $\C_1 \leq \C_2$, if every automaton in $\C_1$ has an equivalent automaton in $\C_2$. Accordingly, $\C_1=\C_2$ if $\C_1\leq \C_2$ and $\C_2\leq \C_1$, and $\C_1<\C_2$ if $\C_1\leq \C_2$ yet $\C_2\neq\C_1$.
Since NCW=DCW, we expect the hierarchy to be strict only in the cases of \buchi and weak automata. As we now show, however, semantically deterministic automata are not more expressive than deterministic ones also in the case of B\"uchi and weak automata. 

\begin{thm}{\bf [Semantic Hierarchy]} 
\label{sem hie}
	For $X \in \{B,W\}$, we have that {\rm DXW} $=$ {\rm DBP-NXW} $=$ {\rm HD-NXW} $=$ {\rm SD-NXW} $<$ {\rm NXW}.
\end{thm}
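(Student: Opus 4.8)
The plan is to prove the chain of equalities by establishing the only nontrivial inclusion, namely $\mathrm{SD\text{-}NXW} \leq \mathrm{DXW}$ for $X \in \{B, W\}$; all the other relations are immediate. Indeed, $\mathrm{DXW} = \mathrm{DBP\text{-}NXW}$ as a deterministic automaton is a DBP automaton and vice versa (after pruning we get an equivalent deterministic automaton of the same type), $\mathrm{DBP\text{-}NXW} \leq \mathrm{HD\text{-}NXW} \leq \mathrm{SD\text{-}NXW}$ follows from the syntactic containments proved in the Syntactic Hierarchy theorem, and $\mathrm{SD\text{-}NXW} < \mathrm{NXW}$ is witnessed, e.g., for $X = B$ by the fact that $\mathrm{NBW}$ is strictly more expressive than $\mathrm{DBW}$~\cite{Lan69}, while once we show $\mathrm{SD\text{-}NBW} \leq \mathrm{DBW}$ we get $\mathrm{SD\text{-}NBW} < \mathrm{NBW}$; similarly for $X = W$ using a language separating $\mathrm{NWW}$ from $\mathrm{DWW}$ (e.g.\ a language that is $\mathrm{NWW}$-recognizable but not in the Boolean closure needed for $\mathrm{DWW}$, or more simply a non-weak $\omega$-regular language that happens to be $\mathrm{NWW}$-recognizable is not the right example — instead note $\mathrm{NWW}$ can recognize languages not recognizable by any $\mathrm{DWW}$, as $\mathrm{DWW}$ captures exactly $\Sigma_2 \cap \Pi_2$ of the Borel hierarchy while $\mathrm{NWW}$ is more expressive). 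So the whole content is the inclusion $\mathrm{SD\text{-}NXW} \leq \mathrm{DXW}$.

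For this inclusion, I would argue as follows. Let $\A$ be an $\mathrm{SD\text{-}NXW}$. First handle $X = W$: I claim $L(\A)$ is recognizable by a $\mathrm{DWW}$. The key observation is that in a semantically deterministic automaton, equivalence of the reached states does not depend on which nondeterministic choices were made — after reading a finite word $u$ from $q_0$, \emph{all} states in $\delta(q_0, u)$ are pairwise equivalent (by induction on $|u|$, using the definition of SD and the fact that equivalent states have equivalent $\sigma$-successor languages). Hence the map $u \mapsto L(\A^{q})$ for any $q \in \delta(q_0,u)$ is well-defined; call it the residual of $u$. A weak automaton has finitely many residuals, so one can build the deterministic automaton whose states are the residuals, with the transition on $\sigma$ taking a residual $R$ to the (unique) residual of its $\sigma$-successors. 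The remaining task is to equip this deterministic automaton with an acceptance condition recognizing $L(\A)$; here I would use that for $\mathrm{NWW}$ the language is both in $\mathrm{NBW} \cap \mathrm{coNBW}$ at the level of the Borel hierarchy, so it suffices to argue the residual automaton can be made weak — concretely, a residual $R$ should be accepting iff the corresponding SCC in the residual automaton is "good", which can be read off from whether $\A$ has an accepting run trapped in the matching SCS of $\A$; semantic determinism ensures this is consistent. For $X = B$: once we know $L(\A)$ is $\mathrm{DBW}$-recognizable we are done, so the real point is to show an $\mathrm{SD\text{-}NBW}$ recognizes a $\mathrm{DBW}$ language. I would show $L(\A)$ equals the limit language of its residuals: $w \in L(\A)$ iff the residual-automaton run on $w$ visits infinitely often a residual from which $\A$ accepts "immediately" in an SD-safe sense; the argument mirrors the known proof that $\mathrm{HD\text{-}NBW} = \mathrm{DBW}$ but replaces the HD strategy by the observation that in an SD automaton every run is "as good as" any other on the same prefix, so an arbitrary run (hence a deterministic pruning of the residual automaton) already accepts every word in $L(\A)$.

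The main obstacle I anticipate is the following subtlety: semantic determinism guarantees that the reachable states after a fixed prefix all have the same \emph{language}, but it does \emph{not} by itself guarantee that a single deterministic choice of successors yields an automaton accepting all of $L(\A)$ — this is exactly why $\mathrm{SD} \not\Rightarrow \mathrm{HD}$ in general, as the Syntactic Hierarchy theorem shows. So I cannot simply prune $\A$. The resolution is to not prune $\A$ at all but to build a \emph{fresh} deterministic automaton over the residual languages (a quotient-like construction), and the delicate step is defining its acceptance condition correctly: I must show that membership of $w$ in $L(\A)$ is determined by the sequence of residuals visited, which requires proving that if two infinite words induce the same residual sequence then they are either both in or both out of $L(\A)$. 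This is where I expect to invest the most care — establishing that the residual sequence carries enough information, using the finiteness of residuals (so the sequence is eventually periodic-like in structure) together with the SD property to recover an accepting run of $\A$ whenever one exists on some word with that residual sequence. Once that lemma is in hand, reading off the correct $X$-acceptance condition on the residual automaton is routine.
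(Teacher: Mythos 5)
Your treatment of the easy inclusions and of the strictness $\mathrm{SD}$-$\mathrm{NXW} < \mathrm{NXW}$ is fine and matches the paper. The problem is the core of your argument: the lemma you yourself single out as the delicate step --- ``if two infinite words induce the same residual sequence then they are either both in or both out of $L(\A)$'' --- is false for B\"uchi automata, and without it the residual (quotient) construction collapses. Concretely, let $L \subseteq \{a,b\}^\omega$ be the language of words with infinitely many $a$'s, recognized by the obvious two-state DBW, which is deterministic and hence an SD-NBW. Every finite word $u$ satisfies $u^{-1}L = L$, so your residual automaton has a single state and every infinite word induces the same constant residual sequence; yet $a^\omega \in L$ and $b^\omega \notin L$. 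No acceptance condition on a one-state deterministic automaton recognizes $L$, so the construction fails already on a deterministic input. The underlying issue is that B\"uchi acceptance is a property of the run, not of the sequence of residual languages, and quotienting by language equivalence destroys exactly the information (the $\alpha$-visits) that the acceptance condition needs. Your fallback remark that in an SD automaton ``an arbitrary run (hence a deterministic pruning \dots) already accepts every word in $L(\A)$'' cannot be right either: it would give $\mathrm{SD} \Rightarrow \mathrm{DBP}$, contradicting the syntactic-hierarchy theorem you invoke earlier and your own caveat two sentences before.

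The paper avoids this by not quotienting: it uses a modified subset construction in which a state of the determinization is an actual set $S \subseteq Q$ of states of $\A$, with the twist that whenever $\delta(S,\sigma)$ meets $\alpha$ only the $\alpha$-states are kept, and $\alpha' = \{S : S \subseteq \alpha\}$. Semantic determinism is used only to guarantee that every reachable $S$ consists of pairwise $\sim_\A$-equivalent states; the inclusion $L(\A') \subseteq L(\A)$ is then a K\"onig's-lemma extraction of an accepting run of $\A$ from the infinitely many $\alpha$-only sets, and the converse uses the equivalence of the states in $S$ to derive a contradiction from a run of $\A'$ that is eventually $\alpha'$-free. Your weak case is closer to salvageable --- for DWW-recognizable languages the residual automaton can indeed carry a weak acceptance condition (this is essentially L\"oding's DWW minimization result) --- but you would still have to prove that, and the B\"uchi case, which is the interesting one here, needs a genuinely different construction.
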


\begin{proof}
	In~\cite{KSV06,KS15}, the authors suggest variants of the subset construction that determinize HD-NBWs. 
	As we argue below, the construction in \cite{KS15} is correct also when applied to SD-NBWs. Moreover, it preserves weakness.  Thus, DBW$=$SD-NBW and DWW$=$SD-NWW. Also, the last inequality follows from the fact DBW$<$NBW and DWW$<$NWW \cite{Lan69}. 
	
	Given an NBW $\A = \zug{\Sigma, Q, q_0, \delta, \alpha}$, the DBW $\A' = \zug{\Sigma, Q', q'_0, \delta', \alpha'}$ generated in \cite{KS15}\footnote{The construction in \cite{KS15} assumes automata with transition-based acceptance, and (regardless of this) is slightly different: when $\alpha$ is visited, $\A'$ continues with a single state from the set of successors. However, the key point remains the same: $\A$ being SD enables $\A'$ to maintain only subsets of states, rather than Safra trees, which makes determinization much easier.} is such that $Q'=2^Q$, $q'_0= \{q_0\}$, $\alpha' = \{S\in 2^Q: S\subseteq \alpha\}$, and the transition function $\delta'$ is defined for every subset $S\in 2^Q$ and letter $\sigma\in \Sigma$ as follows. If $\delta(S, \sigma)\cap \alpha = \emptyset$, then $\delta'(S, \sigma)=\delta(S, \sigma)$. Otherwise, namely if $\delta(S, \sigma)\cap \alpha \neq \emptyset$, then $\delta'(S, \sigma)=\delta(S, \sigma) \cap \alpha$. 

Thus, we proceed as the standard subset construction, except that whenever a constructed set contains a state in $\alpha$, we leave in the set only states in $\alpha$. Accordingly, every reachable state $S\in Q'$ contains only $\alpha$-states of $\A$ or only $\bar{\alpha}$-states of $\A$. Note that as $\A$ is SD, then for every two states  $q,q' \in Q$, letter $\sigma\in \Sigma$, and transitions $\langle q, \sigma, s\rangle,\langle q', \sigma, s'\rangle \in \Delta$, if $q \sim_{\A} q'$, then $s \sim_{\A} s'$. Consequently, every reachable state $S$ of $\A'$ consists of $\sim_\A$-equivalent states. As we formally prove in Appendix~\ref{app css}, these properties guarantee that indeed $L(\A')=L(\A)$ and that weakness of $\A$ is maintained in $\A'$. 
\end{proof}

\section{Deciding the Nondeterminism Level of an Automaton}

In this section we study the complexity of the problem of deciding the nondeterminism level of a given automaton. Note that we refer here to the syntactic class (e.g., deciding whether a given NBW is HD) and not to the semantic one (e.g., deciding whether a given NBW has an equivalent HD-NBW). Indeed, by Theorem~\ref{sem hie}, the latter boils down to deciding whether the language of a given \nxw, for ${\rm X} \in \{{\rm B,C,W}\}$, can be recognized by a \dxw, which is well known: the answer is always ``yes'' for an NCW, and the problem is PSPACE-complete for NBWs  and NWWs \cite{KV05b}.\footnote{The proof in \cite{KV05b} is for NBWs, yet the arguments there apply also for weak automata.}

Our results are summarized in Table~\ref{level}. The entries there describe the complexity of deciding whether a given \nxw belongs to a certain nondeterministic level (for example, the upper left cell describes the complexity of deciding whether an NBW is DBP). In fact, the results described are valid already when the given \nxw is known to be one level above the questioned one with only two exceptions -- deciding the DBPness of an HD-NCW, whose NP-hardness is proved in Theorem~\ref{NPC coBuchi DBP}, and deciding whether a given NWW is DBP, which is PTIME in general, and is $O(1)$ when the given NWW is HD, in which case the answer is always ``yes''. 
\begin{table}[ht!] 
	\begin{center}
		\label{tab:table1}
		\begin{tabular}{lccc}
							&\textbf{DBP}				& \textbf{HD} 		& \textbf{SD}			\\
			\midrule 
			\textbf{NBW} 	& NP-complete 				& PTIME 			& PSPACE-complete	 	\\
							& \autoref{NPC Buchi DBP}	& \cite{BK18} 		& \autoref{sd pspace}	\\
			\midrule
			\textbf{NCW} 	& NP-complete 				& PTIME 			& PSPACE-complete 		\\
							& \cite{KM18}				& \cite{KS15}	 	& \autoref{sd pspace}	\\
			\midrule
			\textbf{NWW}	& PTIME 					& PTIME 			& PSPACE-complete 		\\
							& \cite{KS15,BK18}			& \cite{KS15,BK18} 	& \autoref{sd pspace}	\\
		\end{tabular}
	\end{center}
	\caption{The complexity of deciding the nondeterminism level of an \nxw, for ${\rm X} \in \{{\rm B,C,W}\}$. 
		The results are valid also in the case the given \nxw is one level to the right in the nondeterminism hierarchy, with the exception of deciding the DBPness of an HD-NCW, whose NP-hardness is proved in Theorem~\ref{NPC coBuchi DBP}, and deciding the DBPness of an HD-NWW, where the answer is always positive \cite{BKS17}.}
	\label{level}
\end{table}

We start with the complexity of deciding semantic determinism. 

\begin{thm}
\label{sd pspace}
Deciding whether an \nxw is semantically deterministic is PSPACE-complete, for ${\rm X} \in \{{\rm B,C,W}\}$.
\end{thm}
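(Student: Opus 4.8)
The plan is to establish both directions separately: membership in PSPACE and PSPACE-hardness.

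For the upper bound, I would observe that deciding semantic determinism amounts to checking, for every state $q$, every letter $\sigma$, and every pair $s, s' \in \delta(q,\sigma)$, that $L(\A^s) = L(\A^{s'})$. There are only polynomially many such triples $\langle q, \sigma, s, s'\rangle$ to consider, so it suffices to solve a polynomial number of language-equivalence queries between sub-automata of $\A$. For NBW, NCW, and NWW, language equivalence (equivalently, two-sided language containment) is decidable in PSPACE \cite{SVW87}. Since PSPACE is closed under polynomially many PSPACE queries (run them sequentially, reusing space), the whole procedure runs in polynomial space. One subtlety to note here is that the equivalence check must be performed between sub-automata of the \emph{same} acceptance type, which is immediate since $\A^s$ and $\A^{s'}$ inherit $\A$'s acceptance condition; in the weak case both sub-automata are again weak, so the check can even be done in more restricted space, though PSPACE already suffices.

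For the lower bound, the natural route is a reduction from a known PSPACE-hard problem about the relevant automata model — for instance, the universality problem for NFWs (equivalently, nonuniversality, since PSPACE is closed under complement), or directly the language-containment/equivalence problem, which is PSPACE-complete already for automata on finite words and lifts to B\"uchi, co-B\"uchi, and weak automata. Given an instance of such a problem — say two automata $\A_1$ and $\A_2$, and we want to decide whether $L(\A_1) \subseteq L(\A_2)$ — I would build a single \nxw $\A$ with a fresh initial state $q_0$ and a letter $\sigma$ (adding $\sigma$ to the alphabet if needed) such that the $\sigma$-successors of $q_0$ are exactly the initial states of (suitably massaged) copies of $\A_1$ and $\A_2$, arranged so that one $\sigma$-successor recognizes $L(\A_1)$ and another recognizes $L(\A_1) \cup L(\A_2)$ (or $L(\A_2)$ directly if that is easier to force). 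Then $\A$ is SD at $q_0$ on $\sigma$ iff $L(\A_1) = L(\A_1) \cup L(\A_2)$ iff $L(\A_1) \supseteq L(\A_2)$; by arranging the rest of $\A$ to be trivially SD everywhere else (e.g., each other state has a deterministic continuation), SDness of the whole automaton reduces exactly to the containment question. Care must be taken that the copies used as $\sigma$-successors are themselves SD, so that the \emph{only} possible SD-violation sits at $q_0$; one clean way is to make the two gadget automata \emph{deterministic} when viewed in isolation (deterministic automata are trivially SD), which is possible since the starting hard problem can be taken with one or both inputs deterministic — NFW universality is PSPACE-hard, and more relevantly DBW/DCW/DWW equivalence to a given NBW/NCW/NWW is PSPACE-hard, matching the remark in the excerpt that recognizability by a \dxw is PSPACE-complete. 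For the weak case specifically, I would start from PSPACE-hardness of a problem on DWWs versus NWWs and ensure the construction stays within weak automata by keeping all new components weak.

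The main obstacle I expect is engineering the hardness gadget so that semantic determinism of $\A$ is violated \emph{exactly} at the intended place and \emph{exactly} when the source instance is a ``no'' instance — in particular, ensuring that making the two $\sigma$-successors have the required languages does not accidentally create other nondeterministic choices with inequivalent successors, and that for weak automata the gadget does not break weakness (no SCC mixing $\alpha$ and $\bar\alpha$ states). Getting the two successor languages to be $L_1$ and $L_1 \cup L_2$ cleanly, rather than $L_1$ and $L_2$, is the standard trick that makes equality-of-languages coincide with one-sided containment, and it is worth stating explicitly; alternatively, since we get to choose the hard source problem, reducing from the \emph{equivalence} problem $L(\A_1) = L(\A_2)$ directly avoids even that step, at the cost of citing equivalence (rather than containment) as PSPACE-complete, which is standard for all three models.
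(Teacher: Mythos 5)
Your upper bound is exactly the paper's argument and is fine: polynomially many language-equivalence queries, each solvable in polynomial space, composed sequentially.

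The lower bound, however, has a genuine gap. Your plan is a black-box reduction from containment/equivalence/universality: put two gadget automata as the $\sigma$-successors of a fresh initial state so that SDness at $q_0$ encodes $L(\A_1)\supseteq L(\A_2)$, and keep everything else trivially SD by making the gadgets deterministic. The problem is that these two requirements are incompatible with PSPACE-hardness of the source problem. If both gadgets are deterministic, then equivalence and containment of \dxws are decidable in polynomial time (complement one deterministically, intersect, test emptiness), so the source instance is not PSPACE-hard. The hardness of containment/universality for NXWs lives entirely in the nondeterminism of the input automaton; but as soon as you embed a genuinely nondeterministic $\A_1$ as a successor of $q_0$, its internal nondeterministic choices become part of $\A$ and must \emph{also} be semantically deterministic, which a generic universality-hardness instance will not satisfy. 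The same issue infects your union gadget: realizing $L(\A_1)\cup L(\A_2)$ by a nondeterministic branch requires that branch to be SD, i.e., $L(\A_1)=L(\A_2)$, which is circular, while realizing it by a deterministic product again forces determinism. So the reduction as described either starts from a problem in PTIME or produces an automaton whose SDness is violated for reasons unrelated to the containment you want to encode.

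This is why the paper does not reduce from a language-theoretic decision problem but gives a white-box reduction from acceptance of the empty tape by a polynomial-space Turing machine. The constructed NWW guesses violations of the computation encoding (or an accepting configuration) and, crucially, can return to its initial state from every non-sink state after any guess. In the ``yes'' case the automaton is universal from \emph{every} state, so all states are equivalent and SDness holds despite heavy nondeterminism; in the ``no'' case a specific word witnesses two inequivalent $\sigma$-successors. If you want to salvage your modular approach, you would essentially need PSPACE-hardness of universality for automata that are \emph{already} SD (or otherwise have all their internal choices equivalent), and establishing that is tantamount to the paper's direct construction.
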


\begin{proof}
Membership in PSPACE is easy, as we check SDness by polynomially many checks of language equivalence. Formally, given an \nxw $\A=\zug{\Sigma,Q,q_0,\delta,\alpha}$, a PSPACE algorithm goes over all states $q \in Q$, letters $\sigma$, and $\sigma$-successors $s$ and $s'$ of $q$, and checks that $s \sim_\A s'$.  Since language equivalence can be checked in PSPACE \cite{SVW87} and there are polynomially many checks to perform, we are done. 

Proving PSPACE-hardness, we do a reduction from polynomial-space Turing machines. Given a Turing machine $T$ with
$|T|$ many transitions and space complexity $s:\mathbb{N}\rightarrow\mathbb{N}$, we construct in time polynomial in $|T|$ and $s(0)$, an NWW $\A$ of size linear in $T$ and $s(0)$, such that $\A$ is SD iff $T$ accepts the empty tape\footnote{This is sufficient, as one can define a generic reduction from every language $L$ in PSPACE as follows. Let $T_L$ be a Turing machine that decides $L$ in polynomial space $f(n)$. On input $w$ for the reduction, the reduction considers the machine $T_w$ that on every input, first erases the tape, writes $w$ on its tape, and then runs as $T_L$ on $w$. Then, the reduction outputs an automaton $\A$, such that $T_w$ accepts the empty tape iff $\A$ is SD. Note that the space complexity of $T_w$ is $s(n) = \max(n ,f(|w|))$, and that $w$ is in $L$ iff $T_w$ accepts the empty tape. Since $\A$ is constructed in time polynomial in $s(0)=f(|w|)$ and $|T_w|=\mathrm{poly}(|w|)$, it follows that the reduction is polynomial in $|w|$.}. Clearly, this implies a lower bound also for NBWs and NCWs. 
Let $n_0=s(0)$. Thus, each configuration in the computation of $T$ on the empty tape uses at most $n_0$ cells. 

We assume that $T$ halts from all configurations (that is, not just from these reachable from an initial configuration of $T$); indeed, by adding to $T$ a step-counter that uses polynomial-space, one can transform a polynomial-space Turing machine that need not halt from all configurations to one that does halt. 

We also assume, without loss of generality, that once $T$ reaches a final (accepting or rejecting) state, it erases the tape, moves with its reading head to the leftmost cell, and moves to the initial state. Thus, all computations of $T$ are infinite and after visiting a final configuration for the first time, they consists of repeating the same finite computation on the empty tape
that uses $n_0$ tape cells. 

We define $\A$ so that it accepts a word $w$ iff 
\begin{itemize}
\item
(C1) $w$ is not a suffix of an encoding of a legal computation of $T$ that uses at most $n_0$ cells,
or 
\item
(C2) $w$ includes an encoding of an accepting configuration of $T$ that uses at most $n_0$ cells. 
\end{itemize}
It is not hard to see that if $T$ accepts the empty tape, then $\A$ is universal (that is, accepts all words). Indeed, each word $w$ is either not a suffix of an encoding of a legal computation of $T$ 
that uses at most $n_0$ cells, 
in which case $w$ is accepted thanks to C1, or $w$ is such a suffix, in which case, since we assume that $T$ halts from any configuration, the encoded computation eventually reaches a final configuration and, by the definition of $T$, continues by an encoding of the computation of $T$ on the empty tape. Thus, $w$ eventually includes the encoding of the computation of $T$ on the empty tape. In particular, since $T$ accepts the empty tape, $w$ includes an encoding of an accepting configuration that uses at most $n_0$ tape cells, and so $w$ is accepted thanks to C2.

Also, if $T$ rejects the empty tape, then $\A$ rejects the word $w_\varepsilon$ that encodes the computation of $T$ on the empty tape. Indeed, C1 is not satisfied, and since by definition of $T$, the infinite computation of $T$ on the empty tape is just the same finite rejecting computation of $T$ on the empty tape repeated, then $w_\varepsilon$ clearly does not include an encoding of an accepting configuration, and so C2 is not satisfied too. 

In order to define $\A$ so that it is SD iff $T$ accepts the empty tape, we define all its states to be universal iff $T$ accepts the empty tape. Intuitively, we do it by letting $\A$ guess and check the existence of an infix that witnesses  satisfaction of C1 or C2, and also let it, at each point of its operation, go back to the initial state, where it can guess again. Recall that $\A$ is universal when $T$ accepts the empty tape. That is, for every infinite word $w$, all the suffixes of $w$ satisfy C1 or C2. Thus, $\A$ making a bad guess on $w$ does not prevent it from later branching into an accepting run.

We now describe the operation of $\A$ in more detail (see \autoref{SD-Hardness}). 
\begin{figure}[ht]
	\centering
	\includegraphics[width=0.9\textwidth]{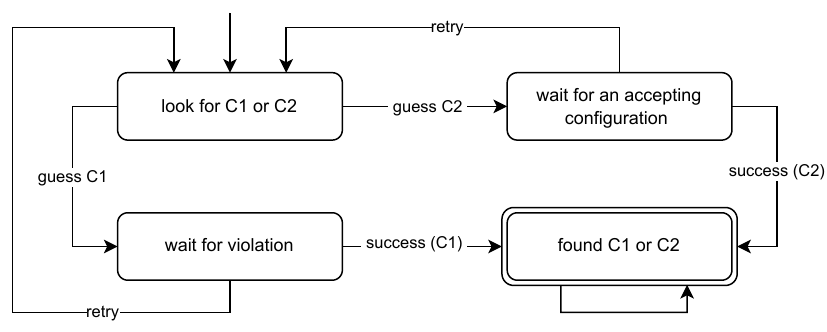}
	\caption{The structure of the NWW constructed in \autoref{sd pspace}.} 
	\label{SD-Hardness}
\end{figure}
In its initial state, $\A$ guesses 
which of C1 and C2 is satisfied. In case $\A$ guesses that C1 is satisfied, it guesses the place in which $w$ includes a violation of the encoding. As we detail in \autoref{app tm}, this amounts to guessing whether the violation is in the encoding of a single configuration encoded in $w$, or whether the violation is of the transition function of $T$. 
For the second type of violations, $\A$ may guess, in each step, that the next three letters encode a position in a configuration and the letter to come $n_0$ letters later, namely at the same position in the successive configuration, is different from the one that should appear in a legal encoding of two successive configurations. If a violation is detected, $\A$ moves to an accepting sink. Otherwise, $\A$ returns to the initial state and $w$ gets another chance to be accepted. 

In case $\A$ guesses that C2 is satisfied, it essentially waits to see a letter that encodes the accepting state of $T$, and when such a letter arrives it moves to the accepting sink.

Also, whenever $\A$ waits to witness some behavior, namely, waits for a position where there is a violation of the encoding, or waits for an encoding of an accepting configuration of $T$ that uses $n_0$ cells, it may nondeterministically, upon reading the next letter, return to the initial state.  
It is not hard to see that $\A$ can be defined in size linear in 
$|T|$ and $n_0$.
As the only accepting states of $\A$ is the accepting sink, it is clearly weak, and in fact describes a co-safety language. 

We prove that $T$ accepts the empty tape iff $\A$ is SD. First, if $T$ rejects the empty tape, then $\A$ is not SD. To see this, consider the word $w_\varepsilon$ that encodes the computation of $T$ on the empty tape, and let $w_{\varepsilon}'$ be a word that is obtained from $w_\varepsilon$ by making a single violation in the first letter. 
Note that $w_\varepsilon'\in L(\A)$ since it has a violation. Note also that any proper suffix of $w_\varepsilon'$ is also a suffix of $w_\varepsilon$, and hence is not in $L(\A)$ as it encodes a suffix of a computation of $T$ that uses at most $n_0$ tape cells and does not include an encoding of an accepting configuration of $T$. 
Consequently, the word $w_\varepsilon'$ can be accepted by $\A$ only by guessing a violation that is caused by the first letter. In particular, if we guess incorrectly that there is a violation on the second letter of $w'_\varepsilon$, then that guess fails, and after it we cannot branch to an accepting run. This shows that $\A$ is not SD.

For the other direction, we show that if $T$ accepts the empty tape, then all the states of $\A$ are universal. In particular, all the states of $\A$ are equivalent, which clearly implies that $\A$ is SD. First, as argued above, if $T$ accepts the empty tape, then $\A$ is universal. In addition, by the definition of $\A$, for every infinite word $w$ and for all states $q$ of $\A$ that are not the accepting sink, there is a path from $q$ to the initial state that is labeled by a prefix of $w$. Thus, the language of all states is universal, and we are done.

Thus, we conclude that $T$ accepts the empty tape iff $\A$ is SD.
In \autoref{app tm}, we give the full technical details of the construction of $\A$.
\end{proof}

Note that the considerations in the proof of Theorem~\ref{sd pspace} can be used in order to prove that the universality problem for SD-NBWs is PSPACE-hard \cite{AK23}.

 \begin{thm}\label{NPC Buchi DBP}
	The problem of deciding whether a given NBW or HD-NBW is DBP is NP-complete.
\end{thm}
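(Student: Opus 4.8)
The plan is to prove membership in NP and NP-hardness separately, obtaining membership directly for NBWs (which a fortiori covers HD-NBWs) and hardness via a reduction that already produces HD-NBWs (which a fortiori covers NBWs). For membership in NP, note that a deterministic pruning of an NBW $\A = \zug{\Sigma, Q, q_0, \delta, \alpha}$ may be assumed to keep all states and exactly one $\sigma$-successor of each state $q$ for each letter $\sigma$, since removing a state only makes it unreachable. Hence $\A$ is DBP iff there is a choice function $f \colon Q \times \Sigma \to Q$ with $f(q,\sigma) \in \delta(q,\sigma)$ for all $q,\sigma$, such that the DBW $\A_f = \zug{\Sigma, Q, q_0, f, \alpha}$ is equivalent to $\A$. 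Such an $f$ is a polynomial-size certificate, so it remains to verify $L(\A_f) = L(\A)$ in polynomial time. As every run of $\A_f$ is a run of $\A$, the inclusion $L(\A_f) \subseteq L(\A)$ is free, and it suffices to check $L(\A) \cap \overline{L(\A_f)} = \emptyset$. Since $\A_f$ is a DBW, $\overline{L(\A_f)}$ is recognized by the DCW obtained by reading $\A_f$ as a co-\buchi automaton, which translates to an equivalent NBW $\B$ with $O(|Q|)$ states, and nonemptiness of the product NBW of $\A$ and $\B$ is decidable in polynomial time. This argument makes no use of HDness, so the problem is in NP for both NBWs and HD-NBWs.

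For NP-hardness I would reduce from 3-SAT (alternatively, one can adapt the NCW construction of~\cite{KM18}). Given a 3-CNF formula $\varphi$ over variables $x_1,\dots,x_n$ with clauses $C_1,\dots,C_m$, the plan is to build in polynomial time an HD-NBW $\A_\varphi$, over an alphabet formed from the clauses plus a few auxiliary letters, whose language does not depend on $\varphi$, such that $\A_\varphi$ is DBP iff $\varphi$ is satisfiable. The nondeterminism of $\A_\varphi$ amounts to ``committing'' to truth values of the variables and then ``verifying'' that an adversarially challenged clause is satisfied by the current commitment. A deterministic pruning is forced to make these commitments positionally, i.e., to fix a single assignment $\pi$, and the resulting deterministic automaton is equivalent to $\A_\varphi$ iff $\pi \models \varphi$. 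Conversely, $\A_\varphi$ is HD: the witnessing resolver commits only after seeing the challenged clause and, whenever a commitment is later refuted, retries while cycling through all assignments, so that on every word in $L(\A_\varphi)$ it eventually settles on a suitable assignment and thereafter produces an accepting run. Since every HD-NBW is an NBW, this also yields NP-hardness of deciding DBPness of NBWs.

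The main obstacle is the design of $\A_\varphi$, which must meet three constraints simultaneously. First, the memoryful resolver must succeed on \emph{every} word in $L(\A_\varphi)$ although it cannot see the future, which forces a retry mechanism that nonetheless guarantees eventual convergence. Second, $L(\A_\varphi)$ must not be DWW-recognizable --- otherwise, since every HD-\nxw whose language is recognizable by a DWW is DBP~\cite{BKS17}, $\A_\varphi$ would be DBP regardless of $\varphi$ --- so its acceptance must be genuinely of B\"uchi type (say, ``infinitely often a good block''), rather than a safety-like condition. Third, when $\varphi$ is unsatisfiable no positional resolver may recognize $L(\A_\varphi)$, which requires the verification gadget to admit no positional shortcut that survives an adversarially chosen infinite sequence of clause challenges. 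Proving the first and third constraints --- equivalently, the two directions of ``$\A_\varphi$ is DBP iff $\varphi$ is satisfiable'' --- is where the real work lies.
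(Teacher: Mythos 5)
Your NP membership argument is correct and matches the paper's: guess a positional choice function, note that $L(\A_f)\subseteq L(\A)$ is automatic for a pruning, and check the reverse inclusion in polynomial time using the fact that $\A_f$ is deterministic (so its complement is the same structure read as a DCW, which converts to a small NBW for the product-emptiness test). This part is complete.

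The hardness part, however, has a genuine gap: you describe the shape of the reduction and correctly identify the three design constraints it must satisfy, but you never construct the gadget, and you explicitly defer ``the real work'' --- namely the proofs that $\A_\varphi$ is HD and that $\A_\varphi$ is DBP iff $\varphi$ is satisfiable. These are precisely the nontrivial content of the theorem; without a concrete automaton, neither direction can be verified, and it is not obvious that an automaton meeting all three of your constraints simultaneously exists. For comparison, the paper's construction takes the alphabet $X\cup C$ (variables and clauses) and the language $L_{n,m}=(R_{n,m})^\omega$ with $R_{n,m}=(X\cdot C)^*\cdot\{x_1c_jx_2c_j\cdots x_nc_j : j\in[m]\}$, which is of genuine B\"uchi type (``infinitely many completed clause-blocks''), exactly as your second constraint demands. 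The single nondeterministic state $p$ branches on $x_k$ to $q_k^0$ or $q_k^1$; a correct guess for the next clause letter visits the sole $\alpha$-state and a wrong one returns to $p$. Your first constraint (eventual convergence of the resolver) is met not by cycling through assignments but by reading the clause $c_j$ that follows $x_1$ and then choosing, for each subsequent $x_k$, the value satisfying $c_j$ --- the assumption that every clause depends on at least two variables guarantees some $k\ge 2$ succeeds before the block ends. Your third constraint is met because a positional pruning fixes one assignment and then rejects $(x_1c_j\cdots x_nc_j)^\omega$ for any unsatisfied clause $c_j$. So your outline is aimed at the right target, but the proof is not there yet: you need to exhibit the automaton and carry out the two correspondence arguments.
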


\begin{proof}
For membership in NP, observe we can check that a witness deterministic pruning $\A'$ is equivalent to $\A$ by checking whether $L(\A)\subseteq L(\A')$. Since $\A'$ is deterministic, the latter can be checked in polynomial time. For NP-hardness, we describe a parsimonious polynomial time reduction from SAT. That is, given a CNF formula $\varphi$, we construct an HD-NBW $\A_\varphi$ such that there is a bijection between assignments to the variables of $\varphi$ and DBWs embodied in $\A_\varphi$, and an assignment satisfies $\varphi$ iff its corresponding embodied DBW is equivalent to $\A_\varphi$. In particular, $\varphi$ is satisfiable iff $\A_\varphi$ is DBP. 

Consider a SAT instance $\varphi$ over the variable set $X=\{x_1,\ldots,x_n\}$ and with $m\geq 1$ clauses $C=\{c_1,\ldots,c_m\}$. For $n \geq 1$, let $[n]=\{1,2,\ldots,n\}$. 
	For a variable $x_k \in X$, let $C^0_k \subseteq C$ be the set of clauses in which $x_k$ appears negatively, and let $C^1_k \subseteq C$ be the set of clauses in which $x_k$ appears positively. For example, if $c_1=x_1\lor \lnot x_2 \lor x_3$, then $c_1$ is in $C^1_1$, $C_2^0$, and $C^1_3$.
		Assume that all clauses depend on at least two different variables (that is, no clause is a tautology or forces an assignment to a single variable). Let $\Sigma_{n,m} = X\cup C$, and let $$R_{n,m} = (X\cdot C)^*\cdot \{x_1\cdot c_j\cdot x_2\cdot c_j\cdots x_n\cdot c_j : j\in [m]\} \subseteq \Sigma^*_{n,m}.$$
	We construct an HD-NBW $\A_\varphi$ that recognizes $L_{n,m} = (R_{n,m})^\omega$, and is DBP iff $\varphi$ is satisfiable.

	Let $D_{n,m}$ be a DFW that recognizes $R_{n,m}$ with $O(n \cdot m)$ states, a single accepting state $p$, and an initial state $q_0$ that is visited only once in all runs.  For example, we can define $D_{n,m}=\langle \Sigma_{n,m}, Q_{n,m}, q_0, \delta_{n,m}, \{p\}\rangle$ as follows: from $q_0$, the DFW expects to read only words in $(X \cdot C)^*$ -- upon a violation of this pattern, it goes to a rejecting sink. Now, if the pattern is respected, then with $X \setminus \{x_1\}$, the DFW goes to two states where it loops with $C \cdot (X \setminus \{x_1\})$ and, upon reading $x_1$ from all states that expect to see letters in $X$, it branches with each $c_j$, for all $j \in [m]$, to a path where it hopes to detect an $x_2\cdot c_j\cdots x_n\cdot c_j$ suffix. 
	If the detection is completed successfully, it goes to the accepting state $p$. Otherwise, it returns to the two-state loop.

\noindent 
Now, we define $\A_\varphi=\zug{\Sigma_{n,m}, Q_\varphi, p, \delta_\varphi, \{q_0\}}$, where $Q_\varphi=Q_{n,m} \cup \{q^i_k : (i,k)\in\{0,1\}\times[n]\}$. The idea behind $\A_\varphi$ is as follows. From state $p$ (that is, the accepting state of $D_{n,m}$, which is now the initial state of $\A_\varphi$), the NBW $\A_\varphi$ expects to read a letter in $X$. When it reads $x_k$, for $1 \leq k \leq n$, it nondeterministically branches to the states $q_k^0$ and $q_k^1$. Intuitively, when it branches to $q_k^0$, it guesses that the clause that comes next is one that is satisfied when $x_k=0$, namely a clause in $C^0_k$. Likewise,  when it branches to $q_k^1$, it guesses that the clause that comes next is one that is satisfied when $x_k=1$, namely a clause in $C^1_k$. When the guess is successful, $\A_\varphi$ moves to the $\alpha$-state $q_0$. When the guess is not successful, it returns to $p$. Implementing the above intuition, transitions from the states $Q_{n,m}\setminus \{p\}$ are inherited from $D_{n,m}$, and transitions from the states in $\{q^i_k : (i,k)\in\set{0,1}\times[n]\} \cup \{p\}$ are defined as follows (see also \autoref{sat fig}).
	\begin{itemize}
		\item For all $k\in[n]$, we have that $\delta_\varphi(p,x_k)=\{q^0_k,q^1_k\}$. 
				\item For all $k\in[n]$, $i\in {\{0,1\}}$, and $j\in[m]$, if $c_j  \in C_k^i$, then  
				$\delta_\varphi(q^i_k,c_j)=\{q_0\}$. Otherwise, $\delta_\varphi(q^i_k,c_j)=\{p\}$. For example, if $c_1=x_1\lor \lnot x_2 \lor x_3$, then $\delta_\varphi(q^0_2,c_1)=\{q_0\}$ and $\delta_\varphi(q^1_2,c_1)=\{p\}$.
					\end{itemize}
				
\begin{figure}[ht]
		\centering
		\centering
		\includegraphics[width=0.53\textwidth]{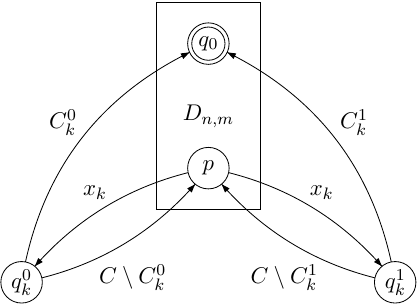}
		\caption{The transitions to and from the states $q_k^0$ and $q_k^1$ in $\A_\varphi$.}
		\label{sat fig}
	\end{figure}		
					
	\noindent 
	Note that $p$ is the only nondeterministic state of $\A_\varphi$ and that for every deterministic pruning of $\A_\varphi $, all the words in $(X\cdot C)^\omega $ have an infinite run in the pruned automaton. This run, however, may eventually loop in $\{p\}\cup \{q^0_k,q^1_k : k\in [n]\}$. 
	Note also, that for readability purposes, the automaton $\A_\varphi$ is not total. Specifically, the states of $\A_\varphi$ are partitioned into states that expect to see letters in $X$ and states that expect to see letters in $C$. In particular, all infinite paths in $\A_\varphi$ are labeled by words in $(X\cdot C)^\omega$.
Thus, when defining an HD strategy $g$ for $\A_\varphi$, we only need to define $g$ on prefixes in $(X\cdot C)^*\cup (X\cdot C)^*\cdot X$.

In the following propositions, we prove that $\A_\varphi$ is an HD NBW recognizing $L_{n,m}$, and that $\A_\varphi$ is DBP iff $\varphi$ is satisfiable.
	\end{proof}

\begin{prop}
	$L(\A_\varphi)\subseteq L_{n,m}$.
\end{prop}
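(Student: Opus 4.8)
The plan is to show that every infinite run of $\A_\varphi$ is labeled by a word in $L_{n,m} = (R_{n,m})^\omega$, which immediately gives $L(\A_\varphi) \subseteq L_{n,m}$ since $L(\A_\varphi)$ is determined by accepting runs. The automaton $\A_\varphi$ is built on top of the DFW $D_{n,m}$ for $R_{n,m}$, so the idea is to track what a run of $\A_\varphi$ is doing in terms of $D_{n,m}$'s progress toward completing an $R_{n,m}$-block, and observe that every accepting run must complete infinitely many such blocks.

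First I would record the structural facts already laid out before the statement: every infinite path of $\A_\varphi$ is labeled by a word in $(X \cdot C)^\omega$ (the states are partitioned into ``expect $X$'' and ``expect $C$'' states), the only accepting state is $q_0$, and $q_0$ is reached only via a successful guess, i.e., via a transition $\delta_\varphi(q^i_k, c_j) = \{q_0\}$ with $c_j \in C^i_k$. From $q_0$ the run continues as in $D_{n,m}$. So I would argue that reaching $q_0$ witnesses that $D_{n,m}$, started from its initial state at the beginning of the current ``attempt'', has read a word in $R_{n,m}$ — more precisely, that the infix of $w$ since the last visit to $q_0$ (or since the start, for the first visit) lies in $R_{n,m}$. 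This follows because the states $Q_{n,m} \setminus \{p\}$ together with the ``hope to detect $x_2 c_j \cdots x_n c_j$'' paths exactly simulate $D_{n,m}$'s run, and the $q^i_k$ states, on a successful guess $c_j \in C^i_k$, correspond precisely to $D_{n,m}$ completing an $x_k c_j \cdots x_n c_j$ suffix after a prefix in $(X\cdot C)^*$ — i.e., a full word of $R_{n,m}$.

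Next, for an accepting run $r$ of $\A_\varphi$ on $w$, by the B\"uchi condition $r$ visits $q_0$ infinitely often. Partition $w$ at the positions where $r$ is at $q_0$: this writes $w = u_0 \cdot u_1 \cdot u_2 \cdots$ where $u_0$ is the prefix up to the first $q_0$-visit and each subsequent $u_\ell$ is the infix between consecutive $q_0$-visits. By the previous paragraph, $u_0 \in R_{n,m}$, and I would also need that each $u_\ell$ for $\ell \geq 1$ lies in $R_{n,m}$ — this uses that after visiting $q_0$, the run returns to $p$ and restarts the $D_{n,m}$-simulation from scratch, so the segment until the next $q_0$ is again a completed $R_{n,m}$-word. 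Since $(R_{n,m})^\omega = R_{n,m} \cdot (R_{n,m})^\omega$ and concatenations of $R_{n,m}$-words stay in $(R_{n,m})^\omega$, we get $w \in L_{n,m}$.

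The main obstacle is the careful bookkeeping in the correspondence between $\A_\varphi$-runs and $D_{n,m}$-runs: one must check that a visit to $q_0$ really certifies that a full $R_{n,m}$-block has just been read, and in particular that the $(X \cdot C)^*$-prefix part was correctly accumulated by the two-state loop of $D_{n,m}$ (so that no ``partial'' block sneaks through). This is routine but needs the explicit description of $D_{n,m}$; I would handle it by defining, for a run of $\A_\varphi$, the corresponding run of $D_{n,m}$ on the same word and invoking $L(D_{n,m}) = R_{n,m}$ at each $q_0$-visit. Everything else — the decomposition at $q_0$-visits and closure of $(R_{n,m})^\omega$ under prepending $R_{n,m}$-words — is immediate.
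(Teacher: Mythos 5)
There is a genuine gap in your decomposition step. You claim that the transition $q^i_k \to q_0$ on a ``successful guess'' $c_j \in C^i_k$ ``corresponds precisely to $D_{n,m}$ completing an $x_k\cdot c_j\cdots x_n\cdot c_j$ suffix after a prefix in $(X\cdot C)^*$'', and hence that each segment of $w$ between consecutive $q_0$-visits lies in $R_{n,m}$. This is false: the move from $q^i_k$ to $q_0$ depends only on whether the assignment $x_k=i$ satisfies the clause $c_j$ (a property of $\varphi$), not on the shape of the word read so far. For instance, if $c_5\in C^0_3$, the run $p \to q^0_3 \to q_0$ on the two-letter prefix $x_3\cdot c_5$ already visits $q_0$, yet $x_3\cdot c_5 \notin R_{n,m}$ (words of $R_{n,m}$ must end with a full block $x_1\cdot c_j\cdot x_2\cdot c_j\cdots x_n\cdot c_j$). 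More generally, between two $q_0$-visits the run reads a word of the form $v'\cdot x_{k_1}c_{j_1}\cdots x_{k_t}c_{j_t}$, where only $v'$ (the part driving $D_{n,m}$ from $q_0$ back to $p$) is guaranteed to be in $R_{n,m}$; the trailing pairs read while bouncing between $p$ and the $q^i_k$'s end at an arbitrary $x_{k_t}c_{j_t}$ and in general destroy membership in $R_{n,m}$. So your proposed factorization $w=u_0\cdot u_1\cdot u_2\cdots$ with every $u_\ell\in R_{n,m}$ simply does not hold (already $u_0$ can fail).

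The conclusion is still true, but it needs the weaker, correct observation and one extra lemma, which is how the paper argues. Since $q_0$ has no incoming transitions inside $D_{n,m}$ and the $q^i_k$'s are reachable only from $p$, every inter-$q_0$ segment begins with a sub-run of $D_{n,m}$ from its initial state $q_0$ to its accepting state $p$, whose label is in $R_{n,m}$. Hence an accepting run yields infinitely many \emph{disjoint} infixes of $w$ in $R_{n,m}$, i.e., $w\in \infty R_{n,m}$; combined with $w\in (X\cdot C)^\omega$ and the identity $(R_{n,m})^\omega=(X\cdot C)^\omega\cap(\infty R_{n,m})$ (which uses that $R_{n,m}$-words start in $X$, forcing the infixes to be aligned with the $(X\cdot C)$-blocks, and that $R_{n,m}$ is closed under prepending words from $(X\cdot C)^*$), one gets $w\in L_{n,m}$. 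Repair your argument by replacing ``each segment is an $R_{n,m}$-word'' with ``each segment contains an $R_{n,m}$-infix'' and adding this alignment/closure step.
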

\begin{proof}
	As already mentioned, all infinite paths of $\A_\varphi$, accepting or rejecting, are labeled by words in $(X\cdot C)^\omega$. 
	Further, any accepting run of $\A_\varphi$ has infinitely many sub-runs that are accepting finite runs of $D_{n,m}$. 
		Let $\infty R_{n,m}$ be the language of infinite words that include infinitely many disjoint finite subwords in $R_{n,m}$. 
	Since $L_{n,m}=(R_{n,m})^\omega = (X\cdot C)^\omega\cap (\infty R_{n,m})$, it follows that $L(\A_\varphi)\subseteq L_{n,m}$. 
\end{proof}

\begin{prop}

\label{A phi HD}
	There exists a strategy $g:\Sigma^*\rightarrow Q_{\varphi}$ for $\A_\varphi$ that accepts all words in $L_{n,m}$. Formally, for all $w\in L_{n,m}$, the run $g(w)=g(w[1,0]),g(w[1,1]),g(w[1,2]),\ldots$, is an accepting run of $\A_\varphi$ on $w$.
\end{prop}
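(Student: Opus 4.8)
The plan is to exhibit a concrete strategy $g$ and show that for every $w\in L_{n,m}$ the induced run $g(w)$ visits the $\alpha$-state $q_0$ infinitely often. I would first record two structural facts. Viewing a word of $(X\cdot C)^\omega$ as a sequence of pairs from $X\times C$, a finite prefix lies in $R_{n,m}$ iff it consists of at least $n$ pairs whose last $n$ form a \emph{signature} $(x_1,c)(x_2,c)\cdots(x_n,c)$ for a single clause $c$. Hence (a) $w\in L_{n,m}$ iff $w\in(X\cdot C)^\omega$ and $w$ has infinitely many (possibly overlapping) signature occurrences: the nontrivial direction greedily selects signature occurrences at least $n$ pairs apart and reads off a decomposition into blocks of $R_{n,m}$. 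Consequently, (b) every suffix of $w$ at an even letter-position again has infinitely many signature occurrences, so it has a prefix in $R_{n,m}$; since $D_{n,m}$ is deterministic with unique accepting state $p$ and $\A_\varphi$ inherits from $D_{n,m}$ all transitions out of $Q_{n,m}\setminus\{p\}$, the run of $\A_\varphi$ from $q_0$ on such a suffix reaches $p$ after finitely many letters.

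Next I would define $g$. The only nondeterministic state is $p$, which branches only on letters of $X$, so it suffices to specify $g(u\cdot x_k)$ whenever $g(u)=p$; elsewhere $g$ follows the (deterministic) transition. Every nonempty prefix $u$ with $g(u)=p$ ends with a clause letter $c$ (the only way to enter $p$). I set $g(u\cdot x_k)=q_k^b$ if $c$ mentions $x_k$, where $b$ is the polarity of $x_k$ in $c$, and $g(u\cdot x_k)=q_k^0$ otherwise (and $g(x_k)=q_k^0$ on the empty prefix). Informally, when forced to guess at $p$ the strategy bets that the clause it has just read will recur, which is exactly what happens inside a signature once its first pair has been consumed.

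Finally I would verify correctness in two steps. First, $g(w)$ visits $p$ infinitely often: from $p$ the run reads a letter of $X$, moves to some $q_k^i$, and on the next (clause) letter either returns to $p$ or moves to $q_0$, from which fact (b) brings it back to $p$ in finitely many steps. Second, from any configuration at $p$ the run reaches $q_0$ within finitely many steps: let $w'$ be the remaining suffix and fix, by fact (a), a signature occurrence $(x_1,c)\cdots(x_n,c)$ in $w'$. Tracing the run over the pairs preceding this occurrence, if it ever moves to $q_0$ we are done; otherwise it returns to $p$ after each pair and so sits at $p$ just before $(x_1,c)$. Reading $x_1$ then $c$, it either enters $q_0$ (if the polarity $g$ picked for $x_1$ happens to match $c$) or returns to $p$ with the current prefix now ending in $c$; from then on, at a pair $(x_\ell,c)$ with $c$ mentioning $x_\ell$ the run is at $p$ with a prefix ending in $c$, so $g$ plays $q_\ell^b$ for the correct polarity $b$ and the following letter $c$ sends the run to $q_0$. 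Since every clause depends on at least two variables, such an $x_\ell$ with $\ell\ge 2$ occurs within the signature. Combining the two steps, $g(w)$ visits $q_0$ infinitely often and is therefore accepting. The crux of the argument — and the reason fact (a) is phrased in terms of signatures — is exactly this: $g$ cannot know the clause when it must commit to a polarity at a signature's first pair $(x_1,c)$, but one ``wasted'' pair reveals $c$, and the two-variables assumption guarantees a still-unread variable in the same signature on which the now-informed guess succeeds.
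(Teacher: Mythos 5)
Your proposal is correct and follows essentially the same route as the paper: the same strategy (at $p$, guess the polarity that satisfies the most recently read clause, relying on the assumption that every clause depends on at least two variables to guarantee a hit at some $x_\ell$ with $\ell\ge 2$ inside the repeated block), and the same correctness argument via showing the run shuttles between $p$ and $q_0$ infinitely often. The only difference is cosmetic — the paper always sends $x_1$ to $q_1^0$ while you also consult the preceding clause for $k=1$ — and this does not affect the argument.
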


\begin{proof}
	The definition of $L_{n,m}$ is such that when reading a prefix that ends with a subword of the form $x_1\cdot c_j$, for some $j\in [m]$,  then we can guess that the word continues with $x_2\cdot c_j \cdot x_3 \cdot c_j \cdots x_n\cdot c_j$; thus that $c_j$ is the clause that is going to repeat. Therefore, when we are at state $p$ after reading a word that ended with $x_1 \cdot c_j$, and we read $x_2$, it is a good HD strategy to move to a state $q^i_2$ such that the assignment $x_2=i$ satisfies $c_j$ (if such $i\in \{0,1\}$ exists; otherwise the strategy can choose arbitrary between $q^0_2$ and $q^1_2$), and if the run gets back to $p$, the strategy continues with assignments that hope to satisfy $c_j$, until the run gets to $q_0$ or another occurrence of $x_1$ is detected. Note that while it is not guaranteed that for all $k\in [n]$ there is $i\in \{0,1\}$ such that the assignment $x_k=i$ satisfies $c_j$, it is guaranteed that such an $i$ exists for at least two different $k$'s (we assume that all clauses depend on at least two variables). Thus, even though we a priori miss an opportunity to satisfy $c_j$  with an assignment to $x_1$, it is guaranteed that there is another $2\leq k\leq n$ such that $c_j$ can be satisfied by $x_k$. 
	 
	We define $g$ inductively as follows. Recall that $\A_\varphi$ is nondeterministic only in the state $p$, and so in all other states, the strategy $g$ follows the only possible transition. First, for all $k\in [n]$, we define $g(x_k)=q^0_k$. Let $v\in (X\cdot C)^*\cdot X$, be such that $g$ has already been defined on $v$ and let $j\in [m]$. Since $v\notin (X\cdot C)^*$, we have that $g(v)\neq p$ and so $g(v\cdot c_j)$ is uniquely defined. We continue and define $g$ on $u=v\cdot c_j\cdot x_k$, for all $k\in [n]$. If $g(v\cdot c_j)\neq p$, then $g(u)$ is uniquely defined. Otherwise, $g(v\cdot c_j)= p$ and we define $g(u)$ as follows,
	\begin{itemize}
		\item If $k=1$, then we define $g(u)=q^0_1$.
		\item If $k>1$ and $x_k$ participates in $c_j$, then we define $g(u)=q^i_k$, where $i\in \{0,1\}$ is such that $c_j\in C^i_k$. Note that since we assume that $c_j$ is not a tautology, then $i$ is uniquely defined.
		\item If $k>1$ and $x_k$ does not participate in $c_j$, then the value of $c_j$ is not affected by the assignment to $x_k$, and in that case we define $g(u)=q^0_k$.
	\end{itemize}
	\noindent 
	The reason for the distinction between the cases $k=1$ and $k>1$ is that when we see a finite word that ended with $c_j\cdot x_1$, then there is no special reason to hope that the next letter is going to be $c_j$. This is in contrast, for example, to the case we have seen a word that ends with $c_{j'}\cdot x_1\cdot c_j\cdot x_2$, where it is worthwhile to guess we are about to see $c_j$ as the next letter. 
	
	By the definition of $g$, it is consistent with $\Delta_\varphi$. In Appendix~\ref{app np HDnbw} we formally prove that $g$ is a winning HD strategy for $\A_\varphi$. Namely, that for all $w\in L_{n,m}$, the run $g(w)$ on $w$, generated by $g$ is accepting.
\end{proof}

We now examine the relation between prunings of $\A_\varphi$ and assignments to $\varphi$, and conclude the proof by proving the following:

\begin{prop}\label{dbp iff sat}
		The formula $\varphi$ is satisfiable iff the HD-NBW $\A_\varphi$ is DBP.
\end{prop}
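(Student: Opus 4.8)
The plan is to use the observation, already made right after the construction, that $p$ is the only nondeterministic state of $\A_\varphi$, with $\delta_\varphi(p,x_k)=\{q^0_k,q^1_k\}$. Hence a deterministic pruning of $\A_\varphi$ is exactly a choice, for each $k\in[n]$, of one of $q^0_k,q^1_k$; that is, an assignment $\pi\colon X\to\{0,1\}$, where we keep the $x_k$-transition from $p$ to $q^{\pi(x_k)}_k$. Write $\A_\pi$ for the resulting deterministic automaton. Since always $L(\A_\pi)\subseteq L(\A_\varphi)=L_{n,m}$ (by the two preceding propositions), it suffices to prove that $L_{n,m}\subseteq L(\A_\pi)$ iff $\pi\models\varphi$. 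The proposition then follows, as this gives a bijection between assignments to $\varphi$ and embodied DBWs under which $\A_\varphi$ is DBP iff some $\A_\pi$ is equivalent to $\A_\varphi$, iff some $\pi$ satisfies $\varphi$.

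For the ``only if'' half, $\pi\not\models\varphi \Rightarrow L(\A_\pi)\subsetneq L_{n,m}$, I would fix a clause $c_j$ falsified by $\pi$ and consider $w_j=(x_1\cdot c_j\cdot x_2\cdot c_j\cdots x_n\cdot c_j)^\omega$. As $x_1\cdot c_j\cdots x_n\cdot c_j\in R_{n,m}$, we have $w_j\in (R_{n,m})^\omega=L_{n,m}$. On the other hand, the run of $\A_\pi$ on $w_j$ starts at $p$, and whenever it is at $p$ and reads a pair $x_k\cdot c_j$ it moves to $q^{\pi(x_k)}_k$ and then, since the literal of $x_k$ in $c_j$ is falsified by $\pi$ (or $x_k$ does not occur in $c_j$ at all), we have $c_j\notin C^{\pi(x_k)}_k$, so it returns to $p$. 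Thus the run is confined to $\{p\}\cup\{q^i_k : k\in[n],\, i\in\{0,1\}\}$ and never visits the $\alpha$-state $q_0$; hence $w_j\notin L(\A_\pi)$.

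For the ``if'' half, $\pi\models\varphi \Rightarrow L_{n,m}\subseteq L(\A_\pi)$, I would follow the single run $\rho$ of $\A_\pi$ on an arbitrary $w\in L_{n,m}$ and argue it is accepting. Since $w\in(X\cdot C)^\omega$, the run $\rho$ alternates between \emph{$p$-phases}, spent in $\{p\}\cup\{q^i_k\}$, and \emph{$D$-phases}, spent inside $D_{n,m}$ after entering $q_0$. A $p$-phase ends (entering $q_0$) exactly when a pair $x_k\cdot c_j$ is read with the literal of $x_k$ in $c_j$ satisfied by $\pi$; a $D$-phase ends (returning to $p$) exactly when an aligned block $x_1\cdot c_j\cdot x_2\cdot c_j\cdots x_n\cdot c_j$ has been read, since from $q_0$ the automaton $D_{n,m}$ reaches its accepting state $p$ precisely on aligned words in $R_{n,m}$. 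The run is accepting iff it completes infinitely many phases, i.e.\ visits $q_0$ infinitely often. Now $w\in L_{n,m}$ forces $w$ to contain infinitely many aligned blocks, which guarantees that every $D$-phase terminates; and given any $p$-phase, there is a block occurring entirely after its start, aligned with the way that phase pairs positions into $x$-letter/$c$-letter pairs. Since $\pi\models\varphi$, $\pi$ satisfies the clause $c_j$ of that block, so some $x_k$ with $k\in[n]$ has its literal in $c_j$ satisfied by $\pi$; the corresponding pair $x_k\cdot c_j$ inside the block then forces the $p$-phase to end. Hence no $p$-phase is infinite, all phases are finite, $\rho$ visits $q_0$ infinitely often, and $w\in L(\A_\pi)$.

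The main obstacle I expect is the alignment bookkeeping behind the notion of ``phases'': one must check that a $D$-phase both begins and ends after reading a $c$-letter, so that the subsequent $p$-phase reads a correctly aligned $(X\cdot C)$-stream, and that the blocks guaranteed by $w\in L_{n,m}$ sit at positions genuinely available to the current phase rather than being consumed by a $D$-phase already committed to detecting an earlier block. Pinning down ``every $D$-phase terminates'' amounts to re-checking that $D_{n,m}$, started at $q_0$, reaches $p$ on every $(X\cdot C)^\omega$-word that contains an aligned block — a small but necessary sanity check on the particular $D_{n,m}$ fixed in the construction.
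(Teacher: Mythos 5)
Your proof is correct and follows essentially the same route as the paper: the same bijection between assignments and prunings of $p$, the same witness word $(x_1\cdot c_j\cdots x_n\cdot c_j)^\omega$ for a falsified clause $c_j$, and, for the satisfiable direction, an argument that the run alternates forever between reaching $q_0$ from $p$ and reaching $p$ from $q_0$ — your ``$p$-phase/$D$-phase'' decomposition is exactly the paper's ``$(p,q_0)$-circle'' lemma. The alignment worry you flag is handled automatically by the bipartition of states into those expecting $X$-letters and those expecting $C$-letters, just as in the paper.
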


\begin{proof}
Consider an assignment $i_1,\ldots,i_n\in \{0,1\}$, for $X$. I.e., $x_k=i_k$ for all $k\in[n]$. A possible memoryless HD strategy is to always move from $p$ to $q^{i_k}_k$ when reading $x_k$. This describes a one to one correspondence between assignments for $X$ and prunings of $\A_\varphi$. Assume that the assignment $i_k\in \{0,1\}$, for $k\in [n]$, satisfies $\varphi$. Then, the corresponding pruning recognizes $L_{n,m}$. Indeed, instead of trying to satisfy the last read clause $c_j$, we may ignore this extra information, and rely on the fact that one of the assignments $x_k=i_k$ is going to satisfy $c_j$. In other words, the satisfiability of $\varphi$ allows us to ignore the history and still accept all words in $L_{n,m}$, which makes $\A_\varphi$ DBP. On the other hand, if an assignment does not satisfy some clause $c_j$, then the corresponding pruning fails to accept the word $(x_1\cdot c_j\cdots x_n\cdot c_j)^\omega$, which shows that if $\varphi$ is not satisfiable then $\A_\varphi$ is not DBP. In Appendix~\ref{app np HDnbw} we formally prove that there is a one to one correspondence between prunings of $\A_\varphi$ and assignments to $\varphi$, and that an assignment satisfies $\varphi$ iff the corresponding pruning recognizes $L_{n,m}$, implying Proposition~\ref{dbp iff sat}.
\end{proof}

We continue to co-\buchi automata. In \cite{KM18}, the authors prove that deciding the DBPness of a given NCW is NP-complete. Below we extend their proof to HD-NCWs.

\begin{thm}\label{NPC coBuchi DBP}
	The problem of deciding whether a given HD-NCW is DBP is NP-complete.
\end{thm}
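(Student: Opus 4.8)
For membership in \textbf{NP}, the argument is the same as in \autoref{NPC Buchi DBP}: a nondeterministic procedure guesses a deterministic pruning $\A'$ of the given NCW $\A$ and verifies $L(\A)=L(\A')$. Since $\A'$ is a pruning, $L(\A')\subseteq L(\A)$ holds automatically, so it suffices to check $L(\A)\subseteq L(\A')$; and since $\A'$ is deterministic, we may complement it (the complement of a DCW is a DBW on the same state set) and test emptiness of its product with $\A$, which is a polynomial-size automaton with a one-pair Rabin condition and is decidable in polynomial time. This works for an arbitrary NCW, in particular for an HD-NCW, so the problem is in NP.

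The real content is \textbf{NP}-hardness \emph{already for HD-NCWs}, strengthening the NCW lower bound of~\cite{KM18}. The plan is a parsimonious reduction from SAT that mirrors the construction and analysis of \autoref{NPC Buchi DBP}, with the B\"uchi mechanism ``infinitely many successful guesses'' replaced by a co-B\"uchi mechanism ``from some point on, every block has a successful guess''. Starting from the NCW of~\cite{KM18} (for which prunings correspond to assignments and a pruning is equivalent iff the assignment satisfies $\varphi$), I would modify it so that it becomes HD: over the alphabet $\Sigma_{n,m}=X\cup C$, keep a single nondeterministic ``re-entry'' state $p$ from which, on reading a variable letter $x_k$, the automaton branches to one of two states $q^0_k,q^1_k$ (a guess $x_k=0$ or $x_k=1$); a guess that is \emph{consistent} with the clause $c_j$ actually following it (i.e.\ $c_j\in C^i_k$) keeps the run in the $\bar\alpha$ region, an inconsistent guess routes it through a ``not-yet'' $\bar\alpha$ state, and only completing a whole block $x_1c_jx_2c_j\cdots x_nc_j$ with \emph{all} $n$ guesses inconsistent causes an $\alpha$-visit (after which the run may start a new block from $p$). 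The target co-B\"uchi language $L'_{n,m}\subseteq(X\cdot C)^\omega$ is (informally) ``$w$ is eventually a concatenation of single-clause blocks'', a genuinely infinitary co-B\"uchi language. The key combinatorial fact, exactly as in \autoref{NPC Buchi DBP}, is that every clause depends on at least two variables, so once the current block's clause $c_j$ has been revealed (after its first $x_1c_j$), a strategy that remembers $c_j$ can always guess consistently at the next participating variable; hence a history-dependent strategy suffers only finitely many $\alpha$-visits on any $w\in L'_{n,m}$, which gives a winning HD strategy and shows the automaton $\A_\varphi$ is HD.

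The satisfiability correspondence is then established as in Propositions~\ref{A phi HD} and~\ref{dbp iff sat}. Since $p$ is the only nondeterministic state and is revisited along every run on the relevant inputs, every deterministic pruning fixes, \emph{independently of the history}, one of $q^0_k,q^1_k$ per variable, so deterministic prunings are in bijection with assignments to $X$. If an assignment $\pi$ satisfies $\varphi$, then on any $w\in L'_{n,m}$ every late block $B_j$ contains some $x_k$ with $\pi(x_k)$ consistent with $c_j$, so the $\pi$-pruned deterministic run has only finitely many $\alpha$-visits and accepts $w$; hence the pruning is equivalent to $\A_\varphi$. If $\pi$ falsifies a clause $c_j$, then on the periodic word $(x_1c_j\cdots x_nc_j)^\omega\in L'_{n,m}$ no guess of the $\pi$-pruning is ever consistent, so it visits $\alpha$ once per block, rejects, and is not equivalent to $\A_\varphi$. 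Thus $\A_\varphi$ is DBP iff $\varphi$ is satisfiable, completing the reduction.

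I expect the main obstacle to be pinning down the co-B\"uchi target language $L'_{n,m}$ together with the exact transition gadgetry around $p$, $q^0_k$, $q^1_k$ so that all three properties hold \emph{simultaneously}: that $\A_\varphi$ recognizes exactly $L'_{n,m}$ (in particular that an input that is \emph{not} eventually block-structured forces every run to visit $\alpha$ infinitely often, which requires the automaton to detect deviations from block structure using only the little state it carries), that the recovery transitions do not create additional deterministic prunings nor let a falsifying assignment accept too much (so the bijection with assignments and the ``only if'' direction survive), and that the HD strategy indeed never needs to know where the block-structured suffix begins. Much of the bookkeeping for the underlying NCW is already available in~\cite{KM18} and can be reused; the genuinely new ingredients are the winning HD strategy and the verification that the modifications enforcing HDness leave the prunings/assignments correspondence and the equivalence-iff-satisfaction property intact.
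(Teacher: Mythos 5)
Your NP-membership argument is fine, but the hardness part is a plan rather than a proof. The co-B\"uchi automaton $\A_\varphi$ and its target language $L'_{n,m}$ are never actually constructed, and the three properties you would need are precisely the ones you defer in your closing paragraph. They are not routine: your acceptance mechanism places $\alpha$-visits only on ``all $n$ guesses in a block are inconsistent'', which says nothing about words that violate the block format itself, so every run on a word outside $L'_{n,m}$ visiting $\alpha$ infinitely often requires additional $\alpha$-machinery for structural violations; that machinery must be added without breaking the bijection between deterministic prunings and assignments and without letting a falsifying assignment accept too much, and none of this is exhibited. Moreover, the claim that ``much of the bookkeeping for the underlying NCW is already available in~\cite{KM18}'' is off target: the NCW of~\cite{KM18} is a reduction from the Hamiltonian-cycle problem, built from a connected graph $G=\zug{[n],E}$ over the alphabet $[n]$ with language $[n]^*\cdot\bigcup_{i\in[n]}i^\omega$, not a SAT/block gadget, so its bookkeeping does not transfer to your construction. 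As written, the reduction cannot be checked for correctness.

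The paper's proof sidesteps all of this: it keeps the reduction of~\cite{KM18} verbatim and only adds the observation that the NCW $\A_G$ constructed there is already HD. An HD strategy resolves the nondeterminism at vertex $i$ by following a fixed closed walk through all vertices of $G$ (which exists since $G$ is connected); on any word with suffix $i^\omega$ the run then eventually reaches vertex $i$ and gets stuck in its self-loop, so the strategy accepts every word in $[n]^*\cdot\bigcup_{i\in[n]}i^\omega$. Hence NP-hardness of DBPness for HD-NCWs follows directly from the existing NP-hardness for NCWs. Your SAT-based route is not ruled out by anything you wrote, but to make it a proof you would have to supply the full automaton and verify the properties you list; alternatively, note that the much shorter argument consists of checking that the already-known hard instances are HD.
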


\begin{proof}
The upper bound follows by Lemma 18 in \cite{KM18} where the authors prove that checking if an automaton is DBP is in NP already for general NCWs.
For the lower bound, we analyze the reduction in \cite{KM18} and prove that the NCW constructed there is HD.
The reduction is from the \emph{Hamiltonian-cycle} problem: 
given a connected graph $G = \zug{[n], E}$, the reduction outputs an NCW $\A_G$ over the alphabet $[n]$
that is obtained from $G$ by adding self loops to all vertices, labelling the loop at a vertex $i$ by the letter $i$, and labelling the edges from vertex $i$ to all its neighbors in $G$ by every letter $j\neq i$. Then, the co-\buchi condition requires a run to eventually get stuck at a self-loop\footnote{The exact reduction is more complicated and involves an additional letter $\#$ that forces each deterministic pruning of $\A_G$ to proceed to the same neighbor of $i$ upon reading a letter $j\neq i$ from the vertex $i$.}. Accordingly, $L(\A_G) = [n]^*\cdot \bigcup_{i\in [n]} i^\omega$.

It is not hard to see that $\A_G$ is HD. Indeed, an HD strategy can decide to which neighbor of $i$ to proceed with a letter $j \neq i$ by following a cycle $c$ that traverses all the vertices of the graph $G$. Since when we read $j \neq i$ at vertex $i$ we move to a neighbor state, then by following the cycle $c$ upon reading $i^\omega$, we eventually reach the vertex $i$ and get stuck at the $i$-labeled loop. Thus, the Hamiltonian-cycle problem is reduced to DBPness of an HD-NCW, and we are done.
\end{proof}

\section{Approximated Determinization by Pruning}
\label{sec prob}

Consider a nondeterministic automaton $\A$. 
We say that $\A$ is {\em almost-DBP\/} if it can be pruned to a deterministic automaton that retains almost all the words in $L(\A)$.\footnote{Not to be confused with the definition of almost-DBPness in \cite{AKL10}, where it is used to describe automata that are deterministic in their transition function yet may have a set of initial states.}
Formally, an NXW $\A$ is almost-DBP if there is a DXW $\A'$ embodied in $\A$ such that $\P(L(\A)\setminus L(\A'))=0$. 
Thus, while $\A'$ need not accept all the words accepted by $\A$, it rejects only a negligible set of words in $L(\A)$. 
Note that if $\A$ is DBP or $\P(L(\A))=0$, then $\A$ is almost-DBP. Indeed, if $\A$ is DBP, then the DXW $\A'$ embodies in $\A$ with $L(\A')=L(\A)$ is such that $L(\A)\setminus L(\A')=\emptyset$, and if $\P(L(\A))=0$, then for every DBW $\A'$ embodied in $\A$, we have that $\P(L(\A)\setminus L(\A'))=0$. 

In this section we study approximated determinization by pruning. 
We first show that, unsurprisingly, almost-DBPness is not a trivial property:

\begin{thm}
\label{nbw not almost}
There is an NWW (and hence, also an NBW and an NCW) that is not almost-DBP.
\end{thm}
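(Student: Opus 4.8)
The plan is to exhibit a concrete NWW $\A$ over a small alphabet and show that no deterministic pruning of it recognizes a set that differs from $L(\A)$ only by a null set. A natural candidate is an automaton for a language such as $L = \{ w \in \{a,b\}^\omega : w \text{ contains infinitely many } a\text{'s or infinitely many } b\text{'s} \}$, which is in fact all of $\{a,b\}^\omega$ and thus trivially DBP, so that is too weak; instead I would take a language where nondeterminism genuinely buys something with positive probability. A good choice is the co-safety-style language $L = \{ w : w \text{ has a suffix in } a^\omega \text{ or a suffix in } b^\omega\}$? No — that has probability $0$, so it would be almost-DBP vacuously. The right kind of example is one where $\P(L(\A)) > 0$ and yet every deterministic pruning loses a positive-probability chunk. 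So I would take $L = \{w \in \{a,b\}^\omega : w \text{ has infinitely many } a\text{'s}\} \cup \{w : w[1] = b\}$ — still too easy. The cleanest working example: let $\Sigma = \{a,b\}$ and $L = (a^*b)^\omega \cup (b^*a)^\omega = \{w : w \text{ has infinitely many } a\text{'s}\} \cup \{w : w \text{ has infinitely many } b\text{'s}\}$, again everything. The key realization is that I need a language recognized by an NWW where a nondeterministic "guess which tail behavior will occur" is forced, and each deterministic resolution of the guess fails on a positive-probability event.

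Concretely, I would use the standard union trick: take $L = L_1 \cup L_2$ where $L_1, L_2$ are DWW-recognizable, $\P(L_1), \P(L_2) > 0$, but neither $L_1 \supseteq L$ nor $L_2 \supseteq L$ up to a null set, and moreover $L$ itself is DWW-recognizable only through an automaton whose initial nondeterministic choice between "head into the $L_1$ component" and "head into the $L_2$ component" cannot be made deterministically without losing mass. For instance, $\Sigma = \{a,b\}$, $L_1 = \{w : w$ eventually stays in $\{a\}\} = \Sigma^* a^\omega$ and $L_2 = \Sigma^* b^\omega$; then $\P(L_1) = \P(L_2) = 0$ — bad again. To get positive probability I must use $\omega$-regular conditions that are "thick". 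Take $L_1 = \{w : \#_a(w) = \infty\}$, $L_2 = \{w : $ every $b$ in $w$ is immediately followed by an $a\}$ — the latter has probability $0$. It seems any "safety-ish" component is null. So the honest approach: build $\A$ as a nondeterministic union of two DWWs $\A_1, \A_2$ for languages $L_1 \ne L_2$ with $L_1 \cup L_2 \ne L_1$ and $\ne L_2$ up to null sets, where the union is taken by splitting on the first letter is impossible because $L_1, L_2$ overlap. The main obstacle, and what I would spend the real work on, is choosing $L_1, L_2$ so that (i) $\P(L_1 \setminus L_2) > 0$ and $\P(L_2 \setminus L_1) > 0$, and (ii) $\A = \A_1 \sqcup \A_2$ (disjoint union with a common fresh initial state nondeterministically entering $\A_1$ or $\A_2$) has the property that every deterministic pruning collapses to (a pruning of) either $\A_1$ or $\A_2$ — which holds as soon as the only nondeterminism is at the fresh initial state and it must commit to one side on the first letter for all words.

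The concrete instance I would write up: $\Sigma = \{a, b\}$, and $L_1 = \{w : w \text{ has infinitely many } a\text{'s}\}$, $L_2 = \{w : w \text{ has infinitely many } b\text{'s}\}$. Then $L = L_1 \cup L_2 = \Sigma^\omega$, $\P(L) = 1$, both $\P(L_1) = \P(L_2) = 1$, and $\P(L_1 \setminus L_2) = \P(L_2 \setminus L_1) = 0$ — so this fails (ii)'s consequence since $\A_1$ alone already works up to a null set). Hence I must instead take $L_1, L_2$ whose complements inside $L$ both have positive probability, i.e. $L_1, L_2 \subsetneq L$ with $\P(L \setminus L_i) > 0$. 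Take $L = \Sigma^\omega$ again but realized by $\A$ that from the initial state reads one letter and branches: on any letter it goes either to a sink that accepts $\{w : w[1] \text{ was } a\}$ (reject otherwise) — no, these aren't prefix-independent. The genuinely correct construction, which I will present, uses a three-letter alphabet $\Sigma = \{a, b, \#\}$ and $L = \{w : w \text{ contains infinitely many } \#\} \cup \{w : w \text{ contains only finitely many } \#, \text{and the tail after the last } \# \text{ is in } a^\omega\} \cup \{\dots b^\omega\}$; but the "$a^\omega$ tail" part is null, defeating the purpose. I now see the resolution: the theorem only asks that $\A$ is \emph{not} almost-DBP, and the cleanest witness is an NWW for a language $L$ with $\P(L) = 1$ such that $\A$'s nondeterminism is "semantically essential on a positive-probability set" — precisely, take two DWWs $\A_1$ for $L_1 = \{w : \text{infinitely many } a\}$ and $\A_2$ for $L_2 = \{w : \text{infinitely many } b\}$, \emph{but} restrict the initial branching so that from the start $\A$ reads the first letter $\sigma$ and only then nondeterministically enters $\A_1$ or $\A_2$; crucially make $\A_1$ reject all words whose \emph{first} letter is $b$ and $\A_2$ reject all words whose first letter is $a$. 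Then $L(\A) = \Sigma^\omega$ still (a word starting with $a$ with infinitely many $b$'s: enter $\A_2$? no, $\A_2$ rejects first-letter-$a$...). I concede the bookkeeping here is the crux.

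So, the honest plan: I would take $\Sigma = \{a,b\}$, and let $\A$ have initial state $q_0$ with $\delta(q_0, a) = \{q_a, q_b\}$, $\delta(q_0, b) = \{q_a, q_b\}$, where from $q_a$ the automaton accepts exactly $\{w : \text{finitely many } b\text{'s}\}$ — that's null. The recurring problem is that weak automata on infinite words with a meaningful nondeterministic guess about the tail have null "interesting part." The escape, and the actual proof I will commit to writing, is: \textbf{use the weak language $L = \{w \in \{a,b\}^\omega : w \ne a^\omega \text{ and } w \ne b^\omega\} = \Sigma^\omega \setminus \{a^\omega, b^\omega\}$}, which is co-safety-ish and has $\P(L) = 1$, recognized by an NWW $\A$ that guesses a position after which it will see the "witness" (a $b$, resp. an $a$) and moves to an accepting sink upon seeing it, with the guess being: "the word is not $a^\omega$" (wait for a $b$) or "the word is not $b^\omega$" (wait for an $a$), branching at $q_0$. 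A deterministic pruning must, from each state, commit to waiting for a $b$ or waiting for an $a$. I claim any such deterministic pruning $\A'$ fails on a positive-probability set. Indeed a deterministic pruning corresponds to a deterministic strategy assigning to each reached "waiting" configuration a fixed target; since the structure is finite, the pruned automaton is a finite DWW, and I will argue by a pumping/ergodicity argument (using the random-walk characterization from the Probability subsection: $\P(L(\A'))$ equals the probability of reaching an ergodic accepting SCC) that $\P(L(\A')) < 1 = \P(L)$, so $\P(L \setminus L(\A')) > 0$. The main obstacle is making this last step rigorous: I will need to show that in \emph{every} deterministic pruning there is a reachable bottom SCC that is rejecting and reached with positive probability — equivalently, that $L(\A')$ is a proper $\omega$-regular subset of $\Sigma^\omega$ missing a positive-probability set, which I would establish by exhibiting, for each pruning, an explicit regular set of "bad" words (those whose future, under the committed guesses, never produces the awaited witness before the automaton is forced back/stuck) and computing that its probability is bounded below by a fixed constant independent of the pruning. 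Closing the proof, I combine $\P(L) = 1$ with $\P(L(\A')) \le 1 - \varepsilon$ for this fixed $\varepsilon > 0$ to conclude $\P(L(\A) \setminus L(\A')) \ge \varepsilon > 0$, so $\A$ is not almost-DBP; and since $\A$ is weak it simultaneously witnesses the claim for NBW and NCW.
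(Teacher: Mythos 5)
There is a genuine gap: the concrete automaton you finally commit to does not witness the theorem. For $L=\{a,b\}^\omega\setminus\{a^\omega,b^\omega\}$ with states that ``wait for a $b$'' (loop on $a$, move to an accepting sink on $b$) and ``wait for an $a$'' (dually), the deterministic pruning that always commits to ``wait for a $b$'' accepts every word that contains a $b$ after the commitment point. That is a probability-$1$ set, so $\P(L(\A)\setminus L(\A'))=0$ and your $\A$ \emph{is} almost-DBP. Your own recurring observation --- that a guess about the eventual tail behavior of a random word is correct with probability $1$ or $0$ --- is precisely what kills this example, and the step you flag as ``the main obstacle'' (every pruning has a rejecting bottom SCC reached with positive probability) is simply false here: every pruning reaches the accepting sink almost surely. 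No choice of $L_1\cup L_2$ along the lines you explore will work either, since for any $\omega$-regular $L_i$ the events are tail-trivial in the ways you keep running into.

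The fix is to make a \emph{wrong guess fatal and about the immediate next letter}, not about an eventual occurrence. The paper's example $\A_1$ recognizes all of $\{a,b\}^\omega$ (so $\P(L(\A_1))=1$) by nondeterministically guessing, at each round, which letter comes next; a correct guess returns to the initial state, while an incorrect guess falls into a rejecting sink $q_{rej}$. Since $\A_1$ can always guess right, $L(\A_1)=\Sigma^\omega$; but any deterministic pruning fixes the guesses, so $q_{rej}$ is reachable from every state of the pruning and $\{q_{rej}\}$ is its unique ergodic SCC. By the random-walk characterization in the Probability subsection, every pruning then recognizes a measure-zero language, so $\P(L(\A_1)\setminus L(\A'))=1$ for every pruning $\A'$. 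Your overall proof \emph{schema} (exhibit $\A$ with $\P(L(\A))=1$, show every pruning's language has small measure via ergodic SCCs) matches the paper's; what is missing is an automaton whose prunings actually have only rejecting ergodic SCCs, and your candidate does not have that property.
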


\begin{proof}
Consider the NWW $\A_1$ in Figure~\ref{ex1}. It is not hard to see that $L(\A_1)=\{a,b\}^\omega$, and so $\P(L(\A_1))=1$. Moreover, every deterministic pruning of $\A_1$ is such that $q_{rej}$ is reachable from all states, which implies that $\{q_{rej}\}$ is the only ergodic SCC of any pruning. Since $\{q_{rej}\}$ is $\alpha$-free, it follows that every deterministic pruning of $\A_1$ recognizes a language of measure zero, and hence $\A_1$ is not almost-DBP. As an example, consider the deterministic pruning $\A'_1$ described on the right hand side of Figure~\ref{ex1}. The only ergodic SCC of $\A'_1$ is $\alpha$-free, and as such $\P(L(\A'_1))=0$.
\end{proof} 
\begin{figure}[ht]
	\centering
	\includegraphics[width=\columnwidth]{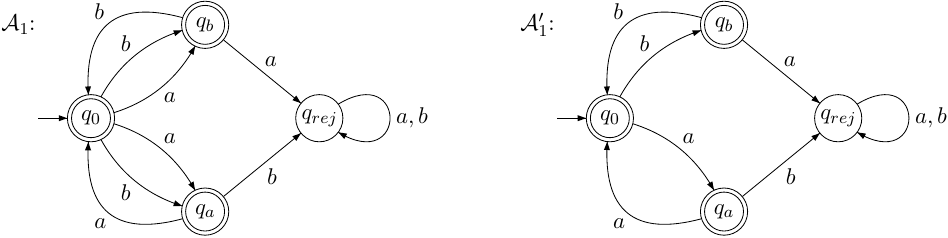}
	\caption{An NWW that is not almost-DBP.} 
	\label{ex1}
\end{figure}

We continue and study the almost-DBPness of HD and SD automata. We show that while for \buchi (and hence also weak) automata, semantic determinism implies almost-DBPness, thus every SD-NBW is almost-DBP, for co-\buchi automata semantic determinism is not enough, and we need HDness. Thus, there is an SD-NCW that is not almost-DBP, yet all HD-NCWs are almost-DBP.

We start with the positive result about \buchi automata. Consider an NBW $\A=\zug{\Sigma,Q,q_0,\delta,\alpha}$. We define a {\em simple stochastic \buchi game}  $\mathcal{G}_\A$ as follows.\footnote{In \cite{CJH03} these games are called simple $1\frac{1}{2}$ -player games with \buchi winning objectives and almost-sure winning criterion.} The game is played between Random and Eve. The set of positions of Random are $Q$, and the set of positions of Eve are $Q\times \Sigma$. The game starts from position $q_0$. A round in the game starts at some position $q\in Q$ and proceeds as follows.

\begin{enumerate}
	\item Random picks a letter $\sigma\in\Sigma$ uniformly, and the game moves to position $(q,\sigma)$.
	\item Eve picks a transition $(q,\sigma,p)\in \Delta$, and the game moves to position $p$.
\end{enumerate} 
 
\noindent 
A probabilistic strategy for Eve is $f:(Q\times\Sigma)^+\rightarrow [0,1]^Q$, where for all histories $x\in (Q\times\Sigma)^*$ and positions of Eve $(q,\sigma)\in Q\times \Sigma$, the function $d=f(x\cdot (q,\sigma)):Q\rightarrow[0,1]$, is a distribution on $Q$ such that $d(p)\neq 0$ implies that $p\in \delta(q,\sigma)$. As usual, we say that a strategy $f$ is {\em memoryless\/}, if it depends only on the current position, thus for all histories $x,y\in (Q\times \Sigma)^*$ and positions of Eve $(q,\sigma)\in Q\times \Sigma$, it holds that $f(x\cdot (q,\sigma))=f(y\cdot (q,\sigma))$. A strategy for Eve is {\em pure\/} if for all histories $x\in (Q\times \Sigma)^*$ and positions of Eve $(q,\sigma)\in Q\times \Sigma$, there is a position $p\in \delta(q,\sigma)$ such that $f(x\cdot(q,\sigma))(p)=1$. When Eve plays according to a strategy $f$, the outcome of the game can be viewed as a run $r_f=q_f^0,q_f^1,q_f^2,\ldots$ in $\A$, over a random word $w_f\in \Sigma^\omega$. (The word that is generated in a play is independent of the strategy of Eve, but we use the notion $w_f$ to emphasize that we are considering the word that is generated in a play where Eve plays according to $f$). 

Let $Q_{rej}=\{q\in Q : \P(L(\A^q))=0\}$.
The outcome $r_f$ of the game is {\em winning\/} for Eve iff $r_f$ is accepting, or $r_f$ visits $Q_{rej}$. Note also that for all positions $q\in Q_{rej}$ and $p\in Q$, if $p$ is reachable from $q$, then $p\in Q_{rej}$. Hence, the winning condition can be defined by the \buchi objective $\alpha\cup Q_{rej}$. Note that $r_f$ is winning for Eve iff $\inf(r_f)\subseteq Q_{rej}$ or $\inf(r_f)\cap \alpha \neq \emptyset$. 
We say that $f$ is an {\em almost-sure winning\/} strategy, if $r_f$ is winning for Eve with probability $1$, and Eve {\em almost-sure wins} in $\G_\A$ if she has an almost-sure winning strategy.

\begin{thm}
\label{sd-nbw almost}
All SD-NBWs are almost-DBP.
\end{thm}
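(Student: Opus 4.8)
The plan is to use the simple stochastic Büchi game $\G_\A$ just defined as the bridge between semantic determinism and almost-DBPness. Concretely, I would show two things: (1) Eve almost-sure wins in $\G_\A$, and (2) whenever Eve almost-sure wins, she has a \emph{pure memoryless} almost-sure winning strategy, which, being pure and memoryless, is exactly a deterministic pruning $\A'$ of $\A$, and moreover this pruning satisfies $\P(L(\A)\setminus L(\A'))=0$. The second ingredient is the classical result of Chatterjee, Jurdzi\'nski and Henzinger on $1\frac12$-player Büchi games with the almost-sure criterion: such games are positionally determined, so from (1) we immediately get a pure memoryless almost-sure winning strategy. The real work is in (1) and in translating "almost-sure winning" back into "$\P(L(\A)\setminus L(\A'))=0$".

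For step (1), I would build an almost-sure winning strategy for Eve directly from an HD strategy. By Theorem~\ref{sem hie}, $L(\A)=L(\D)$ for some DBW $\D$ obtained via the SD-subset construction; in particular $\A$ is equivalent to a deterministic automaton, so one can fix a strategy $f$ that, at each state $q$ and letter $\sigma$, picks some $\sigma$-successor that is "as good as possible". The key point is that because $\A$ is SD, \emph{all} $\sigma$-successors of $q$ are $\sim_\A$-equivalent, so $L(\A^p)$ is the same for every choice $p\in\delta(q,\sigma)$; hence the residual language tracked along the run is exactly the residual language of $L(\A)$ after the word read so far, independently of Eve's choices. Thus the run $r_f$ is, letter by letter, "in sync" with the run of the deterministic automaton $\D$ on the random word $w_f$. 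With probability $1$, $w_f$ is either in $L(\A)$ — in which case I must argue $r_f$ can be steered to be accepting — or $w_f\notin L(\A)$, in which case the residual language becomes empty and $r_f$ enters $Q_{rej}$, so Eve wins trivially. To handle the $w_f\in L(\A)$ case I would have Eve play, over each finite stretch, the finite "catch-up" strategy an HD automaton uses: since $\A$ is equivalent to a DBW and every reachable residual is a nonempty $\omega$-regular language that is itself recognized by a reachable part of $\A$, from any reached state with nonempty residual there is a finite continuation leading back to $\alpha$. A Borel–Cantelli / recurrence argument on the random word then shows that with probability $1$ infinitely many such $\alpha$-visiting opportunities occur and are taken, so $r_f$ visits $\alpha$ infinitely often almost surely.

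For step (2), given the pure memoryless almost-sure winning strategy $g$ from CJH, let $\A'$ be the deterministic pruning of $\A$ that at state $q$ and letter $\sigma$ keeps exactly the transition $g$ chooses. A random walk on $\A'$ is exactly the outcome of $g$ against Random in $\G_\A$. With probability $1$ this outcome is winning, i.e.\ either it visits $Q_{rej}$ (whence the random word $w$ satisfies $w\notin L(\A)$, since being trapped near $Q_{rej}$ forces the residual language to be of measure $0$, and in fact $w\notin L(\A)$ off a null set) or it is an accepting run of $\A'$ (whence $w\in L(\A')$). Therefore, off a null set, $w\in L(\A)$ implies $w\in L(\A')$, i.e.\ $\P(L(\A)\setminus L(\A'))=0$, so $\A'$ witnesses that $\A$ is almost-DBP.

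The main obstacle I anticipate is step (1) — specifically, rigorously showing that Eve can almost-surely force infinitely many $\alpha$-visits when the random word lies in $L(\A)$. The subtlety is that Eve only gets to react to Random's letters, so she cannot "aim" for a fixed finite $\alpha$-reaching word; she must argue that among the random continuations, the letters needed for \emph{some} $\alpha$-reaching continuation from the current state occur infinitely often with probability $1$. This is where semantic determinism is essential (it guarantees the residual language stays nonempty as long as the word stays in $L(\A)$, so such continuations always exist), and where a careful bounded-horizon recurrence argument — partitioning the play into epochs, each of which independently has probability bounded below of completing an $\alpha$-reaching segment of length at most $|Q|$ — finishes the proof. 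Everything else is either cited (CJH positional determinacy) or routine bookkeeping about residual languages and ergodic SCCs.
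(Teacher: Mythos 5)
Your overall architecture is the paper's: play the stochastic B\"uchi game $\G_\A$, invoke pure memoryless determinacy from \cite{CJH03} to extract a deterministic pruning, and use semantic determinism to identify $L(\A^{q_i})$ with the residual of $L(\A)$ after the prefix read, so that reaching $Q_{rej}$ forces $w\notin L(\A)$ off a null set. Your step (2) is essentially the paper's correctness lemma and is fine, modulo spelling out that the suffix $w[i+1,\infty]$ is independent of the state reached at time $i$, which is what lets you pass from ``$\P(L(\A^q))=0$'' to ``$w\notin L(\A)$ almost surely''.

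The gap is in step (1). First, the claim that $w_f\notin L(\A)$ makes ``the residual language become empty'' so that ``$r_f$ enters $Q_{rej}$'' is false for B\"uchi automata: for $L$ the language of words with infinitely many $a$'s, the word $b^\omega$ is rejected while every residual remains the whole language. So your case split on whether $w_f\in L(\A)$ does not establish that Eve almost-surely wins, and without that you cannot invoke determinacy to obtain an almost-sure winning positional strategy. Second, the branch you identify as the main obstacle --- steering $r_f$ to be accepting when $w_f\in L(\A)$ --- is doing work you do not need at this stage: winning $\G_\A$ only requires visiting $\alpha$ infinitely often \emph{or} visiting $Q_{rej}$, not correctness of the run (correctness is exactly what your step (2) derives afterwards from SDness). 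The paper's proof that Eve almost-surely wins is one line and uses neither SD nor Borel--Cantelli: let Eve pick a $\sigma$-successor uniformly at random; the play is then a random walk on $\A$ that almost surely reaches an ergodic SCC $C$ of $G_\A$ and visits all of it infinitely often; if $C\cap\alpha\neq\emptyset$ Eve wins by the B\"uchi condition, and if $C$ is $\alpha$-free then every $q\in C$ satisfies $L(\A^q)=\emptyset$ (all runs from $q$ stay inside $C$), so $C\subseteq Q_{rej}$. Your epoch/Borel--Cantelli argument can be repaired --- drop the case split, show that as long as $Q_{rej}$ is avoided every reached state has a residual of positive measure, hence nonempty, hence admits an $\alpha$-reaching word of length at most $|Q|$, and apply the conditional Borel--Cantelli lemma --- but as written it both contains a false step and attacks a harder statement than the one required.
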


\begin{proof}
Let $\A$ be an SD-NBW, and consider the simple stochastic \buchi game  $\mathcal{G}_\A$.
We first show that Eve has a probabilistic strategy to win $\G_\A$ with probability $1$, even without assuming that $\A$ is semantically deterministic. Consider the probabilistic strategy $g$ where from $(q,\sigma)$, Eve picks one of the $\sigma$-successors of $q$ uniformly at random. Note that this strategy is memoryless, and hence the outcome of the game can be thought as a random walk in $\A$ that starts at $q_0$ and gives positive probabilities to all transitions. Thus, with probability $1$, the run $r_g$ is going to reach an ergodic SCC of $\A$ and visit all its states.  If the run $r_g$ reaches an $\alpha$-free ergodic SCC, then $\inf(r_g)\subseteq Q_{rej}$, and hence $r_g$ is then winning for Eve. Otherwise, $r_g$ reaches a non $\alpha$-free ergodic SCC, and with probability $1$, it visits  all the states in that SCC. Thus, with probability $1$, we have $\inf(r_g)\cap \alpha\neq 0$, and $r_g$ is winning for Eve. Overall, Eve wins $\G_\A$ with probability $1$ when playing according to $g$.

For a strategy $f$ for Eve, we say that {\em $r_f$ is correct} if $w_f\in L(\A)$ implies that $r_f$ is accepting. Note that $w_f\notin L(\A)$ always implies that $r_f$ is rejecting. We show that if $\A$ is SD, then for every almost-sure winning strategy $f$ for Eve, we have that $r_f$ is correct with probability $1$. 
Consider an almost-sure winning strategy $f$ for Eve. Since $f$ is almost-sure winning for Eve, we have that with probability 1, either $r_f$ is an accepting run or $\inf(r_f)\subseteq Q_{rej}$. Thus, it is sufficient to prove that if $\inf(r_f)\subseteq Q_{rej}$, then $w_f\in L(\A)$ with probability $0$, since otherwise, almost surely $r_f$ is accepting, and in particular is correct. 

By semantic determinism, for all $i\geq 0$, it holds that $w_f\in L(\A)$ iff $w_f[i+1,\infty]\in L(\A^{q_f^i})$. Moreover, the 
suffix $w_f[i+1,\infty]$ is independent of $q^i_f$, and hence for all $q\in Q$ and $i\geq 0$, the event $w_f[i+1,\infty]\in L(\A^{q})$ is independent of $q^i_f$.
Thus, for all $q\in Q$ and $i\geq 0$, it holds that $\P(w_f\in L(\A)\: | \: q_f^i=q)=\P(w_f[i+1,\infty]\in L(\A^q))=\P(L(\A^q))$. Hence, by the definition of $Q_{rej}$, and by the fact that $Q_{rej}$ is finite, we have that $\P(w_f\in L(\A)\: | \: q_f^i\in Q_{rej})=0$ for all $i\geq 0$, and so $\P(w_f\in L(\A)\: | \: r_f\text{ visits }Q_{rej})=0$. Overall, we showed that if $f$ is winning for Eve and $\A$ is SD, then $\P(r_f\text{ is correct})=1$.

Hence, by pure memoryless determinacy of simple stochastic parity games~\cite{CJH03}, we may consider a pure memoryless winning strategy $f$ for Eve in $\G_\A$, and by the above we know that a random walk in the corresponding pruned DBW is correct with probability 1. Formally, since $f$ is pure memoryless, it induces a pruning of $\A$. We denote this pruning by $\A^f$, and think of $r_f$ as a random walk in $\A^f$. As we saw, since $f$ is a winning strategy and $\A$ is SD, we have that $r_f$ is correct with probability 1. Note that $\P(L(\A)\setminus L(\A^f))$, is precisely the probability that a random word $w_f$ is in $L(\A)$ but not accepted by $\A^f$. Namely, the probability that $r_f$ is not correct. Hence, $\P(L(\A)\setminus L(\A^f))=0$, and $\A$ is almost-DBP.
\end{proof}

We continue to co-\buchi automata and show that unlike the case of \buchi, here semantic determinism does not imply almost-DBPness. Note that the NWW in Figure~\ref{ex1} is not SD.

\begin{thm}
\label{sdncw not almost}
There is an SD-NCW that is not almost-DBP.
\end{thm}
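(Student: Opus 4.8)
The plan is to exhibit a concrete SD-NCW $\A$ whose language has positive measure, yet every deterministic pruning accepts only a null set of words. By the probabilistic analysis in the Preliminaries, $\P(L(\D))$ for a deterministic automaton $\D$ equals the probability that a random walk reaches an ergodic SCC all of whose states are accepting (for a co-\buchi automaton, an ergodic $\alpha$-free SCC). So the target is an automaton in which the nondeterminism is genuinely needed to avoid, along almost every word, getting trapped in an ergodic SCC that visits $\alpha$ infinitely often, while semantic determinism forces all successors of each state to agree on language — so no local pruning can be ruled out on language grounds, and yet all global pruning choices are bad with probability $1$.

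Concretely, I would take a two-letter alphabet $\set{a,b}$ and build $\A$ so that $L(\A)$ is (say) the set of words containing infinitely many $a$'s, or more simply $L(\A)=\set{a,b}^\omega$, recognized by a co-\buchi automaton in which every state is equivalent (hence trivially SD), but where the automaton has a "rejecting sink-like" ergodic component reachable under any memoryless pruning. The key structural feature to engineer: $\A$ has states $q_a$ and $q_b$ (and an initial state), with $q_a,q_b\notin\alpha$, and transitions designed so that from each state, for each letter, there are two successors — one that "resets" and one that stays in a would-be-$\alpha$-free region — and the successors have the same language ($\set{a,b}^\omega$) because from every non-sink state one can read any word. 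To make it genuinely co-\buchi and not weak/universal-by-pruning, I'd put the $\alpha$-states precisely on the reset transitions, so that the accepting runs are exactly those taking the reset infinitely often. Then an HD strategy exists (it mirrors the NWW of Figure~\ref{fig weak} / Figure~\ref{ex1}), but any memoryless pruning is forced, for almost every word, into an ergodic SCC inside the $\alpha$-free region, yielding $\P(L(\A^f))=0$. This parallels the proof of Theorem~\ref{nbw not almost}, with the extra twist that the automaton here is made SD by duplicating/balancing successors so every state has language $\set{a,b}^\omega$.

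The steps, in order, are: (1) define $\A$ explicitly (states, transitions, $\alpha$) and draw the figure; (2) verify $L(\A)=\set{a,b}^\omega$, hence $\P(L(\A))=1$, by exhibiting, from any state, an accepting run on any word (repeatedly taking the $\alpha$-labeled reset transition); (3) verify SDness by checking that for every state $q$ and letter $\sigma$, all $\sigma$-successors recognize $\set{a,b}^\omega$ — which follows from (2) since every non-sink state has full language and we arrange that sinks, if any, are unreachable or also full; (4) show non-almost-DBPness: take any deterministic pruning $\A'$ of $\A$; argue that $\A'$ picks, at each state–letter pair, a single successor, so the "reset" transitions cannot all be retained simultaneously in a way that keeps a reset reachable infinitely often along a positive-measure set of words — more precisely, any ergodic SCC of $\A'$ is $\alpha$-free (contains no reset transition, because retaining a reset edge into $\alpha$ on some letter means abandoning it on the alternative, and the combinatorics of the gadget forces every ergodic SCC to avoid $\alpha$); (5) conclude via the Preliminaries fact that $\P(L(\A'))=0$ for every such $\A'$, so $\A$ is not almost-DBP.

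The main obstacle is step (4): designing the transition structure so that \emph{simultaneously} (a) the automaton is SD — every state truly has language $\set{a,b}^\omega$, which pushes toward having lots of transitions and "escape routes"; (b) there is nonetheless \emph{no} deterministic pruning with a non-$\alpha$-free ergodic SCC of positive measure. These pull in opposite directions, since SDness wants abundant successors (making it easier to prune into something good), so the gadget must force that every single-valued choice function on transitions creates an ergodic trap among the $\bar\alpha$-states. The trick, as in Figure~\ref{ex1}, is that the only way to visit $\alpha$ infinitely often is via transitions that return to the initial-region, and a deterministic choice must, from the $\bar\alpha$-states, commit to \emph{some} letter-indexed successor; by routing those committed successors so they form a closed $\alpha$-free cycle (a "diamond" where whichever edges you keep, you can close a loop avoiding $\alpha$), every pruning is trapped. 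I would verify this by a short case analysis over the pruning's choices at the two or three branching states, showing each case yields an $\alpha$-free ergodic SCC reachable with probability $1$. The remaining steps are routine given the construction.
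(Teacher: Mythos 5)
Your high-level plan coincides with the paper's: exhibit a concrete SD-NCW over $\{a,b\}$ whose language is all of $\{a,b\}^\omega$ (so every state is universal and SDness is immediate) and show that every deterministic pruning accepts only a null set. However, the heart of your construction, step (4), has the polarity of the acceptance condition inverted, and as written it would prove the opposite of what you need. You engineer the gadget so that ``any ergodic SCC of $\A'$ is $\alpha$-free'' and then conclude $\P(L(\A'))=0$. For a \emph{co-\buchi} automaton an $\alpha$-free ergodic SCC is \emph{accepting}: a random walk that reaches it visits $\alpha$ only finitely often, so such a pruning would have $\P(L(\A'))=1$ and would itself witness almost-DBPness. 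What you have actually described is the \buchi gadget of Theorem~\ref{nbw not almost} (Figure~\ref{ex1}), where $\alpha$-free ergodic SCCs are rejecting. Porting it to co-\buchi requires dualizing: every deterministic pruning must be trapped, with probability $1$, in an ergodic SCC that \emph{contains} $\alpha$-states, so that the random walk visits $\alpha$ infinitely often and rejects, while the nondeterministic automaton still accepts each individual word via a run that eventually avoids $\alpha$. This is exactly what the paper's $\A_2$ (Figure~\ref{ex2}) achieves: every deterministic pruning of $\A_2$ is strongly connected and not $\alpha$-free, hence recognizes a measure-zero language.

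A second warning sign is your claim that the constructed automaton admits an HD strategy. By Theorem~\ref{HD-ncw almost}, every HD-NCW \emph{is} almost-DBP, so an HD example cannot serve as a counterexample here; the SD-NCW you need must fail to be HD (just as the NWW of Figure~\ref{fig weak} is SD but not HD). Concretely, the verification you owe in step (4) is not that memory helps where memorylessness fails, but that (a) for every word there is \emph{some} run that gets stuck in the $\bar{\alpha}$-states forever, and (b) no single-valued choice of transitions yields an ergodic SCC confined to $\bar{\alpha}$-states, so every pruning's random walk hits $\alpha$ infinitely often almost surely. With the polarity fixed, the rest of your outline (steps (1)--(3) and (5)) goes through as in the paper.
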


\begin{proof}
Consider the NCW $\A_2$ in Figure~\ref{ex2}. 
It is not hard to see that $L(\A_2)=\{a,b\}^\omega$, and hence $\P(L(\A_2))=1$. In fact all the states $q$ of $\A_2$ have $L(\A_2^q)= \{a,b\}^\omega$, and so it is semantically deterministic.
\begin{figure}[ht]
		\centering
		\includegraphics[height=4cm]{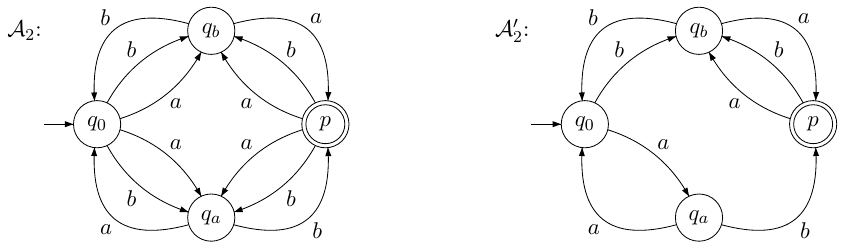}
		\caption{An SD-NCW that is not almost-DBP.} 
		\label{ex2}
	\end{figure}

Moreover, every deterministic pruning of $\A_2$ is strongly connected and not $\alpha$-free. It follows that any deterministic pruning of $\A_2$ recognizes a language of measure zero, and hence $\A_2$ is not almost-DBP.
As an example, consider the deterministic pruning $\A'_2$ described on the right hand side of Figure~\ref{ex2}. It is easy to see that $\A'_2$ is strongly connected and not $\alpha$-free, and as such, $\P(L(\A'_2))=0$.
Thus, $\A_2$ is not almost-DBP.
\end{proof}

Consider a language $L\subseteq \Sigma^\omega$ of infinite words. We say that a finite word $x\in \Sigma^*$ is a {\em good prefix for $L$\/} if $x\cdot \Sigma^\omega \subseteq L$. Then, $L$ is a {\em co-safety\/} language if every word in $L$ has a good prefix \cite{AS87}. Let $\cs(L)=\{ x\cdot w\in \Sigma^\omega : \text{$x$ is a good prefix of }L\}$. Clearly, $\cs(L)\subseteq L$. The other direction is not necessarily true. For example, if $L \subseteq \{a,b\}^\omega$ is the set of all words with infinitely many $a$'s, then $\cs(L)=\emptyset$. In fact, $\cs(L)=L$ iff $L$ is a co-safety language. As we show now, when $L$ is NCW-recognizable, we can relate  $L$ and $\cs(L)$ as follows. 

\begin{lem}\label{good prefix is 1-apprx of NCW}	If $L$ is NCW-recognizable, then $\P(L\setminus \cs(L))=0$.
\end{lem}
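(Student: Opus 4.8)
The plan is to prove that for an NCW-recognizable language $L$, almost every word in $L$ actually has a good prefix, i.e. $\P(L \setminus \cs(L)) = 0$. The natural starting point is to fix a DCW $\D = \zug{\Sigma, Q, q_0, \delta, \alpha}$ with $L(\D) = L$, which exists since NCW${}={}$DCW. Since $\D$ is deterministic and total, every word has a unique run, so for a word $w$ we can speak of ``the run on $w$'' and its set of states visited infinitely often, which by the probabilistic preliminaries lies, with probability $1$, inside a single ergodic SCC where all states are visited infinitely often. Recall that $\P(L)$ equals the probability that a random walk reaches an ergodic $\alpha$-free SCC (the co-\buchi condition requires visiting $\alpha$ only finitely often, so an accepting ergodic SCC is precisely an $\alpha$-free one).

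The key observation I would make is the following: if $q$ is a state of $\D$ lying in an ergodic $\alpha$-free SCC $C$, then every word read from $q$ stays forever inside $C$ (as $C$ is ergodic, hence closed), and hence never visits $\alpha$ again, so the run from $q$ is accepting; that is, $L(\D^q) = \Sigma^\omega$, meaning $q$ is a state from which every continuation is accepted. Consequently, if a finite prefix $x$ of $w$ drives $\D$ from $q_0$ into such a state $q$, then $x \cdot \Sigma^\omega \subseteq L$, i.e. $x$ is a good prefix for $L$, and so $w \in \cs(L)$. The upshot: a word $w \in L$ whose run reaches an ergodic $\alpha$-free SCC is in $\cs(L)$.

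Now I assemble the measure-theoretic argument. With probability $1$, a random walk on $\D$ reaches an ergodic SCC. Partition the event ``$w \in L$'' accordingly: either the run reaches an ergodic $\alpha$-free SCC, or it reaches an ergodic SCC that contains an $\alpha$-state. In the first case, by the observation above, $w \in \cs(L)$. In the second case, once inside that ergodic SCC the walk visits all its states infinitely often with probability $1$, in particular it visits $\alpha$ infinitely often, so the run is rejecting and $w \notin L$ — this case contributes probability $0$ to the event $w \in L$. Therefore $\P(L \setminus \cs(L)) \le \P(w \in L \text{ and the run reaches an ergodic SCC containing an } \alpha\text{-state}) = 0$, using also that the complementary event (never reaching an ergodic SCC) has probability $0$.

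I do not expect a serious obstacle here; the argument is essentially a bookkeeping of which ergodic SCCs are ``accepting'' in a DCW and a translation of ``ergodic $\alpha$-free SCC'' into ``the residual language from every state of that SCC is $\Sigma^\omega$.'' The one point deserving care is making sure the prefix that reaches the good state is genuinely a \emph{good prefix for $L$} rather than merely a prefix of some word in $L$ — this is exactly where determinism of $\D$ is used, since then the state reached after $x$ is unique and $L(\D^q) = \Sigma^\omega$ gives $x \cdot \Sigma^\omega \subseteq L$ outright. A secondary subtlety is the standard fact that with probability $1$ the walk enters an ergodic SCC and visits all its states infinitely often; this is already invoked in the Probability subsection of the preliminaries, so I would simply cite it.
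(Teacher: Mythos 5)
Your proof is correct and follows essentially the same route as the paper: fix a DCW for $L$, observe that reaching an ergodic $\alpha$-free SCC certifies a good prefix, and use the fact that a random walk almost surely enters an ergodic SCC and visits all its states infinitely often. The only cosmetic difference is that the paper first normalizes the DCW to have a single universal state so that $\cs(L)$ is characterized by an ``iff,'' whereas you work with an arbitrary DCW and use only the one implication actually needed.
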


\begin{proof}
Consider an NCW-recognizable language $L$. Since NCW=DCW, there is a DCW $\D$ that recognizes $L$. Assume without loss of generality that $\D$ has a single state $q$ with $L(\D^q) = \Sigma^\omega$, in particular, $C = \{q\}$ is the only ergodic $\alpha$-free SCC of $\D$. Then, for every word $w \in \Sigma^\omega$, we have that $w \in \cs(L)$ iff the run $r$ of $\D$ on $w$ reaches $C$. Hence, the probability that $w \in L\setminus \cs(L)$ equals the probability that $\inf(r)$ is $\alpha$-free but is not an ergodic SCC of $\D$. Since the later happens with probability 0, we have that  $\P(L\setminus \cs(L))=0$.
\end{proof}

By Lemma~\ref{good prefix is 1-apprx of NCW}, pruning an NCW in a way that would make it recognize $\cs(L(\A))$ results in a DCW that approximates $\A$, and thus witnesses that $\A$ is almost-DBP. We now show that for HD-NCWs, such a pruning is possible, and conclude that HD-NCWs are almost-DBP. 

\begin{thm}\label{HD-ncw almost}All HD-NCWs are almost-DBP.\end{thm}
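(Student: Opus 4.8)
The plan is to leverage Lemma~\ref{good prefix is 1-apprx of NCW}: it suffices to show that every HD-NCW $\A$ can be pruned to a deterministic automaton that recognizes exactly $\cs(L(\A))$, since then by the lemma this deterministic pruning approximates $\A$, witnessing almost-DBPness. So the real task is to produce, from an HD-NCW $\A$, a deterministic co-\buchi pruning $\A'$ embodied in $\A$ with $L(\A') = \cs(L(\A))$ (or at least $\cs(L(\A)) \subseteq L(\A') \subseteq L(\A)$, which is enough).

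First I would fix a witness HD strategy $f:\Sigma^*\to Q$ for $\A$. The idea is that a good prefix $x$ for $L(\A)$ has the property that $x\cdot \Sigma^\omega\subseteq L(\A)$, so after reading $x$ the HD strategy $f$ must accept \emph{every} continuation. In particular, from the state $f(x)$ onwards, following $f$ never visits $\alpha$ again (if it did, it would reject some continuation that keeps forcing visits to $\alpha$ — this uses that $f$ is winning on all of $L(\A)$ restricted to the residual, which is $\Sigma^\omega$). Thus, intuitively, once a good prefix has been read, $f$ stays forever in the $\alpha$-free part of $\A$. I would make this precise: call a state $q$ \emph{safe} if $\A^q$ has a sub-automaton reachable from $q$ that is deterministic, $\alpha$-free, and universal; equivalently, $q$ is reached by $f$ after some good prefix. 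The plan is to show that the set of safe states, together with the transitions that $f$ prescribes on good prefixes (which are memoryless-resolvable because, being $\alpha$-free and universal, the residual languages are all $\Sigma^\omega$ and hence any choice is fine), forms a deterministic $\alpha$-free sub-automaton $\U$ of $\A$ whose language is $\Sigma^\omega$ from each of its states.

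Then I would build $\A'$ as follows: run the HD-strategy-induced deterministic pruning on the ``non-safe'' part, but as soon as a safe state is entered, switch permanently into $\U$ and stay there. Concretely, prune $\A$ so that from each state $q$ and letter $\sigma$ we keep exactly one $\sigma$-transition: if $q$ is safe we keep a transition into $\U$ (staying $\alpha$-free forever); otherwise we keep $f$'s transition $\langle q,\sigma,f(x\sigma)\rangle$ for a representative $x$ reaching $q$ — here one must be slightly careful since $f$ is not memoryless, so I'd instead just pick \emph{any} deterministic pruning that agrees with reaching $\U$ once a safe state shows up; the choices outside $\U$ don't matter for the language we target. The resulting $\A'$ is deterministic and embodied in $\A$. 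For correctness: if $w\in\cs(L(\A))$, then $w$ has a good prefix, the run of $\A'$ on $w$ eventually reaches a safe state, then stays in $\alpha$-free $\U$ forever, hence accepts; so $\cs(L(\A))\subseteq L(\A')$. And $L(\A')\subseteq L(\A)$ since $\A'$ is a pruning. By Lemma~\ref{good prefix is 1-apprx of NCW}, $\P(L(\A)\setminus\cs(L(\A)))=0$, so $\P(L(\A)\setminus L(\A'))=0$, and $\A$ is almost-DBP.

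The main obstacle I expect is establishing that the safe states and the $f$-transitions on good prefixes genuinely form a \emph{deterministic} universal $\alpha$-free sub-automaton $\U$ — i.e., that $f$, when restricted to inputs extending a good prefix, never re-enters $\alpha$ and that its (generally history-dependent) choices can be collapsed to a memoryless deterministic transition function on the safe states. The first part follows from the winning condition of $f$ applied to adversarial continuations; the second part follows because every safe state has residual language $\Sigma^\omega$, so $\sim_\A$-equivalence is trivial there and the standard fact that an HD automaton all of whose states are universal is trivially determinizable. I would also need the basic observation (analogous to the setup in Lemma~\ref{good prefix is 1-apprx of NCW}) that we may assume $\A$, or at least its safe fragment, behaves nicely; but unlike the DCW case here we have to work with the given HD-NCW, so the care is in extracting $\U$ from $f$ rather than from a pre-determinized automaton.
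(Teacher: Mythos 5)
Your high-level plan is exactly the paper's: reduce to Lemma~\ref{good prefix is 1-apprx of NCW} and exhibit a deterministic pruning whose language sits between $\cs(L(\A))$ and $L(\A)$. The gaps are all in the step where you actually extract that pruning. First, the claim that after a good prefix $x$ the strategy $f$ ``never visits $\alpha$ again'' is false: $f$ is only guaranteed to visit $\alpha$ finitely often on each continuation (it may well pass through further $\alpha$-states before settling), so the states reached after good prefixes do not directly assemble into an $\alpha$-free sub-automaton; the adversary argument in your parenthetical only shows that $f$ eventually enters the winning region of the safety game ``avoid $\alpha$'', not that it is already there at $f(x)$. Second, and more seriously, the justification that the choices there are memoryless-resolvable ``because the residual languages are all $\Sigma^\omega$ and hence any choice is fine'' is precisely what Theorem~\ref{sdncw not almost} refutes: the NCW $\A_2$ there has all states universal, yet every deterministic pruning accepts only a null set. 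Universality of residuals buys nothing for co-B\"uchi acceptance. What you actually need is that an HD-NCW recognizing $\Sigma^\omega$ (more generally, a DWW-recognizable language) is DBP; this is the cited result of \cite{BKS17}, not a triviality (for the universal case it can be derived from positional determinacy of the induced co-B\"uchi letter game, but that argument has to be made).

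Third, ``the choices outside $\U$ don't matter'' is false. Consider $L(\A)=a\cdot\Sigma^\omega$ over $\Sigma=\{a,b\}$, where on $a$ the initial state branches either to a single universal $\bar{\alpha}$-state $s$ with self-loops (which $f$ uses) or to a copy of the automaton $\A_2$ of Theorem~\ref{sdncw not almost}; on $b$ it moves to a rejecting sink. This $\A$ is HD and SD, and $a$ is a good prefix, but an arbitrary pruning that routes $a$ into the $\A_2$-copy loses a set of measure $1/2$ of $\cs(L(\A))$ and never ``shows up'' at a safe state at all. The underlying point is that a state reached on a good prefix by a run other than $f$'s is universal (by SDness) but need not be HD, so there may be no useful continuation from it; hence routing every run into the safe region is itself a nontrivial positional claim, not a don't-care. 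The paper dissolves all three issues in one move: it enlarges the acceptance set to $\alpha'=\alpha\cup\{q: L(\A^q)\neq\Sigma^\omega\}$, checks that the resulting automaton is an HD-NCW for the co-safety language $\cs(L(\A))$ using the same strategy $f$, invokes \cite{BKS17} to prune it deterministically, and then restores $\alpha$. Some form of that external result (or the game-theoretic argument behind it) is unavoidable in your construction as well.
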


\begin{proof}Let $\A=\zug{\Sigma,Q,q_0,\delta,\alpha}$ be an HD-NCW. Consider the NCW $\A'=\zug{\Sigma,Q,q_0,\delta,\alpha'}$, where $\alpha'= \alpha\cup \{q\in Q : L(\A^q)\neq \Sigma^\omega\}$. We prove that $\A'$ is an HD-NCW with $L(\A')=\cs(L(\A))$. 
Consider a word $w\in L(\A')$, and let $r=q_0,q_1,q_2,\ldots$ be an accepting run of $\A'$ on $w$. There exists a prefix $x\in \Sigma^*$  of $w$ such that $r$ reaches some $q\notin \alpha'$ when reading $x$. Hence $L(\A^q)=\Sigma^\omega$, and so $x\cdot \Sigma^\omega\subseteq L(\A)$. That is, $x$ is a good prefix and $w\in \cs(L(\A))$. Thus, $L(\A')\subseteq \cs(L(\A))$. 

In order to see that $\A'$ is HD and that $\cs(L(\A))\subseteq L(\A')$, consider an HD strategy $f$ of $\A$. We claim that $f$ is also an HD strategy for $\A'$, and for all words $w\in \cs(L(\A))$, the run $r$ that $f$ generates on $w$ eventually visits only states $q\notin \alpha'$. Since $\cs(L(\A))\subseteq L(\A)$, we know that $\inf(r)\cap \alpha=\emptyset$. It is left to prove that $r$ visits only finitely many states $q\in Q$ with $L(\A^q)\neq\Sigma^\omega$. 

Since $w\in \cs(L(\A))$, it has a good prefix $x\in \Sigma^*$. Since $f$ is an HD strategy and $x$ is a good prefix, it must be that $L(\A^{f(x)})=\Sigma^\omega$. Also, since all the extensions of $x$ are also good prefixes, the above holds also for all states that $f$ visits after $f(x)$. Hence, the run $r$ visits states $q$ with $L(\A^q)\neq \Sigma^\omega$ only during the finite prefix $x$. Thus, $\inf(r)\cap \alpha'=\emptyset$, and we are done.

So, $\A'$ is an HD-NCW with $L(\A')=\cs(L(\A))$. Since $\cs(L(\A))$ is co-safe, it is DWW-recognizable \cite{Sis94}. By \cite{BKS17}, HD-NCWs whose language is DWW-realizable are DBP. Let $\delta'$ be the restriction of $\delta$ to a deterministic transition function such that $\D'=\zug{\Sigma,Q,q_0,\delta',\alpha'}$ is a DCW with $L(\D')=L(\A')=\cs(L(\A))$. Consider now the DCW $\D=\zug{\Sigma,Q,q_0,\delta',\alpha}$ that is obtained from $\D'$ by replacing $\alpha'$ with $\alpha$.

 It is clear that $\D$ is a pruning of $\A$. Note that, $\alpha\subseteq \alpha'$, and hence $\cs(L(\A))=L(\D')\subseteq L(\D)$. That is, $\D$ is a pruning of $\A$ that approximates $L(\A)$ up to a negligible set, and $\A$ is almost-DBP. 
\end{proof}

\section{Nondeterminism and Reasoning about MDPs}

A probabilistic open system can be modeled by a \emph{Markov decision process} (MDP, for short). As we define formally below, an MDP describes the behavior of the system when it interacts with its environment. The behavior is probabilistic: each input provided by the environment induces a probability on the reaction of the system. Reasoning about the behavior of an MDP can proceed by taking its product with a deterministic automaton that describes a desired behavior for the MDP. 
Indeed, this product contains information on the probability that the interaction results in a behavior accepted by the automaton. When the automaton is nondeterministic, this information is lost, and thus reasoning about MDPs involves determinization. An automaton is said to be {\em good-for-MDPs} (GFM), if its product with MDPs maintains the probability of acceptance, and can therefore replace deterministic automata when reasoning about stochastic behaviors \cite{HPSSTWW20,STZ22}. In Section~\ref{sec prob}, we related semantic determinism with almost-DBPness. 
In this section we relate semantic determinism with GFMness. 
The terminology used in this section is based on \cite{HPSSTWW20}.

\subsection{Good-for-MDP automata}

We first define MDPs and good-for-MDP automata. 
A $\Sigma$-labeled \emph{Markov decision process} (MDP) is $\M =\zug{S,s_0,(A_s)_{s\in S},\Pr_{\M},\Sigma,\tau}$, where $S$ is a finite set of states, $s_0\in S$ is an initial state, and for every $s\in S$, the set $A_s$ is a finite set of actions that are available in $s$. Let $A=\bigcup_{s\in S}A_s$ be the set of all actions. Then, $\Pr_\M:S\times A\times S\nrightarrow [0,1]$ is a (partial) stochastic transition function: for every two states $s,s'\in S$ and action $c\in A_s$, we have that $\Pr_\M(s,c,s')$ is the probability of moving from $s$ to $s'$ when action $c$ is taken. Accordingly, for every $s\in S$ and $c\in A_s$, we have $\Sigma_{s'\in S}\Pr_\M(s,c,s') = 1$. Finally, $\Sigma$ is a finite alphabet, and $\tau:S\rightarrow \Sigma$ is a labeling function on the states.

Consider a nondeterministic automaton $\N$ with alphabet $\Sigma$, and a $\Sigma$-labeled MDP $\M$, and consider the following game between a player and nature. The positions of the game are pairs $\zug{s,q}$, where $s\in S$ is a state of $\M$, and $q\in Q$ is a state of $\N$. The game starts by the player picking an action $c\in A_{s_0}$ available from the initial state of $\M$, and the game moves to $\zug{s,q_0}$ with probability $\Pr_\M(s_0,c,s)$. Then, at each round from position $\zug{s,q}$, the player chooses an action $c\in A_s$ and a state $q'\in \delta(q,\tau(s))$, and then nature chooses a state $s'\in S$ with probability $\Pr_\M(s,c,s')$, and the game moves to $\zug{s',q'}$. 
The objective of the player is to on the one hand maximize the probability of generating a word in $L(\N)$ and on the other hand, to generate a rejecting run on words in $L(\N)$ with zero probability.
When the automaton $\N$ is deterministic, the player has to only pick actions, as the state $q'\in \delta(q,\tau(s))$ is unique. When the automaton is nondeterministic, the player may fail to generate an accepting run of $\N$ even when the actions he provides induce a word in $L(\N)$. Indeed, each nondeterministic choice may lead to accepting only a subset of the words in $L(\N)$. 
Intuitively, $\N$ is good-for-MDP, if for every MDP $\M$, the nondeterminism in $\N$ does not prevent the player from maximizing the probability of generating a word in $L(\N)$ while generating a rejecting run on words in $L(\N)$ with zero probability. We now formalize this intuition.

A {\em run\/} of $\M$ is an infinite sequence of states $r=\zug{s_0,s_1,s_2,\ldots}\in S^\omega$, such that for all $i\geq 0$ there exists $c\in A_{s_i}$ with $\Pr_\M(s_i,c,s_{i+1}) > 0$. A finite run is a prefix of such a sequence. Let $\Omega(\M)\subseteq S^\omega$ denote the set of runs of $\M$ from the initial state $s_0$, and let $\Pi(\M)\subseteq S^*$ denote the set of finite such runs. 
For a run $r\in \Omega(\M)$, we define the corresponding labeled run as $\tau(r) = \tau(s_1),\tau(s_2),\tau(s_3),\ldots\in \Sigma^\omega$. Note that the label of the first state $s_0$ of a run is not part of $\tau(r)$. 

Consider the set $A$ of actions. Let $\dist(A)$ be the set of distributions over $A$. A distribution $d\in \dist(A)$ is a \emph{point distribution} if there exists $c\in A$ such that $d(c)=1$. A \emph{strategy} for an MDP $\M$ is a function $\mu : \Pi(\M)\rightarrow \dist(A)$, which suggests to the player a distribution on the available actions given the history of the game so far. The strategy should suggest an available action, thus $\mu(s_0,s_1,s_2,\ldots,s_k)(c)>0$ implies that $c\in A_{s_k}$. Let $\Pi(\M,\mu)$ denote the subset of $\Pi(\M)$ that includes exactly all finite runs that agree with the strategy $\mu$. Formally, we have that $\zug{s_0}\in \Pi(\M,\mu)$, and for all $p'=\zug{s_0,s_1,\ldots,s_k}\in \Pi(\M,\mu)$, and $s_{k+1}\in S$, we have that $p=p'\cdot s_{k+1}\in\Pi(\M,\mu)$ iff there exists $c_{k+1}\in A_{s_k}$ such that $\mu(p')(c_{k+1})>0$ and $\Pr_{\M}(s_{k},c_{k+1},s_{k+1})>0$. Thus, a run in $\Pi(\M,\mu)$ can only be extended using actions that have a positive probability according to $\mu$.
Let $\Omega(\M,\mu)$ denote the subset of $\Omega(\M)$ that includes exactly all runs $r$ such that for every $p\in \Pi(\M)$ that is a prefix of $r$, it holds that $p\in \Pi(\M,\mu)$. Let $\F(\M)$ be the set of all strategies.

We say that a strategy $\mu$ is \emph{pure} if for all $p \in \Pi(\M)$, the distribution $\mu(p)$ is a point distribution, and is \emph{mixed} otherwise.
We say that $\mu$ has \emph{finite-memory}, if there exists a finite set $M$ of memories, with a distinguished initial memory $m_0\in M$, and a memory update function $\eta:S\times M\to M$ such that $\mu$ only depends on the current memory and MDP state. Formally, $\eta$ is extended to finite sequences in $S^*$ as follows: $\eta^*: S^* \to M$ is such that for every finite sequence $p\in S^*$ and MDP state $s\in S$, we have that $\eta^*(\epsilon) = m_0$ and $\eta^*(p\cdot s) = \eta(s,\eta^*(p))$. 
Then, for all $s\in S$ and $p\cdot s,p'\cdot s\in \Pi(\M,\mu)$, if $\eta^*(p\cdot s)=\eta^*(p'\cdot s)$ then $\mu(p\cdot s)=\mu(p'\cdot s)$. Equivalently, $\mu$ can be thought as a function $\mu:S\times M\rightarrow \dist(A)\times M$, that for a given the current MDP state $s$ and memory $m$, outputs a distribution on the available actions from $s$ and the next memory state to move to from $m$. 

Indeed, we show that every finite memory strategy $\mu$ with memory set $M$ can be represented as a strategy of the form $\mu':S\times M \rightarrow \dist(A)\times M$. We say that $\zug{s,m}\in S\times M$ is \emph{feasible} by $\mu$ if there is a finite run $p_{m,s}$ such that $p_{m,s}\cdot s\in \Pi(\M,\mu)$ and $\eta^*(p_{m,s}\cdot s)=m$. In other words, there is a positive probability that when playing according to $\mu$ we reach the memory state $m$ when being in state $s$ in the MDP. We can now define $\mu': S\times M\nrightarrow \dist(A)\times M$ as a partial function on all feasible $\zug{s, m}\in S\times M$ by $\mu'(s,m)=\zug{\mu(p_{m,s}\cdot s),\eta(s,m)}$.  Observe that $\mu(p_{m,s}\cdot s)$ is independent of the choice of witness $p_{m,s}$. That is, for all $p\cdot s\in \Pi(\M,\mu)$ with $\eta^*(p\cdot s)=m$ we have that $\mu(p\cdot s)=\mu(p_{m,s}\cdot s)$. Thus, if $\mu'(s,m)=\zug{d,m'}$, then $\mu(p\cdot s)=d$ for all $p\cdot s\in \Pi(\M,\mu)$ such that $\eta^*(p\cdot s)=m$.

We say that $\mu$ is \emph{memoryless} if it has a finite memory $M$ with $|M|=1$. Equivalently, if for all $p,p'\in \Pi(\M,\mu)$ and $s\in S$ such that $p\cdot s,p'\cdot s\in \Pi(\M,\mu)$, it holds that $\mu(p\cdot s)=\mu(p'\cdot s)$. Namely, if $\mu$ only depends on the current state, and not on the history beforehand.

A \emph{Markov Chain} (MC, for short) is an MDP with a dummy set of actions, $A_s=\set{\ast}$ for all $s\in S$. Formally, an MC is $\C=\zug{S,s_0,\Pr_\C,\Sigma,\tau}$, where $S$ is the set of states, $s_0\in S$ is the initial state, $\Pr_\C:S\times S\rightarrow[0,1]$ is a stochastic transition function: for every $s\in S$ it holds that $\sum_{s'\in S}\Pr_\C(s,s')=1$. A \emph{random process} of $\C$, is an infinite sequence of random variables $S_0,S_1,S_2,\ldots$ with values in $S$, such that $S_0=s_0$ with probability 1, and for all $s_1,s_2,\ldots,s_{i+1}\in S$, $i\geq 0$, it holds that $\P_\C(S_{i+1}=s_{i+1}|S_0 = s_0,\ldots,S_i=s_i)=\Pr_\C(s_i,s_{i+1})$.  

Given a strategy $\mu$, we can obtain from $\M$ and $\mu$ the MC $\C(\M,\mu) = \zug{\Pi(\M,\mu),s_0,\Pr_\M^\mu}$ in which the choice of actions is resolved according to $\mu$. Formally, for all $p\in \Pi(\M,\mu)$ and $s_1,s_2\in S$ such that both $p_1=p\cdot s_1$ and $p_2=p_1\cdot s_2$ belong to $\Pi(\M,\mu)$, then $\Pr_\M^\mu(p_1,p_2) = \sum_{c\in A_{s_1}}\mu(p_1)(c)\cdot \Pr_\M(s_1,c,s_2)$. 
It is clear that when $\mu$ is memoryless, that is, $\mu$ only depends on the last position in $p_1$, then ${\Pr}_\M^\mu$ only depends on $s_1$ and $s_2$, and hence the projection of the chain $\C(\M,\mu)$ onto the last position in a path is a finite MC. Formally, let $\iota:\Pi(\M,\mu)\rightarrow S$ be the function that maps a finite path to its last state and let $d_s\in \dist(\A)$ be the unique distribution such that $\mu(p)=d_s$ for all $p\in \Pi(\M,\mu)$ with $\iota(p)=s$. Note that if no such $p\in \Pi(\M,\mu)$ with $\iota(p)=s$ exists, then we arbitrarily take $d_s$ to be some distribution in $A_s$. 
Let ${\Pr'}_{\M}^\mu:S\times S\rightarrow [0,1]$ be the stochastic transition function that is defined by ${\Pr'}_\M^\mu(s,s')=\sum_{c\in A_{s}}d_s(c)\cdot \Pr_\M(s,c,s')$. Then, if $p_0,p_1,p_2,\ldots \in (\Pi(\M,\mu))^\omega$ is an infinite sequence of increasing paths that are sampled according to $\C(\M,\mu)$, then it is not hard to see that the sequence $\iota(p_0),\iota(p_1),\iota(p_2),\ldots \in S^\omega$ satisfies $\P(\iota(p_{i+1})=s'|\iota(p_i)=s)={\Pr'}_\M^\mu(s,s')$ for all $s',s\in S$. In other words, the sequence $\iota(p_0),\iota(p_1),\iota(p_2),\ldots$ is sampled according to the finite markov chain $\zug{S,s_0,{\Pr'}_\M^\mu}$.

Let $\M=\zug{S,s_0,(A_s)_{s\in S},\Pr_\M,\Sigma,\tau}$ be an MDP and $\N=\zug{\Sigma,Q,q_0,\delta,\alpha}$ be a nondeterministic automaton. When the player takes actions in $\M$ the infinite word $\tau(r)\in \Sigma^\omega$ is generated. Assume that, simultaneously, the player also generates a run of $\N$. This process can be modeled by an MDP that is a product of $\M$ with $\N$, and in each step the player takes an action of $\M$ and chooses a successor state in $\N$ that extends the current run on the new letter drawn by $\M$. Formally, we define the MDP $\M\times \N=\zug{S',\zug{s_0,\bot},A',\Pr_{\M\times \N},\Sigma,\gamma}$ that is defined as follows. The set of states of $\M\times \N$ is $S'=(S\times Q)\cup \set{\zug{s_0,\bot}}$, for $\bot\notin Q$. The set of actions $A'=A_{}\zug{s_0,\bot}\cup \bigcup_{\zug{s,q}\in S\times Q}(A_{\zug{s,q}})$ is defined as follows. For all $\zug{s,q}\in S\times Q$, the set of actions $A_{\zug{s,q}}$ available from $\zug{s,q}$ consists of all pairs $\zug{c,q'}\in A_s\times \delta(q,\tau(s))$. In addition, the actions available from $\zug{s_0,\bot}$ are $A_{\zug{s_0,\bot}}=A_{s_0}\times \set{q_0}$. Then, for all $\zug{s,\ast}\in S'$ and $\zug{c,q'}\in A_{\zug{s,\ast}}$, we have $\Pr_{\M\times \N}(\zug{s,\ast},\zug{c,q'},\zug{s',q'})=\Pr_{\M}(s,c,s')$ and $\gamma(s,\ast)=\tau(s)$.
Given a finite or infinite run $r\in \Pi(\M\times \N)\cup \Omega(\M\times \N)$, $r=\zug{\zug{s_0,\bot},\zug{s_1,q_0},\zug{s_2,q_1},\ldots}$, we define $\pi_\N(r)=\zug{q_0,q_1,q_2,\ldots}$, and $\pi_\M(r)=\zug{s_0,s_1,s_2,\ldots}$ to be the $\N$-run and $\M$-run that correspond to $r$ respectively. Conversely, for $r_\M=\zug{s_0,s_1,s_2,\ldots}\in S^*\cup S^\omega$ and $r_\N=\zug{q_0,q_1,q_2,\ldots}\in (Q_\N)^*\cup (Q_\N)^\omega$ 
of appropriate lengths, let $r_\M\oplus r_\N=\zug{\zug{s_0,\bot},\zug{s_1,q_0},\zug{s_2,q_1},\ldots}$. Notice that for all $r_\M\oplus r_\N\in \Pi(\M\times \N)\cup \Omega(\M\times \N)$, we have $\pi_\N(r_\M\oplus r_\N)=r_\N$ and $\pi_\M(r_\M\oplus r_\N)=r_\M$.

We can now formally define the notion an automaton being \emph{good-for-MDP}. We use the formalism of \cite{HPSSTWW20}.
The \emph{semantic satisfaction probability\/} of a strategy $\mu\in \F(\M)$ with respect to $\N$, denoted $\mathrm{PSem}^\N_\M(\mu)$, is defined as:
\[
\mathrm{PSem}^\N_\M(\mu)={\Pr}_\M^\mu(r\in \Omega(\M,\mu) : \tau(r)\in L(\N)).
\]
That is, $\mathrm{PSem}^\N_\M(\mu)$ is the probability that when the player plays with $\mu$, the word $\tau(r)$ generated by $\M$ is in $L(\N)$. 
Then, the \emph{semantic satisfaction probability} of an MDP $\M$ with respect to $\N$, denoted $\mathrm{PSem}^\N_\M$, is defined by
\[
\mathrm{PSem}^\N_\M=\sup_{\mu\in \F(\M)}\mathrm{PSem}^\N_\M(\mu).
\]
That is, $\mathrm{PSem}^\N_\M$ is the least upper bound on how good the player can do with respect to the objective of generating a word in $L(\N)$. 

Similarly, the \emph{syntactic satisfaction probability} of $\mu\in \F(\M\times \N)$ and $\M$, denoted $\mathrm{PSyn}^\N_\M(\mu)$ and $\mathrm{PSyn}^\N_{\M}$ respectively, are defined by
\[
\mathrm{PSyn}^\N_\M(\mu)={\Pr}_{\M\times \N}^\mu(r\in \Omega(\M\times \N,\mu) : \pi_\N(r) \mbox{ is accepting}),
\]
and
\[
\mathrm{PSyn}^\N_{\M}=\sup_{\mu\in \F(\M\times \N)}\mathrm{PSyn}^\N_\M(\mu).
\]
Maximizing the syntactic satisfaction can be expressed as a simple stochastic parity game that is played on top of the MDP $\M\times \N$, where the winning objective is inherited from $\N$. In ~\cite{CH03}, the authors prove that simple stochastic parity games admit optimal pure memoryless strategies. Thus, the supremum in the definition of $\mathrm{PSyn}^\N_{\M}$ is attained by a pure memoryless strategy. Note that $\mathrm{PSyn}^\N_{\M}$ is evaluated with respect to the MDP $\M\times \N$, and thus by pure memoryless we mean that there is a strategy $\mu^*_\N:S\times Q\cup\set{\zug{s_0,\bot}}\rightarrow A\times Q$ such that $\mathrm{PSyn}^\N_{\M}(\mu^*_\N)=\mathrm{PSyn}^\N_{\M}$.

Similarly, the semantic satisfaction can be expressed as a simple stochastic parity game that augments $\M\times\N$ with a deterministic automaton $\D$ for $L(\N)$. Thus, the supremum in the definition of $\mathrm{PSem}^\N_{\M}$ is attained by a pure finite-memory strategy. Indeed, the considerations are as in the syntactic case, except that here, the current state of the deterministic automaton $\D$ is used as a memory for the player.

In general, $\mathrm{PSyn}^\N_\M \leq \mathrm{PSem}^\N_\M$. In the syntactic case, the player not only aims to generate a word in $L(\N)$, but also to simultaneously generate an accepting run of $\N$ on it. Therefore, any strategy for the syntactic case can be used for the semantic case by ignoring the component of $\N$, resulting in a potentially higher semantic satisfaction. Equality, however, is not guaranteed. Indeed, the same way not all automata are history deterministic, it may not be possible to resolve the nondeterminism in $\N$ in a way that maximizes the probability of generating a word in $L(\N)$ (thus reaching a maximal semantic satisfaction) while generating rejecting runs on words in $L(\N)$ with zero probability (and thus reaching an equal syntactic satisfaction). When the equality holds for all MDP $\M$, we say that $\N$ is \emph{good-for-MDP} (GFM for short). Note that when $\N$ is GFM, it holds that for all MDP $\M=\zug{S,s_0,(A_s)_{s\in S},\Pr_{\M},\Sigma,\tau}$, there exists a pure-memroyless strategy for $\M\times \N$, $\mu^*_\N:S\times Q\cup\set{\zug{s_0,\bot}}\rightarrow A\times Q$, such that $\mathrm{PSyn}^\N_{\M}=\mathrm{PSyn}^\N_{\M}(\mu^*_\N)$ and $\mu^*_\N$ generates a rejecting run on words in $L(\N)$ with zero probability. Note that $\mu^*_\N$ can be viewed as a pure finite-memory strategy for $\M$ for which $\mathrm{PSem}^\N_{\M}=\mathrm{PSem}^\N_{\M}(\mu^*_\N)$.

\subsection{The nondeterminism hierarchy and GFMness}

In this section we locate GFMness in the nondeterminism hierarchy. For the semantic hierarchy, it is proved in~\cite{HPSSTWW20} that GFM-NBWs are as expressive as NBWs. More specifically, every NBW can be translated into an equivalent GFM-NBW. Unlike the case of co-B\"uchi automata, where nondeterministic and deterministic automata have the same expressive power, making the expressiveness equivalence of GFM-NCWs and NCW straightforward,  NBWs are more expressive than DBWs, making the result about GFM-NBWs interesting. Another aspect that has been studied is deciding GFMness~\cite{STZ22}. In ~\cite{STZ22},  the authors prove an exponential-time upper bound as well as  a polynomial-space lower  bound for the problem of deciding whether an automaton is GFM-NBW. 

We study the syntactic hierarchy, we first point out that every HD automaton is GFM~\cite{KMBK14}. 
Then, by relating GFMness and almost-DBPness, we show that SD-NBWs are GFM, yet SD-NCWs need not be GFM. Thus, with respect to the syntactic hierarchy, GFM-NBWs are between SD and nondeterministic B\"uchi automata, while GFM-NCWs are incomparable to SD-NCWs.
 
\begin{thmC}[\cite{KMBK14}]
\label{hd implies gfm}
	Every HD automaton is GFM.
\end{thmC}

\begin{proof}
	The theorem is proved in~\cite{KMBK14}. We give a short proof for completeness.
Consider an HD automaton $\N$. Let $f$ be an HD strategy for $\N$. Then, every strategy $\mu\in\F(\M)$ can be augmented with $f$ to obtain a strategy $\mu\times f\in \F(\M\times \N)$ such that $\mathrm{PSem}^\N_{\M}(\mu)=\mathrm{PSem}^\N_{\M\times \N}(\mu\times f)=\mathrm{PSyn}^\N_\M(\mu\times f)$. Thus, $\mathrm{PSem}^\N_\M\leq \mathrm{PSyn}^\N_\M$, and so $\N$ is GFM.
\end{proof}

\begin{thm}
	\label{gfm-nbw almost}
	Every GFM automaton is almost-DBP.
\end{thm}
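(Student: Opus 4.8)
The plan is to exploit the specific probability space defined in the paper, where words are drawn by choosing letters i.i.d.\ uniformly from $\Sigma$, and to build a particular MDP (in fact, a Markov chain) whose runs realize exactly this distribution on words. Given a GFM automaton $\A$, I would take $\M$ to be the $\Sigma$-labeled Markov chain with one state $s_\sigma$ per letter $\sigma \in \Sigma$, a dummy action set, and uniform transition probabilities $\Pr_\M(s,\ast,s_\sigma) = \tfrac{1}{|\Sigma|}$ for all $s$, with $\tau(s_\sigma) = \sigma$. Then a run of $\M$ is just a uniformly random infinite word $w \in \Sigma^\omega$, so the labeled run $\tau(r)$ has exactly the distribution of the probability space $(\Sigma^\omega,\P)$. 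Since the action sets are trivial, every strategy $\mu$ is forced, and $\mathrm{PSem}^\A_\M = \mathrm{PSem}^\A_\M(\mu) = \P(L(\A))$.

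Next I would unpack what GFMness gives on this $\M$: $\mathrm{PSyn}^\A_\M = \mathrm{PSem}^\A_\M = \P(L(\A))$, and this syntactic value is attained by a pure memoryless strategy $\mu^*$ in $\M \times \A$ (by the cited result of \cite{CH12}). A pure memoryless strategy in $\M \times \A$ picks, for each reachable product state $\zug{s_\sigma, q}$, a single $\A$-successor $q' \in \delta(q, \sigma)$ — and since the $\M$-component is deterministic-in-distribution, this choice depends only on $q$ and $\sigma$. Hence $\mu^*$ induces a deterministic pruning $\A'$ of $\A$: from state $q$ on letter $\sigma$, keep exactly the transition chosen by $\mu^*$ (and for product states not reachable under $\mu^*$, prune to any single successor arbitrarily — this does not affect the measure-one behavior, since with probability $1$ the random walk stays within the reachable part).

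I would then identify $\mathrm{PSyn}^\A_\M(\mu^*)$ with $\P(L(\A'))$: under $\mu^*$, the product run $r$ has $\pi_\M(r)$ a uniform random word $w$ and $\pi_\A(r)$ the unique run of the deterministic pruning $\A'$ on $w$; so $\mathrm{PSyn}^\A_\M(\mu^*) = \Pr(\pi_\A(r) \text{ accepting}) = \P(L(\A'))$. Combining, $\P(L(\A')) = \mathrm{PSyn}^\A_\M = \P(L(\A))$, hence $\P(L(\A) \setminus L(\A')) \le \P(L(\A)) - \P(L(\A') \cap L(\A))$; and since $L(\A') \subseteq L(\A)$ (pruning can only lose words), $L(\A') \cap L(\A) = L(\A')$, so $\P(L(\A)\setminus L(\A')) = \P(L(\A)) - \P(L(\A')) = 0$. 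Thus $\A'$ is a deterministic pruning of $\A$ witnessing that $\A$ is almost-DBP. One small point to double-check: if $\A$ is a weak or co-\buchi automaton rather than \buchi, the same argument works verbatim since $\A'$ inherits the acceptance condition of $\A$ and "accepting run of $\A'$" is the standard notion; the only place acceptance type matters is in the phrase "$\pi_\A(r)$ is accepting," which is condition-agnostic.

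The main obstacle I anticipate is bookkeeping around reachability and the handling of product states that $\mu^*$ never visits: one must argue that extending $\mu^*$ to an honest deterministic pruning of the whole automaton $\A$ (defined on all states and letters) does not change the accepting probability, which follows because a random walk on the pruned automaton reaches, with probability $1$, only states reachable from $q_0$ under $\mu^*$. A secondary subtlety is making precise the identification of the MDP's run distribution with $\P$ and the identification of $\pi_\A(r)$ under a pure memoryless $\mu^*$ with the run of a genuine deterministic automaton on the sampled word — both are routine but need the MDP/product definitions from the previous section to be invoked carefully. Everything else is a short chain of (in)equalities using $L(\A') \subseteq L(\A)$ and the attainment of $\mathrm{PSyn}$ by a pure memoryless strategy.
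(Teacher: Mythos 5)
Your proposal is correct and follows essentially the same route as the paper: the same single-letter-per-state uniform Markov chain $\M$ with dummy actions, the same appeal to attainment of $\mathrm{PSyn}^\A_\M$ by a pure memoryless strategy from \cite{CH12}, the same identification of that strategy with a deterministic pruning $\A'$ satisfying $\P(L(\A'))=\P(L(\A))$, and the same conclusion via $L(\A')\subseteq L(\A)$. The extra care you take about unreachable product states is a harmless refinement of the same argument.
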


\begin{proof}
	Consider an $\omega$-regular automaton $\N$ over an alphabet $\Sigma$, and consider the MDP $\M=\zug{\Sigma,\sigma_0,(A_\sigma)_{\sigma\in\Sigma},\Pr_{\M},\mathsf{id}}$, where $\sigma_0\in \Sigma$ is an arbitrary initial state, $A_\sigma=\set{\ast}$ is a dummy set of actions for each letter $\sigma\in \Sigma$, and $\Pr_{\M}(\sigma,\ast,\sigma')=|\Sigma|^{-1}$, for all $\sigma,\sigma'\in\Sigma$. By \cite{CJH03}, there is a pure memoryless strategy $\mu^*:S\times Q_\N\cup\set{\zug{s_0,\bot}}\rightarrow A\times Q_\N$ that attains $\mathrm{PSyn}^\N_\M$. Observe that since $\Pr_{\M}(\sigma,\ast,\cdot)=(\sigma'\mapsto |\Sigma|^{-1})$ is the uniform distribution on $\Sigma$ for all $\sigma\in \Sigma$, then $\M$ is essentially an MC that generates a random word $w=\sigma_1,\sigma_2,\sigma_3,\ldots\in \Sigma^\omega$, such that all its letters are independent and distributed uniformly. Thus, the membership $w\in L(\N)$, where $w$ is the word generated by $\C(\M,\mu')$ (or by $\C(\M\times \N,\mu')$) for $\mu'\in \F(\M)$ (respectively $\mu'\in \F(\M\times \N)$), is independent of the strategy $\mu'$. In particular $\mathrm{PSem}^\N_\M$ equals the probability that a random word $w$ is in $L(\N)$. 
	
	Note that it must be that $\mu^*(\sigma_0,\bot)=\zug{\ast,q_0}$; i.e., there is no choice at the initial state of $\M\times \N$. Also, as $A_\sigma=\set{\ast}$, for all $\sigma\in \Sigma$, we can think of $\mu^*$ as a function $\mu^*:\Sigma\times Q_\N\rightarrow Q_\N$, with $\mu^*(\sigma,q)\in \delta_\N(q,\sigma)$ for all $\zug{\sigma,q}\in \Sigma\times Q_\N$. Thus, $\mu^*$ induces a pruning $\N_{\mu^*}$ of $\N$, and $\mathrm{PSyn}^\N_\M(\mu^*)$ equals to the probability that a random word is accepted by $\N_{\mu^*}$. 
	
	Now, since $\mu^*$ is an optimal strategy for $\M\times \N$ and $\N$ is GFM, it follows that $\mathrm{PSyn}^\N_\M(\mu^*)=\mathrm{PSyn}^\N_\M=\mathrm{PSem}^\N_\M$. Hence, the probability that a random word $w\in \Sigma^\omega$ is accepted by $\N_{\mu^*}$ equals the probability that $w\in L(\N)$. That is, $\N_{\mu^*}$ is a deterministic pruning of $\N$ that loses only a set of words of measure zero and hence $\N$ is almost-DBP.
\end{proof}

Recall that in Theorem~\ref{sdncw not almost}, we argued that there is an SD-NCW that is not almost-DBP. By Theorem~\ref{gfm-nbw almost}, we can conclude that the SD-NCW described there, is also not GFM. 
Thus, while HDness implies GFMness for all acceptance conditions~\cite{KMBK14}, for co-\buchi automata, SDness need not imply GFMness. 
In the rest of this section we prove that for \buchi automata, SDness does imply GFMness, thus all SD-NBWs are GFM. In particular, Theorem~\ref{gfm-nbw almost} suggests an alternative proof of Theorem~\ref{sd-nbw almost}, which states that all SD-NBWs are almost-DBP.

\begin{prop}\label{sd unif str}
	If $\N$ is an SD-NBW and $\mu\in \F(\M)$ is a finite memory strategy, then there exists $\mu'\in \F(\M\times \N)$ such that $\mathrm{PSem}^\N_\M(\mu)=\mathrm{PSyn}^\N_\M(\mu')$.
\end{prop}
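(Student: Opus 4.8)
The plan is to take a finite-memory strategy $\mu$ for $\M$ and augment it with a run-construction rule for $\A$ that resolves nondeterminism by following, at each step, *all* $\sigma$-successors in parallel — i.e., by tracking the subset of states of $\A$ reachable under the determinization construction $\A'$ of Theorem~\ref{sem hie}. Since $\A$ is SD, that construction is a correct DBW for $L(\A)$, and crucially every reachable subset consists of $\sim_\A$-equivalent states, so the ``language status'' of the tracked set is well defined. Concretely, I would let $\mu'$ play the $\M$-component exactly as $\mu$ does, and let its $\A$-component deterministically follow $\A'$ using an extra finite memory: the current subset $S\in 2^Q$. Whenever $S$ contains an $\alpha$-state of $\A$ (equivalently, whenever $\A'$ sees its acceptance set), $\mu'$ resets by picking a single representative state from $S\cap\alpha$ — this is the ``transition-based reset'' variant of \cite{KS15} mentioned in the footnote, and it is what actually lets $\mu'$ generate a genuine run of $\A$ rather than a subset. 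So $\mu'$ is a finite-memory strategy for $\M\times\A$ whose memory is (memory of $\mu$) $\times$ (current subset).

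Next I would identify the two probabilities. On one hand, $\mathrm{PSem}^\A_\M(\mu)$ is the probability, over the random $\M$-run $r$ consistent with $\mu$, that $\tau(r)\in L(\A)$; by Theorem~\ref{sem hie} this equals the probability that the deterministic run of $\A'$ on $\tau(r)$ is accepting. On the other hand, $\mathrm{PSyn}^\A_\M(\mu')$ is the probability that the $\A$-run $\pi_\A(r')$ generated by $\mu'$ visits $\alpha$ infinitely often. So the proposition reduces to the claim: for almost every word $w=\tau(r)$, the $\A$-run that $\mu'$ builds is accepting iff the $\A'$-run on $w$ is accepting. The $\A'$-run visits $\alpha'$ (a subset-state contained in $\alpha$) infinitely often iff it resets infinitely often, and by construction $\mu'$ resets exactly when $\A'$ does; so if $\A'$ accepts $w$, then $\mu'$'s run passes through $\alpha$-states of $\A$ infinitely often and is accepting. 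For the converse direction — if $\A'$ rejects $w$, then $\mu'$'s run should be rejecting — I would use that once $\A'$ stops resetting, the tracked subset $S$ is $\alpha$-free and stays so forever, and $\mu'$'s run is trapped inside $\bigcup S$; since all states in $S$ are $\sim_\A$-equivalent and (if $w\notin L(\A)$) reject $w$'s suffix, the run is rejecting. The slightly delicate point is the event $w\in L(\A)$ but $\A'$ rejects $w$: by correctness of $\A'$ this event is empty, not just null, so there is nothing to handle there — the ``almost'' in the statement is only needed to pass from $\mathrm{PSem}$, a language-measure quantity, to the run-measure quantity, which is routine since $\M$ with $\mu$ fixed is a finite Markov chain and the word it emits has a well-defined distribution.

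The main obstacle I anticipate is purely bookkeeping: being careful that the reset choice (picking one representative from $S\cap\alpha$) is consistent with $\delta$ — i.e., that $\mu'$ really is a legal strategy of $\M\times\A$, generating an honest run of $\A$ at every step rather than a ``subset run'' — and that the correspondence between ``$\A'$ resets infinitely often on $w$'' and ``$\mu'$'s $\A$-run is accepting'' holds for the specific word distribution induced by $\mu$ on $\M$ (not the uniform distribution on $\Sigma^\omega$). Since $\mu$ has finite memory, $\C(\M,\mu)$ is a finite Markov chain, so almost surely the run settles into an ergodic component and the above dichotomy applies per component; summing over components gives the equality of the two probabilities. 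I expect no real difficulty beyond making these measure-theoretic identifications precise, and I would relegate the detailed verification that every reachable subset is non-empty, $\sim_\A$-homogeneous, and either $\alpha\subseteq S$-free or $\bar\alpha$-free to a reference to the proof of Theorem~\ref{sem hie} and Appendix~\ref{app css}.
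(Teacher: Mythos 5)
Your proposal has a genuine gap, and it sits exactly at the step you dismiss as ``purely bookkeeping.'' The reset move --- ``whenever $S$ contains an $\alpha$-state, pick a single representative from $S\cap\alpha$'' --- is in general not consistent with $\delta$. The run of $\A$ that $\mu'$ must produce is constrained to move from its current state $q_i$ only into $\delta(q_i,\sigma)$, whereas the new subset $S_{i+1}=\delta(S_i,\sigma)\cap\alpha$ collects $\alpha$-successors of \emph{all} states of $S_i$; it can easily happen that $\delta(q_i,\sigma)\cap\alpha=\emptyset$ even though $\delta(S_i,\sigma)\cap\alpha\neq\emptyset$, in which case there is no legal transition from $q_i$ into $S_{i+1}$ at all and your strategy is simply not a strategy of $\M\times\A$. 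This is not a technicality one can patch by more careful bookkeeping: if a finite-memory, past-dependent rule could always turn the accepting run of the subset automaton $\A'$ into an accepting run of $\A$, then every SD-NBW would be HD, contradicting the strictness HD-NXW $\prec$ SD-NXW established in the paper (the passage from $L(\A')\subseteq L(\A)$ in Appendix~\ref{app css} genuinely needs K\"onig's Lemma on the whole infinite run and cannot be done online). Note also that the failure occurs precisely at the reset steps, i.e.\ exactly where your argument needs the $\A$-run to visit $\alpha$, so the ``$\A'$ resets infinitely often iff $\mu'$'s run is accepting'' equivalence collapses.

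The paper's proof avoids the subset construction entirely and instead has $\mu'$ resolve the nondeterminism of $\A$ \emph{uniformly at random}: the pair (run of $\C(\M,\mu)$, random run of $\A$) is a finite Markov chain on $M\times S\times Q_\A$, which almost surely enters an ergodic component and visits all of its states. If that component contains an $\alpha$-state the run is accepting; if not, SDness is invoked to show that the generated word is surely not in $L(\A)$ --- any accepting continuation of the current finite run would, by positivity of the random transition probabilities, place an $\alpha$-state inside the ergodic component. The randomization is what substitutes for the impossible deterministic reset: it guarantees that every legal finite continuation has positive probability, which is all the ergodic argument needs. If you want to keep your overall architecture, you should replace the subset-tracking component of $\mu'$ by this uniformly random resolution (or by a mixed strategy with full support on $\delta(q,\sigma)$) and redo the accepting/rejecting dichotomy at the level of ergodic components of the product chain rather than at the level of runs of $\A'$.
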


\begin{proof}
	Let $\mu:S\times M\rightarrow \dist(A)\times M$, be a finite memory strategy in $\F(\M)$, and let $\C=\C(\M,\mu)$ be the Markov chain induced when taking actions in $\M$ according to $\mu$. Note that since $\mu$ uses $M$ as a memory structure, we may think of $\C$ as a Markov chain with positions $S\times M$, and a stochastic transition function $\Pr_\C:(S\times M)^2\rightarrow[0,1]$ that is defined by $\Pr_\C(\zug{s,m},\zug{s',m'})=\sum_{c\in A_s}d(c) \cdot \Pr_{\M}(s,c,s')$, when $\mu(s,m)=\zug{d,m'}$, and $\Pr_\C(\zug{s,m},\zug{s',m'})=0$, otherwise. Let $r=\zug{\zug{s_0,m_0},\zug{s_1,m_1},\zug{s_2,m_2},\ldots} \in (S\times M)^\omega$ be a random process sampled according to $\C$. 
	Let $r_\N=\zug{q_0,q_1,q_2,\ldots} \in (Q_\N)^\omega$ be a random run of $\N$ over the word $\tau(r)$ that is sampled when playing in $\M$ with the strategy $\mu$. I.e., given $\zug{s_0,m_0},\zug{s_1,m_1},\ldots,\zug{s_k,m_k}$, for $k\geq 1$, and $q_0,q_1,\ldots,q_{k-1}$, we set $q_k$ to be one of the $\tau(s_k)$-successors of $q_{k-1}$ at random. Notice that the composition $r\oplus r_\N= \zug{\zug{s_0,m_0,q_0},\zug{s_1,m_1,q_1},\ldots}$ is a process of a finite Markov chain $\C_\N$ with a set of positions $S\times M\times Q_\N$. Indeed, the stochastic transition function of $\C_\N$ can be expressed as follows:
	\begin{equation*}
		\Pr_{\C_\N}(\zug{s,m,q},\zug{s',m',q'})=
		\begin{cases}
			|\delta_\N(q,\tau(s'))|^{-1}\cdot \Pr_\C(\zug{s,m},\zug{s',m'})&:q'\in \delta_\N(q,\tau(s'))\\
			0&:\mbox{otherwise}
		\end{cases}
	\end{equation*}  

	Since $\C_\N$ is a finite Markov chain, it holds that with probability $1$, the random process $r\oplus r_\N$ of $\C_\N$ eventually enters an ergodic component and traverses all its states. We say that an ergodic component $C\subseteq S\times M\times Q_\N$ of $\C_\N$ is \emph{$\alpha_\N$-accepting} if the projection of $C$ onto $Q_\N$ intersects $\alpha_\N$. I.e., $C$ is $\alpha_\N$-accepting if there exist $s\in S$, $m\in M$ and $q\in \alpha_\N$ such that $\zug{s, m, q}\in C$. Thus, if $r\oplus r_\N$ eventually enters an $\alpha_\N$-accepting ergodic component $C$, then with probability 1 the process traverses all the positions in $C$ infinitely often, and in particular with probability 1 $r_\N$ is an accepting run of $\N$ over the generated word $\tau(r)$. Since with probability 1 the process of $\C_\N$ eventually enters an ergodic component, it follows that the probability that $\tau(r)\in L(\N)$ but $r_\N$ is rejecting equals the probability that $r\oplus r_\N$ enters an ergodic component that is not $\alpha_\N$-accepting and $\tau(r)\in L(\N)$. We prove that if $r\oplus r_\N$ enters an ergodic component $C$ that is not $\alpha_\N$-accepting then $\tau(r)\notin L(\N)$. Note that we claim that the implication $C$ not $\alpha_\N$-accepting implies that $\tau(r)\notin L(\N)$, is surely, and not only almost-surely. 
	Consider an ergodic component $C$ that is not $\alpha_\N$-accepting, and assume that $r\oplus r_\N$ is such that $\zug{s_k,m_k,q_k}\in C$ for some $k\geq 0$. Assume by way of contradiction that $\tau(r)\in L(\N)$. Since $\N$ is SD, it is possible to extend the finite run $q_0,q_1,\ldots q_k$ of $\N$ over $\tau(s_1),\tau(s_2),\ldots,\tau(s_k)$ into an accepting run of $\N$ over $\tau(r)$. In particular there exist $j>k$ and $q'_{k+1},q'_{k+2},\ldots,q'_j\in (Q_\N)^*$, such that $q'_j\in \alpha_\N$ and $q_0,\ldots,q_k,q'_{k+1},\ldots,q'_j$ is a finite run of $\N$ over $\tau(s_1),\ldots,\tau(s_j)$. By definition of the stochastic transition function of $\C_\N$, it follows that there is a positive probability to move to $\zug{s_{k+1},m_{k+1},q'_{k+1}}$ from $\zug{s_k,m_k,q_k}$. Similarly, it follows by induction that for all $k+1< i\leq j$ there is a positive probability to move to $\zug{s_i,m_i,q'_i}$ from $\zug{s_{i-1},m_{i-1},q'_{i-1}}$. In particular $\zug{s_j,m_j,q'_j}\in C$, in contradiction to the assumption that $C$ is not $\alpha_\N$-accepting.
	
	Let $\mu'\in \F(\M\times \N)$ be the strategy for $\M\times \N$ that resolves the actions of $\M$ according to $\mu$ and in addition generates a random run of $\N$ over $\tau(r)$. It is not hard to see that $\C_\N$ induces the same distribution over infinite paths in $(S\times Q_\N)^\omega$ as the Markov chain $\C(\M\times\N,\mu')$. We proved that there is a zero probability that $\tau(r)\in L(\N)$ and $r_\N$ is rejecting. Thus, $\mathrm{PSyn}^\N_\M(\mu')=\mathrm{PSem}^\N_\M(\mu)$. 
\end{proof}

We can now conclude with the desired connection between SDness and GFMness. 
We first examine whether SDness implies GFMness, and show that the answer depends on the acceptance condition.

\begin{thm}\label{sd-nbw gfm}
	All SD-NBWs are GFM, yet there is an SD-NCW that is not GFM. 
\end{thm}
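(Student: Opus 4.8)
The plan is to split the statement into its two halves and dispatch each using results already in hand. For the first half --- all SD-NBWs are GFM --- I would derive it almost immediately from Proposition~\ref{sd unif str}, after invoking the fact that $\mathrm{PSem}^\A_\M$ is attained by a pure finite-memory strategy. For the second half --- existence of an SD-NCW that is not GFM --- I would combine Theorem~\ref{sdncw not almost} (there is an SD-NCW that is not almost-DBP) with Theorem~\ref{gfm-nbw almost} (every GFM automaton is almost-DBP), getting the desired automaton by contraposition.

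For the first half, I would fix an SD-NBW $\A$ and an arbitrary MDP $\M$. By the cited result of~\cite{CH12}, the supremum defining $\mathrm{PSem}^\A_\M$ is attained by a pure finite-memory strategy, so I may choose such a $\mu\in\F(\M)$ with $\mathrm{PSem}^\A_\M(\mu)=\mathrm{PSem}^\A_\M$. Since $\mu$ has finite memory, Proposition~\ref{sd unif str} supplies a strategy $\mu'\in\F(\M\times\A)$ with $\mathrm{PSyn}^\A_\M(\mu')=\mathrm{PSem}^\A_\M(\mu)=\mathrm{PSem}^\A_\M$. Hence $\mathrm{PSyn}^\A_\M\ge\mathrm{PSyn}^\A_\M(\mu')=\mathrm{PSem}^\A_\M$, and since the reverse inequality $\mathrm{PSyn}^\A_\M\le\mathrm{PSem}^\A_\M$ always holds, we obtain $\mathrm{PSyn}^\A_\M=\mathrm{PSem}^\A_\M$. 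As $\M$ was arbitrary, $\A$ is GFM.

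For the second half, I would take $\A_2$ to be the SD-NCW of Theorem~\ref{sdncw not almost}, which is not almost-DBP. If $\A_2$ were GFM, then Theorem~\ref{gfm-nbw almost} would force it to be almost-DBP, a contradiction; hence $\A_2$ is an SD-NCW that is not GFM.

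I do not expect a genuine obstacle here: all the analytic content is packaged inside Proposition~\ref{sd unif str} --- whose crux is that in an SD automaton any finite run over a word in $L(\A)$ can be extended to an accepting run, so no ergodic component of the product Markov chain reachable along a word in $L(\A)$ can be non-$\alpha$-accepting --- and inside Theorems~\ref{sdncw not almost} and~\ref{gfm-nbw almost}. The one point needing a little care is the appeal to attainment of $\mathrm{PSem}^\A_\M$ by a finite-memory strategy, which is precisely the hypothesis that makes Proposition~\ref{sd unif str} applicable; I would state it explicitly rather than leave it implicit.
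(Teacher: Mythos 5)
Your proposal is correct and follows exactly the paper's own argument: the B\"uchi half invokes the attainment of $\mathrm{PSem}^\A_\M$ by a finite-memory strategy \cite{CH12} and then applies Proposition~\ref{sd unif str}, while the co-B\"uchi half combines Theorems~\ref{sdncw not almost} and~\ref{gfm-nbw almost} by contraposition. Nothing to add.
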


\begin{proof}
The co-\buchi case follows from Theorems~\ref{sdncw not almost} and~\ref{gfm-nbw almost}. For \buchi automata, consider an SD-NBW $\N=\zug{\Sigma,Q_\N,q_0,\delta_\N,\alpha_\N}$, and a $\Sigma$-labeled MDP $\M$. Let $\mu$ be a finite memory strategy such that $\mathrm{PSem}^\N_\M(\mu)=\mathrm{PSem}^\N_\M$ given by \cite{CJH03}. Then by Proposition~\ref{sd unif str} it holds that there exists $\mu'\in \F(\M\times \N)$ such that $\mathrm{PSyn}^\N_\M(\mu')=\mathrm{PSem}^\N_\M(\mu)$, and we get $\mathrm{PSem}^\N_\M=\mathrm{PSyn}^\N_\M(\mu')\leq \mathrm{PSyn}^\N_\M$. Since $\mathrm{PSyn}^\N_\M\leq \mathrm{PSem}^\N_\M$ holds for all $\N$ and $\M$, we have the equality $\mathrm{PSyn}^\N_\M= \mathrm{PSem}^\N_\M$.
\end{proof}

We continue and check whether GFMness implies SDness, and show that the answer is negative even for weak automata.

\begin{thm}
	There is a GFM-NWW that is not SD.
\end{thm}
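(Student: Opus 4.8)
The plan is to reduce the statement to a familiar slack in the hierarchy: a \emph{deterministic-by-pruning} NWW need not be semantically deterministic. Recall from the preliminaries that every nondeterministic DBP automaton is HD, and that by \autoref{hd implies gfm} every HD automaton is GFM; hence every DBP-NWW is already GFM. Since semantic determinism is a purely syntactic property of the transition relation, it is cheap to violate it by a redundant transition while keeping the automaton DBP, so the witness can be taken small and concrete.

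Concretely, I would take $\Sigma=\{a,b\}$ and the co-safe language $L=\{w\in\Sigma^\omega : w \text{ contains at least one } a\}$. Define the NWW $\A=\zug{\Sigma,\{q_0,q_{acc}\},q_0,\delta,\{q_{acc}\}}$, where $q_{acc}$ is an accepting sink (self-loops on $a$ and $b$), $\delta(q_0,b)=\{q_0\}$, and $\delta(q_0,a)=\{q_0,q_{acc}\}$. (If the smallest possible example is preferred, one may instead take $\Sigma=\{a\}$, an accepting sink $q_{acc}$, a rejecting sink $q_{rej}$, $\delta(q_{acc},a)=\{q_{acc},q_{rej}\}$, and $\delta(q_{rej},a)=\{q_{rej}\}$, with initial state $q_{acc}$, so that $L(\A)=\Sigma^\omega$.)

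The verification then splits into three short steps. First, $\A$ is a total weak automaton: its SCCs are $\{q_0\}$, which is disjoint from $\alpha$, and $\{q_{acc}\}\subseteq\alpha$, so no SCC contains both $\alpha$- and $\bar\alpha$-states. Second, $\A$ is DBP: deleting the transition $\zug{q_0,a,q_0}$ yields the deterministic automaton that loops on $b$ in $q_0$ and moves to $q_{acc}$ on $a$, which recognizes $L=L(\A)$; hence $\A$ is HD, and therefore GFM by \autoref{hd implies gfm}. Third, $\A$ is not SD: the state $q_0$ has $a$-successors $q_0$ and $q_{acc}$, but $L(\A^{q_0})=L\neq\Sigma^\omega=L(\A^{q_{acc}})$, so these two successors are not $\sim_\A$-equivalent. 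Combining the three items produces a GFM-NWW that is not SD.

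There is no real obstacle here; the only points that need a line of argument are that the pruned automaton is genuinely deterministic and language-equivalent to $\A$, that $\A$ meets the strong weakness requirement (each SCC entirely inside or entirely outside $\alpha$), and that the two distinguished successors of $q_0$ have different languages --- which is immediate, since one is an accepting sink and the other is not. Should one want a stronger witness, e.g.\ a GFM-NWW that is neither SD nor DBP, one would instead start from an HD (hence GFM) NWW that is not DBP and attach a redundant transition into a non-equivalent successor; but this refinement is not needed for the theorem as stated.
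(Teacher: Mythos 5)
Your proof is correct, but it takes a genuinely different and more elementary route than the paper. You exhibit a DBP-NWW that fails to be SD and let GFMness come for free along the chain DBP $\Rightarrow$ HD $\Rightarrow$ GFM (\autoref{hd implies gfm}); all three verification steps you list (weakness, language-preserving pruning, inequivalent $a$-successors of $q_0$) go through, so your two-state automaton is a valid witness for the theorem as stated. The paper instead uses the NWW for $\{w : w$ has finitely many $a$'s$\}$, whose language is not DBW-recognizable and which therefore cannot be HD; GFMness must then be established directly, via an ergodic-component analysis of the product Markov chain. What your approach buys is brevity and a minimal witness; what the paper's buys is a strictly stronger separation: a GFM-NWW that is not SD \emph{and not HD}, showing that GFMness is not merely inherited from history determinism and that the non-SDness cannot be pruned away. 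One caveat about your closing remark: the refinement you sketch there --- starting from an HD-NWW that is not DBP --- is unavailable, since the paper records (citing \cite{BKS17}) that every HD-NWW is DBP; so to obtain a witness outside the HD class for weak automata one really does need a direct GFMness argument of the kind the paper gives.
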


\begin{proof}
		Consider the NBW $\N$ described in Figure~\ref{GfmNotSD}. Note that $\N$ is weak, as all cycles are self-loops. Also, $\N$ is not SD, as $a\cdot b^\omega \in L( \N^{q_0}) \setminus L(\N^{q_{acc}})$. 
		Also, it is easy to see that $L(\N) = \{w: \text{$w$ has finitely many $a$'s}\}$. Indeed, $\N$ moves to the state $q_{acc}$ when it guesses that there are no more $a$'s in the suffix to be read.
		
		We show that $\N$ is GFM. 
		Consider an MDP $\M$, and let $\mu: S\times M \to dist(A) \times M$ be a finite memory strategy such that $\mathrm{PSem}^\N_\M(\mu) = \mathrm{PSem}^\N_{\M}$.  Consider the MC $\C=\C(\M,\mu)$, and  note that since $\mu$ uses $M$ as a memory structure, we may think of $\C$ as a finite MC with positions $S\times M$, and a stochastic transition function $\Pr_\C:(S\times M)^2\rightarrow[0,1]$ that is defined $\Pr_\C(\zug{s,m},\zug{s',m'})=\sum_{c\in A_s}d(c) \cdot \Pr_{\M}(s,c,s')$ when $\mu(s,m)=\zug{d,m'}$, and otherwise $\Pr_\C(\zug{s,m},\zug{s',m'})=0$. 
		
		To conclude the GFMness of $\N$, we show that there is a strategy $\mu'\in \F(\M\times \N)$ such that $\mathrm{PSem}^\N_\M(\mu) = \mathrm{PSyn}^\N_\M(\mu')$. 	
		Recall that a strategy for $\M\times \N$ needs to take actions in $\M$ and in $\N$ simultaneously.  In the first it needs to choose an action $c\in A_s$ where $s\in S$ is the current state in $\M$, while in the latter it should choose a $\tau(s)$-successor of the current state $q$ in $\N$. The strategy $\mu'$ resolves the first according to $\mu$, and hence ignores the $\N$ component. Then, for the $\N$ component, $\mu'$ generates a run $r_\N$ of $\N$ on $w_\C$ as follows. The run $r_\N$ waits at $q_0$ (possibly forever), and moves to $q_{acc}$ only when an ergodic SCC that does not contain an $a$-labeled state is reached in $\C$; in particular, if the ergodic SCC that is reached in $\C$ contains an $a$-labeled state, then $r_\N$ does not leave $q_0$. In other words, the strategy $\mu'$ moves from $q_0$ to $q_{acc}$ only when it is certain that there are no more $a$'s in the suffix to be read.

		As the MC $\C$ is finite, then with probability 1 an ergodic component of $\C$ is reached and all of its states are traversed infinitely often. Hence, with probability 1, we have one of the following outcomes. Either the ergodic component that is reached does not contain an $a$-labeled state, or it contains an $a$-labeled state. In the former case,  the run $r_\N$ is accepting. Indeed, according to $\mu'$, the run $r_\N$ moves to $q_{acc}$ once the ergodic component that has no $a$-labeled states is reached, and thus $r_\N$ eventually gets stuck at $q_{acc}$ while reading a suffix of $w_\C$ that has no $a$'s. 
		In the latter case, $r_\N$ is stuck in $q_0$ and is thus rejecting, and with probability 1 $w_\C$ has infinitely many $a$'s. Hence, in the latter case there is a zero probability that $w_\C \in L(\N)$ and $r_\N$ is rejecting, and we are done.
		\begin{figure}[ht]
			\centering
			\includegraphics[height=2cm]{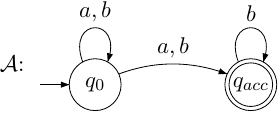}
			\caption{A GFM-NWW that is not SD. Missing transitions lead to a rejecting sink.} 
			\label{GfmNotSD}
		\end{figure}
\end{proof}

\bibliographystyle{alphaurl} 
\bibliography{ok} 

\appendix

\section{Determinization of SD-NBWs}\label{app css}

In this section we prove formally that the DBW $\A'$ is equivalent to the SD-NBW $\A$, and that if $\A$ is weak, then so is $\A'$. The following two propositions follow immediately from the definitions: 

\begin{prop}\label{pruned-css}
	Consider states $q\in Q$ and $S\in Q'$, a letter $\sigma\in \Sigma$, and transitions $\zug{q, \sigma, q'}$ and $\zug{S, \sigma, S'}$ of $\A$ and $\A'$, respectively. If $q$ is $\A$-equivalent to the states in $S$, then $q'$ is $\A$-equivalent to the states in $S'$.
\end{prop}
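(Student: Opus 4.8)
The plan is to derive the claim directly from semantic determinism of $\A$, essentially by unwinding the definition of $\delta'$. First I would record the ``cross-state'' consequence of SDness that is stated in the main text just before Proposition~\ref{pruned-css}: if $p \sim_\A p'$ and $\zug{p,\sigma,t},\zug{p',\sigma,t'} \in \Delta$, then $t \sim_\A t'$. This holds because the set of words $w$ with $\sigma \cdot w \in L(\A^p)$ equals $\bigcup_{t \in \delta(p,\sigma)} L(\A^t)$, and since $\A$ is SD all states in $\delta(p,\sigma)$ have the same language, so this union is simply $L(\A^t)$ for any single $\sigma$-successor $t$ of $p$; the same computation applies to $p'$, and from $L(\A^p)=L(\A^{p'})$ we conclude $L(\A^t)=L(\A^{t'})$, i.e.\ $t \sim_\A t'$.

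Next I would show that every state of $S'$ is $\A$-equivalent to $q'$. Observe that regardless of which branch of the definition of $\delta'$ is taken, we have $S' = \delta'(S,\sigma) \subseteq \delta(S,\sigma) = \bigcup_{p \in S}\delta(p,\sigma)$ (and $S'$ is nonempty whenever $S$ is, since $\A$ is total; the case $S=\emptyset$ is vacuous). Hence any state $t' \in S'$ lies in $\delta(p,\sigma)$ for some $p \in S$, that is, $\zug{p,\sigma,t'} \in \Delta$. By hypothesis $p \sim_\A q$, so applying the cross-state property to the transitions $\zug{p,\sigma,t'}$ and $\zug{q,\sigma,q'}$ yields $t' \sim_\A q'$. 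Since $t' \in S'$ was arbitrary, all states of $S'$ are $\A$-equivalent to $q'$, which is exactly the assertion of the proposition.

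There is essentially no obstacle here — as the paper announces, the statement is a bookkeeping consequence of the definitions. The only point requiring a little care is the first step, namely the passage from the literal SD condition (which constrains the $\sigma$-successors of a \emph{single} state) to the cross-state statement about successors of two equivalent states; this is handled by the residual-language computation above. The case split in the definition of $\delta'$ is harmless, since both branches produce a subset of $\delta(S,\sigma)$, so the argument never needs to distinguish them.
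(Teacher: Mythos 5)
Your proof is correct and follows exactly the route the paper intends: the paper states the proposition "follows immediately from the definitions," relying on the cross-state consequence of SDness recorded in the main text (if $p \sim_\A p'$ and $\zug{p,\sigma,t},\zug{p',\sigma,t'} \in \Delta$ then $t \sim_\A t'$), and your residual-language computation is the standard justification of that observation. Your handling of the case split in $\delta'$ via $S' \subseteq \delta(S,\sigma)$ is also the right (and only needed) point of care.
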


\begin{prop}\label{safe-succ}
	Consider a state $S$ of $\A'$ and a letter $\sigma\in \Sigma$. If $\zug{S, \sigma, S'} \in \Delta'$ and $S'\notin \alpha'$, then all the $\sigma$-successors of a state $s\in S$ are in $S'\setminus \alpha$.
\end{prop}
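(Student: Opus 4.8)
The plan is to carry out a two-case analysis driven entirely by the definition of the determinizing transition function $\delta'$ from Theorem~\ref{sem hie}. Recall that for the transition $\zug{S,\sigma,S'}\in\Delta'$ we have $S'=\delta'(S,\sigma)$, and by construction $S'$ is formed in one of exactly two ways: either $\delta(S,\sigma)\cap\alpha=\emptyset$, in which case $S'=\delta(S,\sigma)$; or $\delta(S,\sigma)\cap\alpha\neq\emptyset$, in which case $S'=\delta(S,\sigma)\cap\alpha$. The idea is to use the hypothesis $S'\notin\alpha'$ to decide which branch was taken.

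First I would eliminate the second branch. If $\delta(S,\sigma)\cap\alpha\neq\emptyset$, then $S'=\delta(S,\sigma)\cap\alpha\subseteq\alpha$, so by the definition $\alpha'=\{S\in 2^Q : S\subseteq\alpha\}$ we get $S'\in\alpha'$, contradicting the assumption $S'\notin\alpha'$. Hence the first branch was taken: $\delta(S,\sigma)\cap\alpha=\emptyset$ and $S'=\delta(S,\sigma)$.

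With this in hand the conclusion is immediate. Fix any $s\in S$ and any $\sigma$-successor $t\in\delta(s,\sigma)$. Since $\delta(S,\sigma)=\bigcup_{s'\in S}\delta(s',\sigma)$, we have $t\in\delta(S,\sigma)=S'$; and since $\delta(S,\sigma)\cap\alpha=\emptyset$, we have $t\notin\alpha$. Therefore $t\in S'\setminus\alpha$, as required. (The boundary case $S=\emptyset$ is harmless: then $S'=\emptyset\in\alpha'$, so the hypothesis fails, and in any event the statement is vacuous.)

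There is essentially no obstacle here; the proposition is a direct unwinding of the definitions of $\delta'$ and $\alpha'$. The only point requiring a little care is the logical direction of the case split: rather than trying to argue about $\sigma$-successors directly, one should first read off from $S'\notin\alpha'$ that the "keep only $\alpha$-states" branch of $\delta'$ was \emph{not} applied, which is what forces $S'$ to contain \emph{all} $\sigma$-successors of every $s\in S$.
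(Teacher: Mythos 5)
Your proof is correct and is exactly the intended argument: the paper states that this proposition ``follows immediately from the definitions,'' and your two-case unwinding of $\delta'$ (ruling out the $\delta(S,\sigma)\cap\alpha\neq\emptyset$ branch via $S'\notin\alpha'$, then reading off the conclusion from $S'=\delta(S,\sigma)$) is precisely that immediate verification.
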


We can now prove the correctness of the construction:

\begin{prop}\label{A and A_D are equiv}
	The automata $\A$ and $\A'$ are equivalent.
\end{prop}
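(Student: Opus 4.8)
The plan is to prove the two inclusions $L(\A') \subseteq L(\A)$ and $L(\A) \subseteq L(\A')$ separately, exploiting the two structural facts already recorded: every reachable state $S$ of $\A'$ consists of $\sim_\A$-equivalent states (Proposition~\ref{pruned-css}, bootstrapped from $q'_0 = \{q_0\}$), and the ``safe successor'' property of Proposition~\ref{safe-succ}. Throughout, fix $w = \sigma_1 \sigma_2 \cdots \in \Sigma^\omega$ and let $\rho = S_0, S_1, S_2, \ldots$ be the unique run of the deterministic automaton $\A'$ on $w$, so $S_0 = \{q_0\}$ and $S_{i+1} = \delta'(S_i, \sigma_{i+1})$.

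\textbf{Direction $L(\A') \subseteq L(\A)$.} Suppose $w \in L(\A')$, so $\rho$ visits $\alpha'$ infinitely often, i.e. there are infinitely many indices $i$ with $S_i \subseteq \alpha$. I would build an accepting run of $\A$ on $w$ by a König's-lemma / run-tree argument: consider the tree of finite runs of $\A$ on prefixes of $w$ that stay inside the sets $S_i$ (this is well-defined since $S_{i+1} = \delta(S_i,\sigma_{i+1})$ whenever $S_{i+1} = \delta'(S_i,\sigma_{i+1})$, and also $S_{i+1} \subseteq \delta(S_i,\sigma_{i+1})$ in the pruned case, so every state of $S_{i+1}$ is reachable by a one-step extension from some state of $S_i$). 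This tree is infinite and finitely branching, hence has an infinite branch $r = r_0, r_1, \ldots$ with $r_i \in S_i$ for all $i$. Whenever $S_i \subseteq \alpha$ we have $r_i \in \alpha$; since this happens infinitely often, $r$ is an accepting run of $\A$ on $w$, so $w \in L(\A)$. (Here I did not even need semantic determinism — only that $\A'$ is a pruning-like refinement of the subset construction.)

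\textbf{Direction $L(\A) \subseteq L(\A')$.} This is where semantic determinism is essential and is the main obstacle. Suppose $w \in L(\A)$ via an accepting run $r = r_0, r_1, \ldots$ of $\A$, with $r_i \in \alpha$ for infinitely many $i$. Since $S_0 = \{q_0\} \ni r_0$, an easy induction (every $\sigma$-successor in $\A$ of a state of $S_i$ lies in $\delta(S_i,\sigma)$, and $\delta'(S_i,\sigma)$ is either all of that set or its restriction to $\alpha$) shows $r_i \in S_i$ for all $i$ — crucially, in the pruning step $\delta'(S_i,\sigma) = \delta(S_i,\sigma)\cap\alpha$, the state $r_{i+1}$ survives precisely because, if $\delta(S_i,\sigma)\cap\alpha \neq \emptyset$ then $\A$ can always be steered into $\alpha$, and by semantic determinism $r_{i+1}$ is $\sim_\A$-equivalent to some $\alpha$-state in $\delta(S_i,\sigma)$; I must argue carefully that $r_{i+1}$ itself may be assumed to be in $\alpha$ at such a step, or else replace $r$ by an equivalent run that is. The clean way: modify $r$ at each pruning step to stay in $\delta'(S_i,\sigma)$, using that all $\sigma$-successors are $\sim_\A$-equivalent so replacing $r_{i+1}$ by any equivalent state still yields a run accepting the same suffix; iterating gives a run $r'$ that lies entirely inside $\rho$, i.e. $r'_i \in S_i$. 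Now I must show $\rho$ is accepting. Suppose not: then from some point on, $S_i \not\subseteq \alpha$, i.e. $S_i \cap \alpha = \emptyset$ for all $i \geq N$ (recall every reachable $S_i$ is either $\subseteq \alpha$ or disjoint from $\alpha$). By Proposition~\ref{safe-succ}, for $i \geq N$ every $\sigma_{i+1}$-successor in $\A$ of every state of $S_i$ lies in $S_{i+1} \setminus \alpha$. But $r'_i \in S_i$ and $r'_{i+1} \in \delta(r'_i,\sigma_{i+1})$, so $r'_{i+1} \in S_{i+1}\setminus\alpha$ for all $i \geq N$; hence $r'$ visits $\alpha$ only finitely often — but $r'$ still accepts $w$ (it is an accepting run of $\A$, having been obtained from the accepting run $r$ by equivalence-preserving local surgery), contradiction. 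Therefore $\rho$ visits $\alpha'$ infinitely often, so $w \in L(\A')$.

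\textbf{Expected difficulty.} The subtle point, and the one I would spend the most care on, is the equivalence-preserving modification of the accepting run $r$ so that it stays inside the deterministic run $\rho$ at the pruning steps: one must verify that replacing a state by a $\sim_\A$-equivalent one at step $i$ does not destroy acceptance of the relevant suffix, and that this can be done coherently for all steps at once (e.g. by an inductive construction of $r'$ together with, at each stage, a witnessing accepting continuation, or by a compactness argument on the tree of ``runs inside $\rho$'' combined with the observation that this tree is nonempty at every level). Once $r'$ is in hand, the use of Proposition~\ref{safe-succ} to derive the contradiction is short. The other inclusion is a routine König's lemma argument.
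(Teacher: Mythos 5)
Your first inclusion, $L(\A')\subseteq L(\A)$, is correct and is essentially the paper's argument (the paper runs K\"onig's lemma on a DAG whose levels are the positions where $\alpha'$ is visited; your tree over all positions works just as well).

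The second inclusion is where you have a genuine gap, and it sits exactly at the point you flagged. The ``equivalence-preserving local surgery'' does not produce a run of $\A$: at a pruning step, where $S_{i+1}=\delta(S_i,\sigma_{i+1})\cap\alpha$, the state $r'_i$ you have already committed to may have \emph{no} $\sigma_{i+1}$-successor in $\alpha$ at all (the nonemptiness of $\delta(S_i,\sigma_{i+1})\cap\alpha$ is witnessed by some other state of $S_i$). Semantic determinism tells you that any state of $S_{i+1}$ is $\sim_\A$-equivalent to the genuine successors of $r'_i$, but it does not make it \emph{a} successor of $r'_i$, so the sequence $r'$ you build violates the transition relation and the later step ``$r'_{i+1}\in\delta(r'_i,\sigma_{i+1})$, hence $r'_{i+1}\in S_{i+1}\setminus\alpha$ by Proposition~\ref{safe-succ}'' is unjustified. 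Neither of your proposed repairs closes this: the inductive invariant ``$r'_i\in S_i$ and $w[i+1,\infty]\in L(\A^{r'_i})$'' cannot be propagated through a pruning step for the reason above, and the compactness argument on the tree of genuine runs lying inside $\rho$ only yields \emph{some} infinite run inside $\rho$, whose acceptance is equivalent to the acceptance of $\rho$ itself --- which is what you are trying to prove.

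The fix is to abandon the attempt to thread a single accepting run through $\rho$ and argue purely at the level of languages, which is what the paper does. Assume for contradiction that $S_j,S_{j+1},\ldots$ is $\alpha'$-free from some position $j$ on. Iterating Proposition~\ref{safe-succ} from any state $s_j\in S_j$ shows that \emph{every} run of $\A^{s_j}$ on $w[j+1,\infty]$ stays inside the subsequent $S_i\setminus\alpha$ and is therefore rejecting, so $w[j+1,\infty]\notin L(\A^{s_j})$. On the other hand, iterating Proposition~\ref{pruned-css} from $q_0\in\{q_0\}=S_0$ gives $r_j\sim_\A s_j$, and the tail of your accepting run $r$ witnesses $w[j+1,\infty]\in L(\A^{r_j})=L(\A^{s_j})$ --- contradiction. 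This uses equivalence only to transfer membership of the suffix in a language, never to splice states into a run, which is precisely the step your surgery cannot perform.
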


\begin{proof}
We first prove that $L(\A') \subseteq L(\A)$. Let $r_{\A'}=S_0, S_1, S_2, \ldots $ be an accepting run of $\A'$ on a  word $w = \sigma_1 \cdot \sigma_2 \cdots$. We construct an accepting run of $\A$ on $w$. Since $r_{\A'}$ is accepting, there are infinitely many positions $j_1,j_2,\ldots$ with $S_{j_i} \in \alpha'$. 

We also define $j_0 = 0$.

\noindent 
	Consider the DAG $G= \zug{V, E}$, where 
	\begin{itemize}
		\item 	 $V\subseteq Q\times \mathbb{N}$ is the union $\bigcup_{i\geq 0}(S_{j_i} \times \{i\})$.
		
		\item $E\subseteq \bigcup_{i\geq 0} (S_{j_i} \times \{i\}) \times (S_{j_{i+1}} \times \{i+1\})$ is such that for all $i\geq 0$, it holds that $E(\zug{s', i}, \zug{s, i+1})$ iff there is a finite run from $s'$ to $s$ over $w[j_i + 1, j_{i+1} ] $. Then, we label this edge by the run from $s'$ to $s$. 
	\end{itemize}
	By the definition of $\A'$, for every $j \geq 0$ and state $s_{j+1}\in S_{j+1}$, there is a state $s_j\in S_j$ such that $\zug{s_j, \sigma_j, s_{j+1}} \in \Delta$. Thus, it follows by induction that for every $i \geq 0$ and state $s_{i+1}  \in S_{j_{i+1}}$, there is a state $s_i \in S_{j_i}$ such that there is a finite run from $s_i$ to $s_{i+1}$ on $w[j_i + 1, j_{i+1} ] $. Thus, the DAG $G$ has infinitely many reachable vertices from the vertex $\zug{q_0, 0}$. Also, as the nondeterminism degree of $\A$ is finite, so is the branching degree of $G$. Thus, by K\"onig's Lemma, $G$ includes an infinite path, and the labels along the edges of this path define a run of $\A$ on $w$. Since for all $i \geq 1$, the state $S_{j_i}$  is in $\alpha'$, and so all the states in $S_{j_i}$ are in $\alpha$, this run is accepting, and we are done.

	For the other direction, assume that $w = \sigma_1 \cdot \sigma_2 \cdots \in L(\A)$, and let $r=r_0, r_1, \ldots $ be an accepting run of $\A$ on $w$.
	
	Let $S_0, S_1, S_2 \ldots $ be the run of $\A'$ on $w$, and assume, by way of contradiction, that there is a position $j\geq 0$ such that $S_j, S_{j+1}, \ldots $ is an $\alpha$-free run on the suffix $w[j+1, \infty]$. Then, an iterative application of Proposition~\ref{safe-succ} implies that all the runs of a state $s_j \in S_j$ on  $w[j+1, \infty]$ are $\alpha$-free in $\A$. Also, an iterative application of Proposition~\ref{pruned-css} implies that $r_j \sim_\A s_j$, and since $r$ is an accepting run of $\A$, it holds that $\A^{s_j}$ has an accepting run on $w[j+1, \infty]$, and we have reached a contradiction.
	
\end{proof}

It is left to prove that weakness of $\A$ is preserved in $\A'$. 

\begin{prop}\label{NWW to DWW}
	If $\A$ is an NWW, then $\A'$ is a DWW.
\end{prop}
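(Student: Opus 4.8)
The plan is to show that no strongly connected component of (the reachable fragment of) $\A'$ contains both a state in $\alpha'$ and a state outside $\alpha'$; since $\A'$ is visibly deterministic, this is exactly what it takes to conclude that $\A'$ is a DWW. First I would record the facts about $\A'$ that are already available: it is deterministic, every reachable state $S$ is nonempty (as $\A$ is total), and every reachable $S$ is \emph{pure}, meaning $S\subseteq\alpha$ or $S\cap\alpha=\emptyset$ (this is the property already observed in the main text, and in any case follows immediately from the definition of $\delta'$). Consequently, for a reachable state $S$ we have $S\in\alpha'$ iff $S\subseteq\alpha$ iff $S\cap\alpha\neq\emptyset$. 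I would then reduce the SCC statement to a statement about closed walks: any two states lying in a common SCC lie on a common closed walk, and a singleton SCC is trivially contained in, or disjoint from, $\alpha'$; so it suffices to prove that every closed walk $S_0\xrightarrow{u_1}S_1\xrightarrow{u_2}\cdots\xrightarrow{u_k}S_k=S_0$ of $\A'$ is \emph{monochromatic}, i.e.\ either all the $S_i$ lie in $\alpha'$ or none does.

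To prove monochromaticity, set $u=u_1\cdots u_k$ and consider the periodic DAG $G$ whose $\ell$-th layer ($\ell\geq 0$) is $S_{\ell\bmod k}\times\{\ell\}$, with an edge from $\zug{s,\ell}$ to $\zug{t,\ell+1}$ exactly when $\zug{s,u_{(\ell\bmod k)+1},t}\in\Delta$. Since $S_{(\ell+1)\bmod k}=\delta'(S_{\ell\bmod k},u_{(\ell\bmod k)+1})\subseteq\delta(S_{\ell\bmod k},u_{(\ell\bmod k)+1})$, every vertex in layer $\ell+1$ has a predecessor in layer $\ell$; as $\A$ has finite nondeterminism degree and all layers are nonempty, K\"onig's Lemma, applied exactly as in the proof of Proposition~\ref{A and A_D are equiv}, yields an infinite path in $G$, which is a run $r=r_0,r_1,r_2,\ldots$ of $\A$ on $u^\omega$ satisfying $r_\ell\in S_{\ell\bmod k}$ for every $\ell$.

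Now I would analyse where this run of $\A$ settles. Since $r$ visits only finitely many distinct states, ${\it inf}(r)$ is a nonempty strongly connected set of $G_\A$, hence contained in some SCC $D$ of $\A$; because $\A$ is weak, either $D\subseteq\alpha$ or $D\cap\alpha=\emptyset$. Choose $N$ with $r_\ell\in D$ for all $\ell\geq N$. If $D\subseteq\alpha$, then $r_\ell\in S_{\ell\bmod k}\cap\alpha$ for all $\ell\geq N$, so by purity $S_{\ell\bmod k}\in\alpha'$ for all $\ell\geq N$; as $\ell\bmod k$ ranges over every residue, all of $S_0,\ldots,S_{k}$ lie in $\alpha'$. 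Symmetrically, if $D\cap\alpha=\emptyset$ then all of $S_0,\ldots,S_k$ lie outside $\alpha'$. In either case the closed walk is monochromatic, so every SCC of the reachable fragment of $\A'$ is contained in or disjoint from $\alpha'$, and $\A'$ is a DWW.

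The main obstacle is not depth but care: the crux is the extraction, via K\"onig's Lemma, of the periodic run of $\A$ following the layers $S_{\ell\bmod k}$, and then reading off from the single SCC $D$ of $\A$ in which that run gets trapped that \emph{all} the $S_i$ on the closed walk share the same $\alpha'$-status — this is precisely where weakness of $\A$ and purity of the reachable states of $\A'$ are combined. One should also make explicit at the outset that the claim concerns the reachable part of $\A'$ (the unreachable members of $2^Q$ are irrelevant and may simply be discarded).
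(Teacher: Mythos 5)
Your proof is correct and follows essentially the same route as the paper's: both extract, via K\"onig's Lemma (as in the equivalence proof), a run of $\A$ threading through the reachable subsets along a cycle of $\A'$, and then combine the purity of reachable states with the weakness of $\A$ to conclude that the cycle cannot mix $\alpha'$- and non-$\alpha'$-states. The only difference is presentational — you argue directly that every closed walk is monochromatic, whereas the paper assumes a run visiting both kinds of states infinitely often and derives a contradiction with weakness.
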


\begin{proof}
	Assume by way of contradiction that there are reachable states $S\in \alpha'$ and $S'\notin \alpha'$, and an infinite run $r_{\A'} = S_0, S_1, S_2, \ldots$ that visits both $S$ and $S'$ infinitely often. 
	Recall that a reachable state in $Q'$ contains only $\alpha$-states of $\A$ or only $\bar{\alpha}$-states of $\A$. Hence, $S'$ contains only $\bar{\alpha}$-states of $\A$. 
	
	As in the proof of Proposition~\ref{A and A_D are equiv}, the run $r_{\A'}$ induces an infinite run $r_\A = s_0, s_1, s_2, \ldots $, where for all positions $j \geq 0$, it holds that $s_j\in S_j$. Since the run $r_{\A'}$ visits $S$ infinitely often, then $r_\A$ visits infinitely many $\alpha$-states. Likewise, since $r_{\A'}$ visits  $S'$ infinitely often, then $r_\A$ also visits infinitely many $\bar{\alpha}$-states. This contradicts the weakness of $\A$, and we are done.
\end{proof}

\section{Details of the Reduction in Theorem~\ref{sd pspace}}
\label{app tm}

We describe the technical details of the construction of $\A$.
Let $T=\zug{\Gamma,Q,\rightarrow,q_0,q_{acc},q_{rej}}$, where $\Gamma$ is the working
alphabet, $Q$ is the set of states, $\rightarrow \subseteq Q \times
\Gamma \times Q \times \Gamma \times \{L,R\}$ is the transition
relation (we use $(q,a) \rightarrow (q',b,\Delta)$ to indicate that
when $T$ is in state $q$ and it reads the input $a$ in the current
tape cell, it moves to state $q'$, writes $b$ in the current tape
cell, and its reading head moves one cell to the left/right, according
to $\Delta$), $q_0$ is the initial state, $q_{acc}$ is the accepting state, and $q_{rej}$ is the rejecting one. 

The transitions function $\rightarrow$ is defined also for the final states $q_{acc}$ and $q_{rej}$: when a computation of $T$ reaches them, it erases the tape, goes to the leftmost cell in the tape, 
and moves to the initial state $q_0$. Recall that $s: \mathbb{N} \rightarrow \mathbb{N}$ is the polynomial space function of $T$. Thus, when $T$ runs on the empty tape, it uses at most $n_0=s(0)$ cells. 

We use strings of the form $\# \gamma_1 \gamma_2 \ldots (q,\gamma_i) \ldots \gamma_{n_0}$ to encode a configuration of $T$ on a word of length at most $n_0$.
That is, a configuration starts with $\#$, and all its other letters
are in $\Gamma$, except for one letter in $Q \times \Gamma$.
The meaning of such a configuration is that the $j$'th cell in $T$, for
$1 \leq j \leq n_0$, is labeled $\gamma_j$, the reading head points
at cell $i$, and $T$ is in state $q$. For example, the initial
configuration of $T$ is
$\# (q_0,b) b \ldots b$ (with $n_0-1$ occurrences of $b$'s) where $b$
stands for an empty cell.
We can now encode a computation of $T$ by a sequence of configurations.

Let $\Sigma=\{\#\} \cup \Gamma \cup (Q \times \Gamma)$ and let
$\# \sigma_1 \ldots \sigma_{n_0} \# \sigma'_1 \ldots \sigma'_{n_0}$
be two successive configurations of $T$. We also set
$\sigma_0$, $\sigma'_0$, and $\sigma_{n_0+1}$ to $\#$.
For each triple $\zug{\sigma_{i-1},\sigma_i,\sigma_{i+1}}$ with $1
\leq i \leq n_0$, we know, by the transition relation of $T$, what
$\sigma'_i$ should be. In addition, the letter $\#$ should repeat
exactly every $n_0+1$ letters.
Let $next(\sigma_{i-1},\sigma_i,\sigma_{i+1})$ denote our
expectation for $\sigma'_i$. That is,
\begin{itemize}
\item $next(\sigma_{i-1},\sigma_i,\sigma_{i+1})=\sigma_i$ if $\sigma_i=\#$, or if non of $\sigma_{i-1},\sigma_i$ and $\sigma_{i+1}$ are in $Q\times\Gamma$.
\item $next(\sigma_{i-1},\sigma_i,\sigma_{i+1})=\#$ if $|\set{\sigma_{i-1},\sigma_i,\sigma_{i+1}}\cap Q\times \Gamma|\geq 2$, or if $|\set{\sigma_{i-1},\sigma_i,\sigma_{i+1}}\cap \set{\#}|\geq 2$~\footnote{These case describes an illegal encoding of a configuration, and there is no correct expectation of the letter that comes in the same position in the next configuration. Thus, the choice $next(\sigma_{i-1},\sigma_i,\sigma_{i+1})=\#$ is arbitrary.}.
\item
$next((q,\gamma_{i-1}),\gamma_i,\gamma_{i+1})=
next((q,\gamma_{i-1}),\gamma_i,\#)=$
\begin{center}
$\left\{
\begin{array}{ll}
\gamma_i & \mbox{If $(q,\gamma_{i-1}) \rightarrow
(q',\gamma'_{i-1},L)$}\\
(q',\gamma_i) & \mbox{If $(q,\gamma_{i-1}) \rightarrow
(q',\gamma'_{i-1},R)$}
\end{array}
\right.$
\end{center}
\item
$next(\gamma_{i-1},\gamma_i,(q,\gamma_{i+1}))=
next(\#,\gamma_i,(q,\gamma_{i+1}))=$
\begin{center}
	$\left\{
	\begin{array}{ll}
		\gamma_i & \mbox{If $(q,\gamma_{i+1}) \rightarrow
			(q',\gamma'_{i+1},R)$}\\
		(q',\gamma_i) & \mbox{If $(q,\gamma_{i+1}) \rightarrow
			(q',\gamma'_{i},L)$}
	\end{array}
	\right.$
\end{center}
\item 
$next(\#,(q,\gamma_{i}),\gamma_{i+1})=(q',\gamma')$ where $(q,\gamma_{i}) \rightarrow (q',\gamma'_{i},L)$
\footnote{
	We assume that the reading head of $T$ stays in place when it is above the leftmost cell and the transition function directs it to move left.}.
\item
$next(\gamma_{i-1},(q,\gamma_{i}),\gamma_{i+1})=
next(\gamma_{i-1},(q,\gamma_{i}),\#)=
\gamma'_i$ where $(q,\gamma_{i}) \rightarrow (q',\gamma'_{i},\Delta)$.
\end{itemize}

\noindent
Consistency with $next$ now gives us a necessary condition for a word
to encode a legal computation that uses $n_0$ tape cells.

In order to accept words that satisfy C1, namely detect a violation of the encoding, we distinguish between two types of violations: a violation of a single encoded configuration, and a violation of $next$. 
In order to detect a violation of $next$, the NWW $\A$ use its
nondeterminism and guesses a triple $\zug{\sigma_{i-1},\sigma_i,\sigma_{i+1}} \in \Sigma^3$ and guesses a position in the word, where it checks whether the three letters
to be read starting this position are $\sigma_{i-1},\sigma_i$, and
$\sigma_{i+1}$, and checks whether
$next(\sigma_{i-1},\sigma_i,\sigma_{i+1})$ is not the
letter to come $n_0+1$ letters later.
Once $\A$ sees such a violation, it goes to an accepting sink. If $next$ is respected, 
or if the guessed triple and position is not successful, then $\A$ returns to its initial state. Also, at any point that $\A$ still waits to guess a position of a triple, it can guess to return back to the initial state.

In order to detect a violation of a single encoded configuration, the NWW $\A$ accepts words that include in them a subword $x$ of length $n_0+1$, which is either $\#$-free, or is of the form $\#\cdot y$, for a finite word $y$ that does not encode a legal configuration that uses $n_0$ cells. Specifically, $y$ is not of the form $\Gamma^i \cdot (Q\times \Gamma) \cdot \Gamma^{n_0-(i+1)}$, for some $0 \leq i \leq n_0-1$.
Indeed, such words either include a long subword with no $\#$, 
or have some block of size $n_0$ that follows a $\#$, that does not encode a legal configuration of $T$ when it uses $n_0$ cells. 
If $\A$ succeeds to detect such a violation, it goes to an accepting sink, otherwise, $\A$ returns to its initial state.

In order to accept words that satisfy C2, namely detect an encoding of an accepting configuration of $T$, the NWW $\A$ guesses a position where it checks if the next $n_0+1$ letters are of the form $\# \cdot x\cdot (q_{acc},\gamma)\cdot y$, for $x,y\in \Gamma^*$ and $\gamma\in \Gamma$. 
If the guess succeeded, then $\A$ goes to the accepting sink, and otherwise it goes back to the initial state. At any point that $\A$ waits to guess a position where the accepting configuration begins, it can guess to return back to the initial state.

\section{Correctness and full details of the reduction in Theorem~\ref{NPC Buchi DBP}}
\label{app np HDnbw}

We first prove that the HD strategy $g$ defined in Proposition~\ref{A phi HD} satisfies two essential properties. Then, in Lemma~\ref{criterion for Lnm winning strategy}, we show that these properties imply that $g$ is a winning HD strategy for $\A_\varphi$.

\begin{lem}\label{Lnm strategy from p takes us to q_0}
	For all $u\in (X\cdot C)^*$ and $v\in R_{n,m}$, if $g(u)=p$, then there is a prefix $y\in (X\cdot C)^*$ of $v$ such that $g(u\cdot y)=q_0$.
\end{lem}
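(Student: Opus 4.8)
The plan is to argue by contradiction. Suppose that no prefix $y\in(X\cdot C)^*$ of $v$ satisfies $g(u\cdot y)=q_0$; I will show that the clause appearing at the tail of $v$ then depends on at most one variable, contradicting the standing assumption that every clause of $\varphi$ depends on at least two distinct variables. By the definition of $R_{n,m}$, write $v=v'\cdot x_1\cdot c_j\cdot x_2\cdot c_j\cdots x_n\cdot c_j$ with $v'\in(X\cdot C)^*$ and $j\in[m]$.

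The first step is a short induction on $|y|$ proving that, under the contradiction hypothesis, $g(u\cdot y)=p$ for every prefix $y\in(X\cdot C)^*$ of $v$. The base case $y=\epsilon$ is the hypothesis $g(u)=p$. For the inductive step, write $y=y''\cdot x_k\cdot c_{j'}$ with $y''\in(X\cdot C)^*$. By the induction hypothesis $g(u\cdot y'')=p$, and since the only $x_k$-successors of $p$ are $q^0_k$ and $q^1_k$, the definition of $g$ gives $g(u\cdot y''\cdot x_k)\in\{q^0_k,q^1_k\}$; the unique $c_{j'}$-successor of any state $q^i_k$ is $q_0$ or $p$, and it is not $q_0$ by the contradiction hypothesis, so $g(u\cdot y)=p$.

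The second step specializes this to the tail block $x_1 c_j x_2 c_j\cdots x_n c_j$. By the first step, $g(u\cdot v')=p$, hence $g(u\cdot v'\cdot x_1)=q^0_1$; since also $g(u\cdot v'\cdot x_1\cdot c_j)=p$, the definition of $\delta_\varphi$ forces $c_j\notin C^0_1$, i.e. $x_1$ does not occur negatively in $c_j$. Now fix $k$ with $2\le k\le n$. The prefix $v'\cdot x_1 c_j\cdots x_{k-1} c_j$ lies in $(X\cdot C)^*$, so the first step gives $g$-value $p$ after $u$ followed by it; reading $x_k$ next, the definition of $g$ says that if $x_k$ occurs in $c_j$ then $g$ moves to $q^i_k$ with $c_j\in C^i_k$, and then the following $c_j$ would move it to $q_0$ — contradicting that the first step also forces the $g$-value after $u\cdot v'\cdot x_1 c_j\cdots x_k c_j$ to be $p$. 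Hence no $x_k$ with $2\le k\le n$ occurs in $c_j$, so $c_j$ contains at most the single literal $x_1$, the desired contradiction.

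The only delicate part is the bookkeeping in the first step: one must check that the word read at each stage lies in the domain $(X\cdot C)^*\cup(X\cdot C)^*\cdot X$ of $g$ and matches the correct clause of the inductive definition of $g$ in Proposition~\ref{A phi HD}, so that in particular a $(X\cdot C)^*$-prefix followed by one letter of $X$ always lands $g$ in some state $q^i_k$ and we may then read off the next transition from $\delta_\varphi$. Everything else is a direct reading of $\delta_\varphi$.
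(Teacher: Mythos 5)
Your proof is correct and is essentially the paper's argument in contrapositive form: the paper directly exhibits the witness prefix $y$ by taking the first $k\ge 2$ with $x_k$ participating in $c_j$ (which exists since clauses depend on at least two variables), while you assume no prefix reaches $q_0$, show the run then stays at $p$ after every $(X\cdot C)^*$-prefix, and conclude that no $x_k$ with $k\ge 2$ occurs in $c_j$. The core mechanism — $g$ sitting at $p$ until a participating variable with index $\ge 2$ is read, whereupon it moves to $q^i_k$ with $c_j\in C^i_k$ and then to $q_0$ — is identical, and your explicit induction also fills in the step the paper leaves implicit (``otherwise, $g(u\cdot z)=p$'').
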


\begin{proof}
	Let $j\in [m]$ and $2\leq k\leq n$ be such that $v$ ends with the word $x_k\cdot c_j\cdot x_{k+1}\cdot c_j\cdots x_n\cdot c_j$, and $k$ is the minimal index that is greater than $1$ for which $x_k$ participates in $c_j$. 
	Since we assume that each of the clauses of $\varphi$ depends on at least two variables, such $k>1$ exists. 
	Let $i\in {\{0,1\}}$ be minimal with $c_j \in C_k^i$, and let $z\in (X\cdot C)^*$ be a prefix of $v$ such that $v=z\cdot x_k\cdot c_j\cdots x_n\cdot c_j$.
	If there is a prefix $y\in (X\cdot C)^*$ of $z$ such that $g(u\cdot y)=q_0$, then we are done. Otherwise, $g(u\cdot z)=p$. By definition of $g$ and the choice of $k$, we know that $g(u\cdot z\cdot x_k)=q^i_k$, where the assignment $x_k=i$ satisfies $c_j$. Thus, if we take $y=z\cdot x_k\cdot c_j$, then $g(u\cdot y)=q_0$, and $y$ is a prefix of $v$.
\end{proof}

\begin{lem}\label{Lnm strategy from q_0 takes us to p}
	For all $u\in (X\cdot C)^*$ and $v\in R_{n,m}$, if $g(u)=q_0$, there is a prefix $z\in (X\cdot C)^*$ of $v$ such that $g(u\cdot z)=p$.
\end{lem}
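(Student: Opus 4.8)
The plan is to exploit that, by construction, $\A_\varphi$ is nondeterministic only at the state $p$: once $g$ is at a state of $Q_{n,m}\setminus\{p\}$ it is forced to follow the (deterministic) transition it inherits from $D_{n,m}$. Hence, starting from $g(u)=q_0$, the run produced by $g$ on $v$ coincides with the run of $D_{n,m}$ on $v$ from its initial state $q_0$, at least until $p$ is first reached; and $D_{n,m}$ reaching its unique accepting state $p$ is exactly the event ``$D_{n,m}$ accepts'', which is guaranteed because $v\in R_{n,m}=L(D_{n,m})$.

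Concretely, I would first record two elementary facts. (i) $R_{n,m}\subseteq(X\cdot C)^*$, since the tail block $x_1\cdot c_j\cdot x_2\cdot c_j\cdots x_n\cdot c_j$ already lies in $(X\cdot C)^n$. (ii) For the DFW $D_{n,m}$, which has initial state $q_0$ and unique accepting state $p\neq q_0$, a finite word $y$ satisfies $\delta_{n,m}(q_0,y)=p$ if and only if $y\in R_{n,m}$.

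Next, assuming $g(u)=q_0$, let $i_0\ge 0$ be the least index with $\delta_{n,m}(q_0,v[1,i_0])=p$. Such an $i_0$ exists and is at most $|v|$, because $v\in R_{n,m}$ gives $\delta_{n,m}(q_0,v)=p$ by (ii); and $i_0\ge 1$ because $q_0\neq p$. I would then prove by induction on $0\le i\le i_0$ that $g(u\cdot v[1,i])=\delta_{n,m}(q_0,v[1,i])$: the base case is the hypothesis $g(u)=q_0$, and in the inductive step, since $i<i_0$, the state $g(u\cdot v[1,i])=\delta_{n,m}(q_0,v[1,i])$ is distinct from $p$, so $g$ is forced to follow the unique successor of that state on the letter $v[i+1]$, which is precisely the $D_{n,m}$-transition. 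Taking $z=v[1,i_0]$ then gives $g(u\cdot z)=\delta_{n,m}(q_0,z)=p$, and $z\in R_{n,m}\subseteq(X\cdot C)^*$ by (ii) and (i), while $z$ is a prefix of $v$ by construction — so $z$ is the required prefix.

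I do not expect a real obstacle; the only point needing care is the bookkeeping that the $g$-run and the $D_{n,m}$-run genuinely coincide on the relevant prefix. Two observations make this routine: first, the induction is stopped at the \emph{first} visit to $p$, so the nontrivial choices $g$ makes from $p$ (those of Proposition~\ref{A phi HD}) are never invoked; second, partiality of $\A_\varphi$ is harmless here, because $v$ being accepted by $D_{n,m}$ means the run of $D_{n,m}$ on $v$ avoids the rejecting sink entirely, so its prefix stays inside the fragment of $\A_\varphi$ copied verbatim from $D_{n,m}$. This is precisely the ``easy'' counterpart of Lemma~\ref{Lnm strategy from p takes us to q_0}, where $g$ genuinely branches out of $p$ and the structure of the tail of $v$ must be used.
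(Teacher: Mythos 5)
Your proof is correct and follows essentially the same route as the paper: the paper's (one-line) argument likewise takes $z$ to be the minimal prefix of $v$ in $R_{n,m}$ and relies on $D_{n,m}$ being a DFW for $R_{n,m}$ with unique accepting state $p$. Your version merely spells out the induction showing that the $g$-run from $q_0$ coincides with the deterministic $D_{n,m}$-run until $p$ is first reached, which the paper leaves implicit.
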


\begin{proof}
	This follows immediately from the fact that $D_{n,m}$ is a DFW that recognizes $R_{n,m}$ and $p$ is the only accepting state of $D_{n,m}$. Thus, we may take $z$ to be the minimal prefix of $v$ that is in $R_{n,m}$.
\end{proof}

Recall that an HD strategy $g:\Sigma^*\rightarrow Q$ has to agree with the transitions of $\A_\varphi$. That is, 
for all $w\in (X\cdot C)^*$, $x_k\in X$, and $c_j\in C$, it holds that $(g(w),x_k,g(w\cdot x_k))$ and $(g(w\cdot x_k),c_j,g(x\cdot x_k\cdot c_j))$ are in $\Delta_\varphi$. In addition, if $g$ satisfies the conditions in Lemmas \ref{Lnm strategy from p takes us to q_0} and \ref{Lnm strategy from q_0 takes us to p}, we say that $g$ {\em supports a $(p,q_0)$-circle}.  

\begin{lem}\label{criterion for Lnm winning strategy}
	If $g:\Sigma^*\rightarrow Q$ is consistent with $\Delta_\varphi$ and supports a $(p,q_0)$-circle, then for all words $w\in L_{n,m}$, the run $g(w)$ is accepting.
\end{lem}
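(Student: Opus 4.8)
The plan is to show that the run $g(w)$ visits the state $q_0$ infinitely often; since $\A_\varphi$ has acceptance set $\{q_0\}$ and $g$ is consistent with $\Delta_\varphi$, that is exactly what it means for $g(w)$ to be an accepting run of $\A_\varphi$ on $w$.

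First I would fix $w\in L_{n,m}$ together with a decomposition $w=v_1\cdot v_2\cdot v_3\cdots$ into words $v_k\in R_{n,m}$, which exists since $L_{n,m}=(R_{n,m})^\omega$. As $R_{n,m}\subseteq (X\cdot C)^*$, every finite prefix $v_1\cdots v_k$ lies in $(X\cdot C)^*$; I would also record the elementary closure identity $(X\cdot C)^*\cdot R_{n,m}=R_{n,m}$, which holds because $R_{n,m}=(X\cdot C)^*\cdot B$ for $B$ the finite set of blocks $x_1\cdot c_j\cdot x_2\cdot c_j\cdots x_n\cdot c_j$, and because $z\in(X\cdot C)^*$ whenever $u\cdot z\in (X\cdot C)^*$ with $u\in(X\cdot C)^*$.

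The core of the proof is a ``ping-pong'' construction of a strictly increasing chain of prefixes $\epsilon=u_0\sqsubset u_1\sqsubset u_2\sqsubset\cdots$ of $w$, all in $(X\cdot C)^*$, with $g(u_0)=p$ (by consistency $g(\epsilon)$ is the initial state $p$ of $\A_\varphi$), and for $\ell\ge 1$, $g(u_\ell)=q_0$ when $\ell$ is odd and $g(u_\ell)=p$ when $\ell$ is even. Given $u_\ell$, pick $k$ with $|v_1\cdots v_k|\ge |u_\ell|$; then $u_\ell$ and $v_1\cdots v_k$ are both prefixes of $w$, so $u_\ell\sqsubseteq v_1\cdots v_k$, and writing $v_1\cdots v_k=u_\ell\cdot z$ we get $z\in(X\cdot C)^*$, hence $z\cdot v_{k+1}\in (X\cdot C)^*\cdot R_{n,m}=R_{n,m}$. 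Since the suffix of $w$ after $u_\ell$ begins with $z\cdot v_{k+1}$, I apply the relevant half of the $(p,q_0)$-circle hypothesis with $v:=z\cdot v_{k+1}\in R_{n,m}$: if $g(u_\ell)=p$, Lemma~\ref{Lnm strategy from p takes us to q_0} gives a prefix $y\in(X\cdot C)^*$ of $z\cdot v_{k+1}$ with $g(u_\ell\cdot y)=q_0$; if $g(u_\ell)=q_0$, Lemma~\ref{Lnm strategy from q_0 takes us to p} gives such a $y$ with $g(u_\ell\cdot y)=p$. In either case set $u_{\ell+1}=u_\ell\cdot y$; this is a prefix of $w$ in $(X\cdot C)^*$, and $y\neq\epsilon$ because $p\neq q_0$, so $|u_{\ell+1}|>|u_\ell|$.

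Since the lengths $|u_\ell|$ strictly increase they are unbounded, so $g(w)$ is in state $q_0$ at each of the infinitely many positions $|u_1|,|u_3|,|u_5|,\dots$, and hence $g(w)$ is accepting. The one point I would be most careful about — and the only real obstacle — is that a prefix $u_\ell$ with $g(u_\ell)\in\{p,q_0\}$ need not end at a block boundary of the chosen decomposition of $w$; this is exactly why the closure identity $(X\cdot C)^*\cdot R_{n,m}=R_{n,m}$, together with the parity remark that the unconsumed tail $z$ of a prefix of blocks is again in $(X\cdot C)^*$, is needed, so that a bona fide element of $R_{n,m}$ is always available to invoke the $(p,q_0)$-circle lemmas.
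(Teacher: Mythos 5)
Your proof is correct and follows essentially the same route as the paper's: you alternately invoke Lemmas~\ref{Lnm strategy from p takes us to q_0} and~\ref{Lnm strategy from q_0 takes us to p} to build an unbounded chain of prefixes at which the run sits at $q_0$, with your closure identity $(X\cdot C)^*\cdot R_{n,m}=R_{n,m}$ playing exactly the role of the paper's observation that every suffix of $w$ lying in $(X\cdot C)^\omega$ is again in $L_{n,m}$ and hence has a prefix in $R_{n,m}$. No gaps.
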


\begin{proof}
	Consider a word $w\in L_{n,m}=(R_{n,m})^\omega$. Observe that if $w'\in (X\cdot C)^\omega$ is a suffix of $w$, then $w'\in L_{n,m}$, and hence has a prefix in $R_{n,m}$. Thus, if $g$ supports a $(p,q_0)$-circle, there exist $y_1,z_1\in (X\cdot C)^*$, such that $y_1\cdot z_1$ is a prefix of $w$, $g(y_1)=q_0$, and $g(y_1\cdot z_1)=p$. Let $w' \in (X\cdot C)^\omega$ be the suffix of $w$ with $w=y_1\cdot z_1\cdot w'$. By the above, $w'\in L_{n,m}$, and we can now apply again the assumption on $g$ to obtain $y_2,z_2\in (X\cdot C)^*$ such that $y_2\cdot z_2$ is a prefix of $w'$, $g(y_1\cdot z_1\cdot y_2)=q_0$, and $g(y_1\cdot z_1\cdot y_2\cdot z_2)=p$. By iteratively applying this argument, we construct $\{y_i,z_i:i\geq 1\}\subseteq (X\cdot C)^*$, such that $w^i=y_1\cdot z_1\cdot y_2\cdot z_2\cdots y_{i-1}\cdot z_{i-1}\cdot y_i$ is a prefix of $w$, and $g(w^i)=q_0$, for all $i\geq 1$. We conclude that $q_0\in \inf (g(w))$, and hence $g(w)$ is accepting.
\end{proof}

It is easy to see that there is a correspondence between assignments to the variables in $X$ and deterministic prunings of $ \A_\varphi $. Indeed, a pruning of $p$ amounts  to choosing, for each $ k\in [n]$, a value $ i_k\in {\{0,1\}} $: the assignment $x_k=i_k $ corresponds to keeping the transition $ \zug{p,x_k,q^{i_k}_k} $ and removing the transition $ \zug{p,x_k,q^{\lnot i_k}_k} $. For an assignment $a:X \rightarrow {\{0,1\}}$, we denote by $\A_\varphi^{a}$ the deterministic pruning of $\A_\varphi$ that is associated with $a$. We prove that $a$ satisfies $\varphi$ iff $\A_\varphi^{a}$ is equivalent to $\A_\varphi$. Thus, the number of deterministic prunings of $\A_\varphi$ that result in a DBW equivalent to $\A_\varphi$, equals to the number of assignments that satisfy $\varphi$. In particular, $\varphi$ is satisfiable iff $\A_\varphi$ is DBP. 

\begin{prop}
For every assignment $a	:X \rightarrow {\{0,1\}}$, we have that $ L(\A^{a}_\varphi)=L(\A_\varphi) $ iff $ \varphi  $ is satisfied by $a$.
\end{prop}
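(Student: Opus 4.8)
The plan is to prove the two directions separately, using the combinatorial structure of $L_{n,m}$ and the characterization of accepting runs of $\A_\varphi$ developed in the preceding propositions. Throughout, fix an assignment $a : X \to \{0,1\}$ and recall that $\A_\varphi^a$ is the deterministic pruning that always moves from $p$ to $q^{a(x_k)}_k$ upon reading $x_k$. Since $\A_\varphi^a$ is a pruning of $\A_\varphi$, we automatically have $L(\A_\varphi^a) \subseteq L(\A_\varphi) = L_{n,m}$, so only the reverse inclusion $L_{n,m} \subseteq L(\A_\varphi^a)$ is at stake, and the claim reduces to: $L_{n,m} \subseteq L(\A_\varphi^a)$ iff $a$ satisfies $\varphi$.

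For the ``if'' direction, suppose $a$ satisfies $\varphi$. I would show that the memoryless strategy induced by $a$ already supports a $(p,q_0)$-circle in the sense of Appendix~\ref{app np HDnbw}, and then invoke Lemma~\ref{criterion for Lnm winning strategy} to conclude that the (unique) run of $\A_\varphi^a$ on every $w \in L_{n,m}$ is accepting. The key sub-step is the analogue of Lemma~\ref{Lnm strategy from p takes us to q_0}: if $\A_\varphi^a$ is in state $p$ and the remaining input has a prefix $v \in R_{n,m}$, then reading $v$ drives the run through $q_0$. Here is where satisfaction of $\varphi$ enters: $v$ ends with a block $x_1\cdot c_j\cdot x_2\cdot c_j \cdots x_n\cdot c_j$ for some clause $c_j$; since $a$ satisfies $c_j$, there is some literal of $c_j$ made true by $a$, i.e.\ some $k \in [n]$ with $c_j \in C^{a(x_k)}_k$, and when the run reaches $p$ just before the $x_k$ in that block (if it has not already passed through $q_0$), it moves to $q^{a(x_k)}_k$ and then, reading $c_j$, to $q_0$. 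The other half of the circle (from $q_0$ one returns to $p$ after reading a word in $R_{n,m}$) is exactly Lemma~\ref{Lnm strategy from q_0 takes us to p} and holds independently of $a$. Thus $q_0 \in \inf$ of the run, so it is accepting.

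For the ``only if'' direction, suppose $a$ does not satisfy $\varphi$, so some clause $c_j$ is falsified by $a$: for every literal of $c_j$, the corresponding assignment disagrees with $a$, i.e.\ for every $k$ such that $x_k$ occurs in $c_j$ we have $c_j \notin C^{a(x_k)}_k$. Consider the word $w = (x_1\cdot c_j\cdot x_2\cdot c_j \cdots x_n\cdot c_j)^\omega$, which lies in $L_{n,m}$. I would trace the unique run of $\A_\varphi^a$ on $w$: from $p$ it reads $x_k$ and goes to $q^{a(x_k)}_k$, then reads $c_j$; since $c_j \notin C^{a(x_k)}_k$ for every $k$ occurring in $c_j$, and for $k$ not occurring in $c_j$ the transition from $q^i_k$ on $c_j$ also goes to $p$ (by the definition of $\delta_\varphi$, since such $c_j \notin C^i_k$), the run returns to $p$ every time. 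Hence the run is $p, q^{a(x_1)}_1, p, q^{a(x_2)}_2, p, \ldots$, which never visits $q_0$, so it is rejecting and $w \notin L(\A_\varphi^a)$. Therefore $L(\A_\varphi^a) \subsetneq L_{n,m} = L(\A_\varphi)$.

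The main obstacle is the bookkeeping in the ``if'' direction: one must argue carefully that the specific $k$ witnessing satisfaction of $c_j$ is actually $\ge 2$, or otherwise handle the $k=1$ case, so that the strategy's distinguished treatment of $x_1$ does not break the circle. This is precisely why the construction assumes every clause depends on at least two variables --- it guarantees a witnessing literal with index $k \ge 2$ --- and I would make this dependence explicit, mirroring the argument already given in the proof of Proposition~\ref{A phi HD} and in Lemma~\ref{Lnm strategy from p takes us to q_0}. Apart from this point, both directions are routine once the $(p,q_0)$-circle machinery is in place.
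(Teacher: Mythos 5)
Your proof is correct and follows essentially the same route as the paper's: the falsified clause $c_j$ yields the rejected witness $(x_1\cdot c_j\cdots x_n\cdot c_j)^\omega$, and a satisfying assignment makes the memoryless strategy induced by $a$ support a $(p,q_0)$-circle, so Lemma~\ref{criterion for Lnm winning strategy} applies. One small correction to your closing remark: the assumption that every clause depends on at least two variables is needed only for the history-dependent strategy of Proposition~\ref{A phi HD}, which blindly commits to $q^0_1$ upon reading $x_1$; the pruning $\A^{a}_\varphi$ treats $x_1$ like every other variable, so a witnessing index $k=1$ causes no difficulty --- indeed the paper's own proof takes $k$ minimal in $[n]$ without excluding $k=1$.
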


\begin{proof}
	Assume first that $\varphi$ is not satisfied by $a$. We prove that $L_{n,m}\neq L(\A^{a}_\varphi)$. Let $j\in [m]$ be such that $c_j$ is not satisfied by $a$. I.e, for all $k\in [n]$ the assignment $x_k=i_k$ does not satisfy $c_j$. Since $q^i_k$ is reachable in $\A_\varphi^{a}$ iff $i=i_k$, and all $c_j$-labeled transitions from $\{q^{i_k}_k : k\in[n]\}$ are to $p$, it follows that the run of $\A_\varphi^{a}$ on $\{x_1\cdot c_j\cdot x_2\cdot c_j\cdots x_n\cdot c_j\}^\omega$ never visits $q_0$, and hence is rejecting. Thus, $(x_1\cdot c_j\cdot x_2\cdot c_j\cdots x_n\cdot c_j)^\omega\in L_{n,m}\setminus L(\A_\varphi^{a})$.
	
	For the other direction, we assume that $a$ satisfies $\varphi$ and prove that $L(\A_\varphi^{a})=L_{n,m}$. Let $g^{a}:\Sigma^*\rightarrow Q$ be the memoryless strategy that correspond to the pruning $\A^{a}_\varphi$. By Lemma~\ref{criterion for Lnm winning strategy}, it is sufficient proving that $g^a$ supports a $(p,q_0)$-circle. Note that every strategy for $L_{n,m}$ satisfies Lemma~\ref{Lnm strategy from q_0 takes us to p}. Indeed, the proof only uses the fact that $D_{n,m}$ is a DFW that recognizes $R_{n,m}$ with a single accepting state $p$. Thus, we only need to prove that $g^a$ satisfies Lemma~\ref{Lnm strategy from p takes us to q_0}. That is, for all $u\in (X\cdot C)^*$ and $v\in R_{n,m}$, if $g^{a}(u)=p$, then there is a prefix $y\in (X\cdot C)^*$ of $v$, such that $g^{a}(u\cdot y)=q_0$. Consider such words $u$ and $v$, and let $j\in [m]$ be such that $c_j$ is the last letter of $v$. 
	
		Let $ k\in [n]$ be the minimal index for which $c_j\in C^{i_k}_k$, 
	and let $z\in (X\cdot C)^*$ be a prefix of $v$ such that $v=z\cdot x_k\cdot c_j\cdot x_{k+1}\cdot c_j\cdots x_n\cdot c_j$. If there exists a prefix $y\in (X\cdot C)^*$ of $ z$ such that $g^{a}(u\cdot y)=q_0$, then we are done. Otherwise, the finite run of $\A_\varphi^{a}$ on $ z $ from $ p $, returns back to $ p $, and hence $g^{a}(u\cdot z)=p$. Now $g^{a}(u\cdot z\cdot x_k)=q^{i_k}_k$, and since $x_k=i_k$ satisfies $c_j$ we have $g^{a}(u\cdot z\cdot x_k\cdot c_j)=q_0$. Thus, we may take $y=z\cdot x_k\cdot c_j$ which is a prefix of $v$, and we are done.
\end{proof}

\end{document}